\numberwithin{equation}{section}
\def\@tocline#1#2#3#4#5#6#7{\relax
  \ifnum #1>\c@tocdepth % then omit
  \else
    \par \addpenalty\@secpenalty\addvspace{#2}%
    \begingroup \hyphenpenalty\@M
    \@ifempty{#4}{%
      \@tempdima\csname r@tocindent\number#1\endcsname\relax
    }{%
      \@tempdima#4\relax
    }%
    \parindent\z@ \leftskip#3\relax \advance\leftskip\@tempdima\relax
    \rightskip\@pnumwidth plus4em \parfillskip-\@pnumwidth
    #5\leavevmode\hskip-\@tempdima
      \ifcase #1
       \or\or \hskip 1em \or \hskip 2em \else \hskip 3em \fi%
      #6\nobreak\relax
      \dotfill
      \hbox to\@pnumwidth{\@tocpagenum{#7}}
    \par
    \nobreak
    \endgroup
  \fi}
\newcommand{\cO}{\mathcal{O}}
\renewcommand{\i}{\mathrm{i}}
\newcommand{\B}{\mathfrak{b}}
\newcommand{\C}{\mathds{C}}
\newcommand{\E}{\mathds{E}}
\renewcommand{\P}{\mathds{P}}
\renewcommand{\d}{\mathrm{d}}
\newcommand{\1}{\mathds{1}}
\newcommand{\cqh}{\mathrm{c}_{\rm qh}}
\newcommand{\Psiqh}{\Psi_{\rm qh}}
\newcommand{\K}{\mathrm{K}}
\renewcommand{\O}{\mathcal{O}}
\newcommand{\M}{\mathbf{M}}
\newcommand{\N}{\mathbb{N}}
\newcommand{\R}{\mathds{R}}
\newcommand{\bR}{\mathbf{R}}
\renewcommand{\S}{\mathfrak{S}}
\newcommand{\w}{\mathbf{w}}
\newcommand{\z}{\mathbf{z}}
\newcommand{\D}{\mathds{D}}
\newtheorem{theorem}{Theorem}[section]
\newtheorem{conjecture}[theorem]{Conjecture}
\newtheorem{heuristic}[theorem]{Heuristic}
\newtheorem{notation}[theorem]{Notation}
\newtheorem{proposition}[theorem]{Proposition}
\newtheorem{corollary}[theorem]{Corollary}
\newtheorem{lemma}[theorem]{Lemma} 
\newtheorem{remark}[theorem]{Remark} 
\newtheorem{properties}[theorem]{Properties}
\newcommand{\tr}{\mathrm{Tr}}
\newcommand{\LLL}{\mathrm{LLL}}
\newcommand{\bx}{\mathbf{x}}
\newcommand{\by}{\mathbf{y}}
\newcommand{\bz}{\mathbf{z}}
\newcommand{\bw}{\mathbf{w}}
\newcommand{\gH}{\mathfrak{H}}
\newcommand{\norm}[1]{\left\lVert #1 \right\rVert}
\newcommand{\ang}{\mathrm{arg}}
\newcommand{\bA}{\mathbf{A}}
\newcommand{\bAt}{\mathbf{A}^{\mathrm{tot}}}
\newcommand{\cA}{\mathcal{A}}
\newcommand{\cL}{\mathcal{L}}
\newcommand{\cV}{\mathcal{V}}
\newcommand{\cZ}{\mathcal{Z}}
\newcommand{\nubf}{\boldsymbol{\nu}}
\newcommand{\cE}{\mathcal{E}}
\newcommand{\cD}{\mathscr{D}}
\newcommand{\rhot}{\varrho^{\mathrm{tot}}}
\newcommand{\curl}{\mathrm{curl}}
\newcommand{\DROPn}{\mathscr{D}_n}
\newcommand{\Phit}{\widetilde{\Phi}}
\newcommand{\Heff}{H^{\rm eff}}
\newcommand{\Eeff}{E^{\rm eff}}
\numberwithin{equation}{section}
\begin{document}

\title{Quantum statistics transmutation via magnetic flux attachment}

\author[G. Lambert]{Gaultier Lambert}
\address{University of Zurich}
\email{gaultier.lambert@math.uzh.ch}

\author[D. Lundholm]{Douglas Lundholm}
\address{Department of Mathematics, Uppsala University, Box 480, SE-751 06, Uppsala, Sweden}
\email{douglas.lundholm@math.uu.se}

\author[N. Rougerie]{Nicolas Rougerie}
\address{Ecole Normale Sup\'erieure de Lyon \& CNRS,  UMPA (UMR 5669)}
\email{nicolas.rougerie@ens-lyon.fr}

\date{January, 2023}

\begin{abstract}
We consider a model for two types (bath and tracers) of 2D quantum particles in a perpendicular magnetic field. Interactions are short range and inter-species, and we assume that the bath particles are fermions, all lying in the lowest Landau level of the magnetic field. Heuristic arguments then indicate that, if the tracers are strongly coupled to the bath, they effectively change their quantum statistics, from bosonic to fermionic or vice-versa. We rigorously compute the energy of a natural trial state, indeed exhibiting this phenomenon of \emph{statistics transmutation}. The proof involves estimates for the characteristic polynomial of the Ginibre ensemble of random matrices.
\end{abstract}

\maketitle

\tableofcontents

\section{Introduction}

Quantum statistics refers to the ways that identical quantum particles may distribute themselves into different energy configurations or probability distributions 
--- Bose-Einstein statistics obeyed by bosons, respectively Fermi-Dirac statistics obeyed by fermions --- 
and it is rooted in the exchange and correlation properties of the many-particle wave function.
Pauli's exclusion principle is the concrete physical manifestation of the mathematical fact that identical fermions, such as electrons, necessarily occupy individual states as a consequence of the permutation-antisymmetric, or determinantal, structure of their wave function.
The macroscopic consequences of particles being subject to, respectively being exempt of, such exclusion properties include effects such as 
the stability of fermionic matter, chemistry and conduction, respectively Bose-Einstein condensation and optical coherence, and can thus be said to
dominate our experience of the three-dimensional world. 

On the other hand, quasi-particles (or quasi-holes) excitations of 2D quantum systems in high magnetic fields are expected to exhibit unusual physics~\cite{Jain-07,Laughlin-99,Myrheim-99,Wilczek-90}. 
In somewhat imaged wording, the mechanism is as follows. The combination of interparticle repulsion (due to the Pauli principle or genuine, e.g. Coulomb, interactions) and strong magnetic fields (forcing quantized cyclotron motion) leads particles to ``bind to magnetic flux quanta''. They thus effectively behave as producing a magnetic field, constant for a homogeneous system. Impurities/excitations in such a system can experience two noticeable effects:

\medskip

\noindent \boxed{\textbf{A}} They can feel the effective magnetic field, resulting in their coupling (i.e. electric charge) to the actual external magnetic field being modified.

\medskip

\noindent \boxed{\textbf{B}} By piercing holes in the underlying system they may deplete the effective field in a neighborhood around their location, which leads them to also ``bind to magnetic flux quanta''. This can modify their interactions, and in effect change their quantum statistics.  

\medskip

A mathematically rigorous understanding of these phenomena seems an important challenge, in particular in cases where the effect of \boxed{\textbf{B}} is most spectacular: it is known~\cite{AroSchWil-84,ZhaSreGemJai-14,CooSim-15,LunRou-16,UmuMacComCar-18,YakEtal-19,Brooks-etal-21a,Brooks-etal-21b} that the modification of the excitations' statistics can lead to the emergence of quasi-particles with \emph{fractional statistics}~\cite{LeiMyr-77,GolMenSha-81,Wilczek-82b}. Recent experiments~\cite{BarEtalFev-20,NakEtalMan-20} have provided concrete evidence that the above mechanisms lead to the emergence of quasi-particles, termed ``anyons'', violating the dichotomy between bosons and fermions which is obeyed by all fundamental particles.

Our goal in this paper is to make progress towards rigorously establishing Effects \boxed{\textbf{A}} and \boxed{\textbf{B}} in a well-defined model. The emergence of fractional statistics is out of our present reach but we obtain a clear signature of \emph{statistics transmutation}: bosonic (respectively fermionic) impurities effectively behave as fermions (bosons), acquiring (loosing) a Pauli exclusion principle via their coupling to a specific quantum bath.  

Our model is made of the following main ingredients:
\begin{itemize}
 \item a large bath of non-interacting 2D fermions,
 \item several tracer particles (impurities) immersed in this bath,
 \item a strong short-range repulsive interaction between bath and tracer particles,
 \item an external magnetic field to which both types of particles may couple. 
\end{itemize}

These ingredients pose severe restrictions to any practical realization, and the most reasonable venue would be in gases of cold atoms, where similar systems have already been proposed~\cite{ZhaSreGemJai-14,ZhaSreJai-15,CooSim-15,LunRou-16,WuJai-15}. The magnetic field would then be an artifical one~\cite{BloDalZwe-08,DalGerJuzOhb-11}, or brought about by rotation~\cite{Cooper-08,Fetter-09,Viefers-08,LMS19,KMS21,SLMS21}. 
From now on we consider the above system as a thought experiment
that can be used to extract the features of interest we may expect of a realistic physical system, 
and proceed to describe it in mathematical terms. 

\subsection{Model}

We assume that the external magnetic field is strong enough to force the bath particles to all live in the ground energy level of their magnetic kinetic energy (i.e. the lowest Landau level, $\LLL$). The one-body Hilbert space for the bath particles is thus~\cite{RouYng-19} 
\begin{equation}\label{eq:intro LLL}
\LLL:= \left\{ \psi \in L^2 (\R^2) : \psi(x) = f(z) e ^{-\frac{\B}{2}|z|^2}, f \mbox{ analytic} \right\},
\end{equation}
where we identify $\R^2 \ni x \leftrightarrow z \in \C$ and $\B>0$ is the strength of the external magnetic field, in units with Planck's constant $\hbar =1$ and the effective charge (coupling to the magnetic field) is $e=-2$. The above space is the ground eigenspace of the magnetic Laplacian 
\begin{equation}\label{eq:mag Lap}
\left( -\i \nabla_x - \B x^{\perp} \right)^2 
\end{equation}
with constant magnetic field (downward directed) of magnitude
$\B = \frac{\B}{2} \curl \, x^{\perp}$, and $x^\perp = (x^1,x^2)^{\perp} := (-x^2,x^1)$. 

For the tracer particles we consider the full $L^2 (\R^2)$ as one-body Hilbert space. Taking $N$ fermionic particles for the bath and $n$ impurity particles for the tracers, our full Hilbert space for the joint system is thus  
\begin{equation}\label{eq:intro full space}
\gH_{\rm sym}^{n \oplus N} :=  (L^2 (\R^2))^{\otimes_{\rm sym} n} \otimes \LLL ^{\otimes_{\mathrm{asym}} N } = L^2_{\rm sym} (\R^{2n})  \otimes \LLL ^{\otimes_{\mathrm{asym}} N } 
\end{equation}
or
\begin{equation}\label{eq:intro full space bis}
\gH^{n \oplus N}_{\rm asym} :=  (L^2 (\R^2))^{\otimes_{\rm asym} n} \otimes \LLL ^{\otimes_{\mathrm{asym}} N } = L^2_{\rm asym} (\R^{2n})  \otimes \LLL ^{\otimes_{\mathrm{asym}} N } 
\end{equation}
depending on whether the tracers are bosons or fermions. We denote 
$$\Psi_{n \oplus N} (y_1,\ldots,y_n;x_1,\ldots,x_N)$$
a generic element of such a space. This is an $L^2$ function symmetric or antisymmetric under the exchange of its first $n$ arguments and antisymmetric under the exchange of its last $N$ arguments, with in addition $e^{\frac{\B}{2}\sum_{j=1}^N |x_j|^2} \Psi_{n \oplus N}$ being analytic as a function of $x_1,\ldots,x_N$,
which we identify with complex numbers $z_1,\ldots,z_N$. 
Henceforth we also identify the coordinates $y_1,\ldots, y_n$ of the tracer particles with complex numbers $w_1,\ldots,w_n$. We also write 
\begin{equation}\label{eq:complex}
\R^{2n} \ni \by \leftrightarrow \w=(w_1, \dots , w_n) \in\C^n \mbox{ and } \R^{2N} \ni \bx \leftrightarrow \z=(z_1,\dots, z_N)\in\C^N
\end{equation}
for short, with the corresponding Lebesgue measures denoted $\d \z,\d \w$.

The joint Hamiltonian of the system we consider is 
\begin{equation}\label{eq:intro hamil}
H_{n \oplus N}^W := g \sum_{k=1} ^N \sum_{j=1} ^n \delta (x_k - y_j) + \sum_{j=1}^n \left\{ \frac{1}{2m}\left( -\i \nabla_{y_j} - q \B y_j ^{\perp} \right)^2 \right\} + W(y_1,\ldots,y_n)
\end{equation}
with $g>0$ a coupling constant and $W$ a collection of potentials (typically one- and two-body) acting on tracers, whose mass we denote $m$. The ``charge'' of the tracer particles is $-2q$ in our convention. The operator $\delta (x_k - y_j)$ is a delta interaction between bath and tracers, whose precise definition is given in Section~\ref{sec:delta int}. As a quadratic form it acts as 
\begin{equation}\label{eq:intro delta}
\left\langle \Psi_{1 \oplus 1} | \delta (x - y) | \Psi_{1 \oplus 1} \right\rangle_{L^2} = \int_{\R^2} |\Psi_{1 \oplus 1} (x;x)| ^2 dx,
\end{equation}
which is well-defined for $\Psi_{1 \oplus 1}(y;x)\in \gH^{1 \oplus 1}$, using the high regularity in the $x$ variable. 
The remaining terms of \eqref{eq:intro hamil} are considered in the sense of forms as well, 
and the total energy of an $L^2$-normalized state $\Psi_{n \oplus N}$ is given by the associated functional
\begin{equation}\label{eq:energy func}
\cE[\Psi_{n \oplus N}] := \langle \Psi_{n \oplus N} | H_{n \oplus N}^W | \Psi_{n \oplus N} \rangle_{\gH^{n \oplus N}}.
\end{equation}
We are in particular interested in the ground state energy
\begin{equation}\label{eq:energy gs}
E(n \oplus N) := \inf \bigl\{ \cE[\Psi_{n \oplus N}] : \Psi_{n \oplus N} \in \gH^{n \oplus N}_{\rm sym/asym}, \ \int_{\R^{2(n+N)}} |\Psi_{n \oplus N}|^2 = 1 \bigr\}.
\end{equation}
We assumed the bath particles' individual energy to be frozen by the projection to the lowest Landau level (whence the absence of a kinetic energy acting on the $\bx$ degrees of freedom in~\eqref{eq:intro hamil}). We expect that the above can be derived in a suitable limit from a more general model with the help of methods from~\cite{LewSei-09,SeiYng-20}, but do not pursue this here. 

It is convenient to choose a particular scaling of $\B$ and $N$. Indeed, the degeneracy per unit area~\cite{ChaFlo-07,RouYng-19} of a Landau level is well-known to be of order $\B$, the magnetic flux per unit area. Hence the $N$ fermionic bath particles, which must occupy orthogonal states in the lowest Landau level, will fill an area of order $N/\B$. For convenience, we fix this area by choosing 
\begin{equation}\label{eq:B is N}
\B = N \qquad \text{(where $N \to \infty$)}
\end{equation}
in the sequel. One can always reduce to this case by scaling lengths, at the price of modifying parameters in~\eqref{eq:intro hamil} appropriately. 

\subsection{Main result}

We are interested in the ground state problem, namely the lowest eigenvalue $E(n \oplus N)$ of $H_{n \oplus N}^W$ acting on $\gH^{n \oplus N}$. We investigate the parameter regime where effects \boxed{\textbf{A}} and \boxed{\textbf{B}} are expected to take place. Namely we assume that the bath/tracer interaction set by $g>0$ gives the main energy scale in~\eqref{eq:intro hamil}. 
When $g \to \infty$, it is a good approximation to restrict available states to those of the kernel of the interaction operator, the most simple~\cite{LunRou-16,ZhaSreGemJai-14,WuJai-15} being of the form (cf.~Section~\ref{sec:delta int})
\begin{equation} \label{def:psi}
\Psi_{\Phi}(\by;\bx)  = \Phi(\by) \cqh(\w) \Psi_{\rm qh}(\w;\z) 
\end{equation}
for a $\Phi\in L^2 (\R^{2n})$ describing the motion of the tracer particles (hence symmetric under particle exchange in case~\eqref{eq:intro full space} and antisymmetric in case~\eqref{eq:intro full space bis}). Recall our convention~\eqref{eq:complex} of identifying real two-vectors and complex numbers. We here set 
\begin{equation}\label{eq:norm Phi}
\int_{\R^{2n}} |\Phi| ^2 = 1 
\end{equation}
and we thus ensure normalization of~\eqref{def:psi} by choosing
$\cqh(\w) > 0$ with
\begin{equation} \label{def:cqh}
 \cqh(\w)^{-2} := \int_{\C^N} \left| \Psi_{\rm qh}(\w ; \z)\right|^2 \d\z. 
\end{equation}
The main ingredient of the construction is the function $\Psi_{\rm qh}$, describing a ``filled Landau level with quasi-holes'':
\begin{equation} \label{def:Psiqh}
\Psi_{\rm qh}(\w;\z) : = \left(\prod_{j=1}^n\prod_{k=1}^N (w_j-z_k)\right) \left(\prod_{1\leq k < \ell \leq N} (z_k-z_\ell) \prod_{k=1}^N e^{-\B |z_k|^2/2}\right).
\end{equation}
Our main goal is to compute an effective energy functional for the remaining degree of freedom, i.e. the choice of the function $\Phi$. Clearly, for any potential $W(y_1,\ldots,y_n)$, 
\begin{equation}\label{eq:pot ener} 
\left\langle\Psi_{\Phi} | W(y_1,\ldots,y_n) |\Psi_{\Phi} \right\rangle_{L^2(\R^{2(N+n)})} = \left\langle \Phi | W(y_1,\ldots,y_n) |\Phi \right\rangle_{L^2(\R^{2n})}.
\end{equation}
The main question thus concerns the magnetic kinetic energy of the tracer particles. We assume that they stay within the extension of the bath\footnote{If they leave the bath for some reason they no longer interact with it, and this whole discussion is irrelevant.}. 
With our choice of units 
(in particular~\eqref{eq:B is N} and $n \ll N$) the latter is essentially the unit disk
\begin{equation}\label{eq:unit disk}
\D:= \left\{ x \in \R^2, |x| \leq 1\right\}. 
\end{equation}
To avoid boundary effects we actually consider the slightly smaller
\begin{equation}\label{eq:config}
\DROPn := \big\{ (y_1,\ldots,y_n) \in\R^{2n} : |y_j| \le 1-\delta_N(\kappa) \mbox{ for all } j= 1 \ldots n\big\} 
\end{equation}
as the configuration space for the tracers, where, for a fixed constant $\kappa >0$ we have set 
\begin{equation}\label{eq:delta}
\delta_N(\kappa):= \kappa \sqrt{\frac{\log N}{N}}. 
\end{equation}
One of the most delicate aspects of the analysis is the behavior of the system when two tracer particles get close, $y_i\sim y_j$ for $i\neq j$. It is hence convenient to further define 
\begin{equation}\label{eq:nomerg}
\DROPn^{\varnothing} := \big\{ (y_1,\ldots,y_n) \in \DROPn : |y_i-y_j| \ge 2 \delta_N(\kappa) \text{ for all }i,j=1,\dots,n \text{ with }i\neq j \big\},
\end{equation}
the set where no such merging occurs. 

The statistics transmutation phenomenon is highlighted by setting
\begin{equation}\label{eq:choice Phi}
\Phi (y_1,\ldots,y_n) = \left(\prod_{1\leq i<j \leq n} e^{\i \,\ang(y_i-y_j)} \right) \Phit (y_1,\ldots,y_n)
\end{equation}
with $\ang(y_i-y_j)$ the angle of the vector $y_i-y_j$ relative to the 
first coordinate axis,
i.e. the phase of the complex number $w_i-w_j$. 
The energy of the trial state~\eqref{def:psi} is indeed best expressed in terms of $\Phit$, which is antisymmetric in its arguments if $\Phi$ is symmetric (respectively symmetric if $\Phi$ is antisymmetric). We shall prove the following (we write $ a \lesssim b$ when there is a constant $c>0$ depending only on $n$ such that $a\leq c b$):

\begin{theorem}[\textbf{Statistics transmutation in the ansatz~\eqref{def:psi}}]\label{thm:transmutation}\mbox{}\\
Fix an integer $n\ge 2$ (further constants depend on it). Let $\Psi_{\Phi}$ be as in~\eqref{def:psi}-\eqref{eq:choice Phi} with $\B=N \in \N$ and $\Phit \in C^0(\R^{2n})$ with support in $\DROPn$. Assume 
%also that $\Phit$ is not too localized around mergings and not too excited, namely

\medskip 

\noindent \emph{(i)} that there exists a constant $C_{\Phit} >0$ such that, for any $1\leq i\neq j \leq n$ 
\begin{equation}\label{eq:Phi on merging}
|\Phit (\by)| \leq C_{\Phit} |y_i-y_j|. 
\end{equation}

\medskip 

\noindent \emph{(ii)} that, for some (possibly very large) $s>0$ and all $j$
\begin{equation}\label{eq:stupid pre}
 \int_{\R^{2n}}  \left| \left(-\i \nabla_{y_j}   - (q-1)\B y_j^\perp \right) \Phit \right|^2 \lesssim N^s.
\end{equation}

\medskip

Then, for any $j \in \{1,\dots,n\}$ and for some $\kappa >0$ large enough, we have,  as $N\to+\infty$,
\begin{equation}\label{eq:main estimate gauge}
 \int_{\R^{2(N+n)}} \left| \left(-\i \nabla_{y_j} - q \B y_j^\perp\right) \Psi_\Phi \right|^2 = 2\B + \int_{\R^{2n}}  \left| \left(-\i \nabla_{y_j}   - (q-1)\B y_j^\perp \right) \Phit \right|^2 + \mathrm{Err} (\Phit)
 \end{equation}
 with an error term satisfying
 \begin{multline}\label{eq:main estimate gauge 2}
 \left| \mathrm{Err} (\Phit) \right| \lesssim C_{\Phit} ^2 \frac{\log N}{N} + C_{\Phit}  \sqrt{\frac{\log N}{N}} \left(\int_{ \DROPn\setminus \DROPn^{\varnothing}} \left|\left(-\i \nabla_{y_j}   - (q-1)\B y_j^\perp \right)  \Phit \right|^2\right) ^{1/2}.
\end{multline}
\end{theorem}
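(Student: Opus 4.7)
My plan is a direct computation of $\int|D_j\Psi_\Phi|^2$ in complex coordinates, exploiting the holomorphy of $\Psiqh$ in the bath variables. Introducing $D_j^\pm := D_{j,1}\pm\i D_{j,2}$ for $D_j = -\i\nabla_{y_j} - q\B y_j^\perp$, the standard operator identity (arising from $[D_{j,1},D_{j,2}] = 2\i q\B$) gives $\int|D_j^-\Psi|^2 = \int|D_j\Psi|^2 + 2q\B$ on $L^2$-normalized states, so it suffices to compute $\int|D_j^-\Psi_\Phi|^2$. Applying the Leibniz rule to the factorization
\[
\Psi_\Phi = \Phit\cdot\Big(\prod_{i<i'}e^{\i\,\ang(w_{i'}-w_i)}\Big)\cdot\cqh(\w)\cdot\Big(\prod_{j',k}(w_{j'}-z_k)\Big)\cdot G(\z), \quad G(\z):=\prod_{k<\ell}(z_k-z_\ell)\prod_k e^{-\B|z_k|^2/2},
\]
the key object is $S_j(\w;\z):=\sum_k(w_j-z_k)^{-1}$, the logarithmic derivative of the Jastrow factor. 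Combining it with the normalization derivative $\partial_{w_j}\log\cqh = -\E_\w[S_j]/2$ (where $\E_\w$ is the expectation under the quasi-hole Ginibre density $\cqh^2|\prod(w-z)|^2|G|^2\,d\z$) and the Coulomb-gas identity $\E_\w[S_j]\simeq\B\bar w_j$ (the field at $w_j$ of a uniform charge density $\B/\pi$ on the unit disk, responsible for the charge shift $q\mapsto q-1$), one finds
\[
D_j^-\Psi_\Phi = \frac{\Psi_\Phi}{\Phit}\tilde D_j^-\Phit + \Psi_\Phi\cdot R_j(\w;\z),
\]
where $\tilde D_j$ denotes the covariant derivative with shifted charge $q-1$ and $R_j = -\i\sum_{i\neq j}(w_j-w_i)^{-1} - \i(\E_\w[S_j]-\B\bar w_j) - 2\i(S_j-\E_\w[S_j])$ gathers the anyonic-gauge contribution, a mean correction, and the centered Coulomb fluctuation.

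Squaring and integrating in $\bx$, using that $|\cqh\prod(w-z)G|^2\,d\bx$ is a probability density and that the centered fluctuation averages out, a short computation gives
\[
\int|D_j^-\Psi_\Phi|^2\,d\bx = \bigl|\tilde D_j^-\Phit + \Phit\, a_j(\w)\bigr|^2 + 4|\Phit|^2\,\mathrm{Var}_\w(S_j),
\]
with $a_j(\w) := -\i\sum_{i\neq j}(w_j-w_i)^{-1} - \i(\E_\w[S_j]-\B\bar w_j)$. Combined with $\int|\tilde D_j^-\Phit|^2 = \int|\tilde D_j\Phit|^2 + 2(q-1)\B$ and subtracting the extra $2q\B$, the leading $+2\B$ of the theorem reduces to the crucial variance asymptotic
\[
\mathrm{Var}_\w(S_j) = \B + O(\log N),
\]
uniformly in $\w$. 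Heuristically this is correct: although $f(z)=(w-z)^{-1}$ fails $H^1$-regularity, the determinantal repulsion of Ginibre provides a short-distance cutoff at the magnetic length $1/\sqrt{\B}$, giving $\mathrm{Var}(\sum f(z_k))\sim \int_{|\zeta|>1/\sqrt{\B}}|\zeta|^{-4}\,d\zeta\sim\B$. Making this rigorous in the presence of the $n$ quasi-hole insertions is the technical heart of the argument --- precisely the sharp estimate on the logarithmic derivative of the characteristic polynomial of Ginibre signalled by the abstract.

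The error analysis then breaks into: (a) subleading corrections to $\mathrm{Var}_\w(S_j)=\B$ and $\E_\w[S_j]=\B\bar w_j$, which once integrated against $|\Phit|^2$ on $\DROPn$ (kept away from the bath boundary by the cutoff $\delta_N(\kappa)$ with $\kappa$ large, where the Coulomb-field approximation breaks down) produce the $C_{\Phit}^2\log N/N$ contribution; (b) the cross term $2\Re\int\overline{\tilde D_j^-\Phit}\,\Phit\,a_j$, whose anyonic piece $\sum_{i\neq j}(w_j-w_i)^{-1}$ is singular at tracer coincidences but is handled via assumption (i) ($|\Phit|\lesssim|y_i-y_j|$): integrable on $\DROPn^{\varnothing}$, and on $\DROPn\setminus\DROPn^{\varnothing}$ treated by Cauchy--Schwarz using the pointwise bound $|\Phit|^2|a_j|^2\lesssim 1$ together with the measure estimate $|\DROPn\setminus\DROPn^{\varnothing}|\lesssim\log N/N$, producing the square-rooted error term in~\eqref{eq:main estimate gauge 2}. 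Assumption (ii) supplies the technical regularity needed to legitimize the integrations by parts. The main obstacle is obtaining the two Ginibre estimates for $\E_\w[S_j]$ and $\mathrm{Var}_\w(S_j)$ \emph{uniformly} in $\w$, including when two quasi-holes $w_i,w_j$ come close and the Jastrow factor $\prod(w_j-z_k)$ strongly distorts the determinantal structure --- a regime beyond the standard smooth-linear-statistics analysis of the Ginibre ensemble, where the random-matrix input of the paper is indispensable.
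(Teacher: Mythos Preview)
Your decomposition via the ladder operators $D_j^\pm$ and the linear statistic $S_j(\w;\z)=\sum_k(w_j-z_k)^{-1}$ is correct and is exactly the paper's Proposition~\ref{prop:exp} in disguise: your $\E_\w[S_j]$ equals $\partial_{w_j}\log\cqh^{-2}(\w)$, which encodes the emergent vector potential $\mathcal A_j$, and $2\,\mathrm{Var}_\w(S_j)$ is the emergent scalar potential $\mathcal V_j$. So structurally the two arguments coincide; you simply perform the gauge transform $\Phi\to\Phit$ at the start rather than at the end.

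One statement needs correction. The claim $\mathrm{Var}_\w(S_j)=\B+O(\log N)$ \emph{uniformly} in $\w$ is false: when $|w_i-w_j|\lesssim N^{-1/2}$ the correction is of order $N$, not $\log N$ (this is the term $N\,\mathrm v(\sqrt N(y_i-y_j))$ in the paper's Proposition~\ref{prop:merg}). Hypothesis~(i) is needed precisely because the variance does \emph{not} stay close to $\B$ near the diagonal --- the $C_{\Phit}^2(\log N)/N$ bound comes from integrating an $O(N)$ correction against $|\Phit|^2\le C_{\Phit}^2|y_i-y_j|^2$ over the merging set, not from a uniformly small remainder. Your error item~(b) already contains the right mechanism; you should reformulate the variance estimate region by region (no-merging / simple merging / remainder) rather than globally.

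The more substantive gap is the missing tool. You acknowledge that the random-matrix input is ``indispensable'' but never say what it is. The paper's engine is an \emph{exact} identity (Theorem~\ref{thm:exp}):
\[
\cqh^{-2}(\w)\ \propto\ \frac{e^{\B\sum_j|w_j|^2}}{|\triangle(\w)|^2}\,\det_{n\times n}\big[\K_{N+n}(w_i,w_j)\big].
\]
Taking $\partial_{w_j}$ and $\partial_{w_j}\partial_{\bar w_j}$ of the logarithm reduces your $\E_\w[S_j]$ and $\mathrm{Var}_\w(S_j)$ to derivatives of $\log\det[\K_{N+n}(w_i,w_j)]$, which are then controlled by elementary pointwise asymptotics of $\K_{N+n}$ (Lemma~\ref{lem:ker}), including in the merging regime. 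No CLT or local-law for singular linear statistics is invoked. Without this determinant formula --- or an equivalent exact identity --- I do not see how you would obtain the mean and variance of $S_j$ to the needed precision when quasi-holes collide, which you yourself flag as the hard case.
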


We shall illustrate the efficiency of our bound by discussing some applications in Section~\ref{sec:applications}. Before that,  some comments are in order:

\medskip 

\noindent\textbf{(1)} The above says that the kinetic energy of the tracer particles in the joint state~\eqref{def:psi} is well approximated by a free kinetic energy in the reduced state $\Phit$, whose quantum statistics is the opposite of the original one imposed on the tracers. The term $2\B$ in the right-hand side of~\eqref{eq:main estimate gauge} comes from an effective scalar potential experienced by the tracers, which turns out to be constant to a very good approximation. 

\medskip

\noindent\textbf{(2)} The occurence of Effect \boxed{\textbf{A}} is apparent in the modified magnetic field experienced by $\Phit$. The ``charge'' $-2q$ of tracer particles is effectively reduced to $-2(q-1)$, for they experience the constant density of magnetic flux attached to bath particles. This conclusion demands that the tracers be confined to the droplet of bath particles, whence our assumption on the support of $\Phi$. A sufficiently fast decay can replace the compact support, and can be enforced by including a trapping potential in the $W$ term of~\eqref{eq:intro hamil}.  
  
\medskip 

\noindent\textbf{(3)} Effect \boxed{\textbf{B}} is manifest in the modified statistics of $\Phit$. The effect occurs through the attachment of Aharonov-Bohm flux tubes to the tracers, as apparent when phrasing the theorem in the equivalent form
 \begin{multline}\label{eq:main estimate}
 \left|\int_{\R^{2(N+n)}} \left| \left(-\i \nabla_{y_j} - q \B y_j^\perp\right) \Psi_\Phi \right|^2 - 2\B - \int_{\R^{2n}}  \left| \left(-\i \nabla_{y_j}   + \bAt_j (\by) \right) \Phi \right|^2 \right| \lesssim C_{\Phit} ^2 \frac{\log N}{N} \\ + C_{\Phit}  \sqrt{\frac{\log N}{N}} \left(\int_{ \DROPn\setminus \DROPn^{\varnothing}} \left|\left(-\i \nabla_{y_j}   - (q-1)\B y_j^\perp \right)  \Phit \right|^2\right) ^{1/2}
\end{multline}
with $\bAt (\by)$ the total vector potential 
\begin{equation}\label{eq:gauge pot}
\bAt_j (\by) := - (q-1)\B y_j^\perp - \sum_{\ell \neq j} \frac{(y_j-y_\ell) ^{\perp}}{|y_j-y_\ell|^2}. 
\end{equation}
The second term in the above corresponds to an Aharonov-Bohm magnetic field 
$$ \curl_{y_j} \sum_{\ell \neq j} \frac{(y_j-y_\ell) ^{\perp}}{|y_j-y_\ell|^2} = 2 \pi \sum_{\ell \neq j} \delta_{y_\ell=y_j}.$$
It can be gauged away using the transformation~\eqref{eq:choice Phi}. Namely, a calculation shows that: 
\begin{equation} \label{eq:gauge}
\int_{\R^{2n}}  \left| \left(-\i \nabla_{y_j}   + \bAt (y_j) \right) \Phi \right|^2 = \int_{\R^{2n}}  \left| \left(-\i \nabla_{y_j}   - (q-1)\B y_j^\perp \right) \Phit \right|^2,
\end{equation}
which leads to~\eqref{eq:main estimate}. The coupling to Aharonov-Bohm flux tubes thus has the net effect of changing the tracers' quantum statistics. 
 
\medskip 

\noindent\textbf{(4)} This change of statististics occurs when tracer particles are well-separated on the scale of the magnetic length ${\B}^{-1/2}$. Our assumption~\eqref{eq:Phi on merging} allows to control the contribution from tracer particles getting very close. It is typically satisfied if $\Phit$ in~\eqref{eq:choice Phi} describes free fermions (i.e. tracers described by $\Phi$ were originally bosons) in the form of a Slater determinant. Otherwise one needs $W$ to incorporate some inter-particle repulsion in~\eqref{eq:intro hamil} to enforce~\eqref{eq:Phi on merging} in states of interest.  
 
\medskip

\noindent\textbf{(5)} The main sources of error in our estimates are due to the behavior of the system when tracer particles are close, i.e. in the vicinity of the diagonal set of the configuration space, and in fact \eqref{eq:Phi on merging}-\eqref{eq:stupid pre} can be replaced by local conditions around the diagonals. We will actually deduce the stated bound from a more general one, including smaller corrections to the effective emergent scalar and vector potentials due to binary tracer encounters. Scalar interactions can induce behavior akin to statistics~\cite{LarLunNam-21}, and thus it is necessary to obtain precise control of both emergent effects (vector and scalar) in order to isolate the genuine features of statistics transmutation.

\medskip
 
\noindent\textbf{(6)} We conjecture that our energy upper bound is optimal in the limit we consider, if the limit $g\to \infty$ (strong coupling) is also taken fast enough. Exactly how fast probably depends on a variant of the unsolved ``spectral gap conjecture'' in fractional quantum Hall physics~\cite{NacWarYou-20a,NacWarYou-20b,Rougerie-xedp19,RouSerYng-13b,WarYou-21} (see Section~\ref{sec:delta int} for more comments).  

\medskip

\noindent\textbf{(7)} A possible scenario leading to the model Hamiltonian we consider here, is to allow that a total number of $n+N$ fermionic particles distribute into different Landau levels, with the majority $N$ in the LLL and a small fraction $n \ll N$ in the higher Landau levels, i.e. excited from the ground state energy. 
This effectively defines several species of fermions, distinguished by their Landau level index, and what we have termed tracer particles are then simply the excited particles. In our theorem it is not necessary to assume that $q \neq 1$ or that $m$ is large, as done in adiabatic treatments. Rather it is only necessary that the fraction $n/N$ of excited particles is sufficiently small.

\medskip

\noindent\textbf{(8)} A conventional route to deducing statistics transmutation in quantum systems of the considered kind 
is to compute the corresponding gauge potential as the Berry connection
of the parametrized family of states $\Psiqh$, and then use an adiabatic argument to conclude that the tracer particles' motion around bath particles gives rise to enclosed fluxes,
or holonomies, thereby manifesting the appropriate exchange phases \cite{AroSchWil-84,Forte-92,KjoMyr-99,Myrheim-99,Jain-07,BonGurNay-11}. 
However, as discussed in~\cite{Forte-91} and \cite{LunRou-16}, this approach is lacking essential information on the dynamics and energy scales, such as those set by inherent as well as emergent interaction potentials, which necessitates knowledge of the full Hamiltonian. 
In addition, one must be careful to eliminate other potential ambiguities in the phase, cf. \cite[Section~9.8.2]{Jain-07}, \cite{KjoMyr-99}.
Furthermore the Born-Oppenheimer approximation may break down if we consider the bath and tracers to be the same type of particle. 
These issues are circumvented in our present approach.

\medskip

There is a natural generalization of Theorem~\ref{thm:transmutation} to the case where fractional statistics emerge. We state it as a conjecture, with motivation from the fractional quantum Hall effect literature~\cite{AroSchWil-84,Halperin-84,BonGurNay-11}.

\begin{conjecture}[\textbf{Emergence of fractional statistics}]\label{conj:fractional}\mbox{}\\
Fix an integer $n\ge 2$ (further constants depend on it). Let $\Psi_{\Phi}$ be as in~\eqref{def:psi}, but now with $p,\mu \in \N$ and
\begin{equation} \label{def:Psiqh frac}
\Psi_{\rm qh}(\w;\z) : = \left(\prod_{j=1}^n\prod_{k=1}^N (w_j-z_k)^p \right) \left(\prod_{1\leq k < \ell \leq N} (z_k-z_\ell)^\mu \prod_{k=1}^N e^{-\B |z_k|^2/2}\right),
\end{equation}
with $\B=N \in \N$ and $\Phi \in C^0(\R^{2n})$ with support in $\left\{ |x| \leq d \right\}^{\times n}$ for some $d< \sqrt{\mu}$. Assume the existence of a constant $C_{\Phi} >0$ such that 
\begin{equation} \label{def:Phi Hoelder}
|\Phi (\by)| \leq C_{\Phi} |y_i-y_j|^\beta,
\end{equation}
holds for some $\beta > 0$ and all $i \neq j$.

Then, for any $j \in \{1,\dots,n\}$ and for some $\kappa >0$ large enough, we have,  as $N\to+\infty$,
\begin{multline}\label{eq:main estimate frac}
 \int_{\R^{2(N+n)}} \left| \left(-\i \nabla_{y_j} - q \B y_j^\perp\right) \Psi_\Phi \right|^2 =  2\B\frac{p}{\mu} + \int_{\R^{2n}}  \left| \left(-\i \nabla_{y_j}   + \bAt_j (\by) \right) \Phi \right|^2  + \mathrm{Errors}
 \end{multline}
with $\bAt (\by)$ the total vector potential 
\begin{equation}\label{eq:gauge pot frac}
\bAt_j (\by) := - \left(q-\frac{p}{\mu}\right)\B y_j^\perp - \frac{p^2}{\mu}\sum_{\ell \neq j} \frac{(y_j-y_\ell) ^{\perp}}{|y_j-y_\ell|^2}. 
\end{equation}
\end{conjecture}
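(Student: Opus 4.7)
The plan is to extend the blueprint of the proof of Theorem~\ref{thm:transmutation} by using the Laughlin plasma analogy in place of the Ginibre determinantal structure. Since $\Psi_{\rm qh}$ in~\eqref{def:Psiqh frac} remains holomorphic in each $w_j$, one has $\partial_{w_j}\log \Psi_{\rm qh} = p\sum_{k=1}^N (w_j - z_k)^{-1}$. Expanding $|(-\i\nabla_{y_j} - q\B y_j^\perp)\Psi_\Phi|^2$ through the factorization~\eqref{def:psi} and integrating the bath variables $\z$ against the conditional probability density
\begin{equation*}
\rho_\w(\z) := \cqh(\w)^2 |\Psi_{\rm qh}(\w;\z)|^2,
\end{equation*}
the left-hand side of~\eqref{eq:main estimate frac} reduces to the free kinetic energy of $\Phi$ in the bare gauge, plus conditional expectations under $\rho_\w$ of the first and second moments of the random ``electric field'' $p\sum_k(w_j-z_k)^{-1}$ and its complex conjugate. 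The task is then to control these expectations to sufficient precision.

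The second step invokes the plasma analogy: $\rho_\w$ is the Gibbs measure of a two-dimensional one-component log-gas at inverse temperature $2\mu$, confined by the harmonic well $\B|z|^2$, with $n$ repulsive test charges of strength $2p$ at the positions $\w$. Perfect Laughlin screening predicts a bulk density $\B/(\pi\mu)$ supported on $\{|z|\le \sqrt{\mu}\}$ with a localized depletion of integrated mass $p/\mu$ near each $w_\ell$; this is precisely why the tracers must be confined to a strictly smaller disk, hence the hypothesis $d<\sqrt{\mu}$. Taking the Cauchy transform of this asymptotic density yields, for any $w_j$ strictly inside the droplet,
\begin{equation*}
p\,\E_{\rho_\w}\Bigl[\sum_k (w_j-z_k)^{-1}\Bigr] \;\approx\; \frac{p\B}{\mu}\,\bar w_j \;-\; \frac{p^2}{\mu}\sum_{\ell \neq j}\frac{1}{w_j-w_\ell},
\end{equation*}
where the self-term $\ell = j$ is regulated by the $|w_j-z_k|^{2p}$ zero of $|\Psi_{\rm qh}|^2$. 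Reassembled with the bare magnetic term $-q\B y_j^\perp$, the first piece produces exactly the modified Landau coupling $-(q-p/\mu)\B y_j^\perp$, while the singular sum yields the Aharonov-Bohm flux attachments with coefficient $p^2/\mu$, matching~\eqref{eq:gauge pot frac}. The additive constant $2\B\,p/\mu$ in~\eqref{eq:main estimate frac} then emerges from the variance of $p\sum_k(w_j-z_k)^{-1}$ around its mean, via a Ward-type sum rule generalizing the one responsible for the $2\B$ term in Theorem~\ref{thm:transmutation}: physically, the ``self-energy'' of a charge-$p$ insertion screened at plasma density $\B/(\pi\mu)$. The Hölder condition~\eqref{def:Phi Hoelder} ensures integrability of the resulting Aharonov-Bohm kinetic energy near the tracer diagonals.

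The hard part is making all of the above quantitative for $\mu\ge 2$, and this is why the statement is formulated as a conjecture. In the case $\mu=1$ handled by Theorem~\ref{thm:transmutation}, $\rho_\w$ is a determinantal (Ginibre-type) process perturbed by quasi-holes, and the required estimates for the characteristic polynomial $\prod_k(w-z_k)$ and its logarithmic derivative follow from orthogonal-polynomial and Riemann-Hilbert methods. For $\mu \ge 2$, $\rho_\w$ is a genuinely interacting 2D log-gas, and one needs: (a) a local density law for the bulk of the Laughlin droplet with $o(1)$ precision at the scale of the magnetic length; (b) quantitative control of the density near each quasi-hole insertion together with the associated fluctuations of $\sum_k(w_j-z_k)^{-1}$; and (c) correlation estimates precise enough to resolve the logarithmic singularity $(w_j-w_\ell)^{-1}$ produced by neighbouring tracers. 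Although 2D Coulomb gas theory has advanced considerably in recent years (Leblé-Serfaty, Bauerschmidt-Bourgade-Nikula-Yau, Ameur and collaborators), the version of these estimates in the presence of a bounded number of quasi-hole insertions appears to be morally tied to the spectral-gap conjecture of fractional quantum Hall physics discussed in Comment~(6), which is why a complete proof seems out of reach at present.
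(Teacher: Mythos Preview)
The statement is labeled \textbf{Conjecture} in the paper and is explicitly \emph{not} proved there; the paper itself says ``the emergence of fractional statistics is out of our present reach'' and that formal calculations backing the conjecture ``follow the heuristics we recall below, without having at disposal the determinantal structures which allow us to rigorously settle the case $p=\mu=1$.'' So there is no ``paper's own proof'' to compare against, and your write-up is appropriately a heuristic sketch together with an identification of the obstructions. In that sense your proposal is correct in spirit and in scope.

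Your heuristic route and the paper's are essentially equivalent, just packaged differently. You work directly with conditional expectations of the random field $p\sum_k (w_j-z_k)^{-1}$ under the Gibbs measure $\rho_\w$; the paper instead records (Proposition~\ref{prop:exp}, which it notes ``applies equally well to the fractional statistics case~\eqref{def:Psiqh frac}'') that the emerging vector and scalar potentials are $\frac{1}{2}\nabla^\perp_{w_j}\log c_{\rm qh}^{-2}$ and $\frac{1}{2}\Delta_{w_j}\log c_{\rm qh}^{-2}$, and then applies the plasma free-energy Heuristic~\ref{heu:plasma} to $\log c_{\rm qh}^{-2}$ before differentiating. These are two faces of the same computation (the first lines of~\eqref{eq:vector pot bis}--\eqref{eq:scalar pot bis} are exactly your conditional expectations), so neither approach buys anything the other does not at the heuristic level. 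Your identification of the needed ingredients for $\mu\ge 2$---local density laws, quasi-hole depletion, and correlation control replacing the Ginibre/determinantal input---matches the paper's diagnosis.

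One small correction: your appeal to Comment~(6) and the spectral-gap conjecture is misplaced. In the paper that comment concerns the \emph{optimality} of the upper bound (i.e.\ whether the ansatz~\eqref{def:psi} is nearly a ground state as $g\to\infty$), which involves the gap of the bath--tracer contact interaction. It is not the obstruction to proving Conjecture~\ref{conj:fractional}; the obstruction is purely the lack of sufficiently sharp quantitative control on the $2\mu$-plasma with quasi-hole insertions, as you otherwise describe correctly.
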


The precise form of the errors is not very important for our discussion, but should roughly look as in~\eqref{eq:main estimate} with powers depending on $\beta$. 
Formal calculations backing the above can be found in~\cite{LunRou-16}.  Essentially, one follows the heuristics we recall below, without having at disposal the determinantal structures which allow us to rigorously settle the case $p = \mu = 1$ in the present paper. Note that the second term in~\eqref{eq:gauge pot frac} can be \emph{formally} gauged away by choosing   
\begin{equation}\label{eq:choice Phi frac}
\Phi (y_1,\ldots,y_n) = \left(\prod_{1\leq i<j \leq n} e^{\i \frac{p^2}{\mu} \ang(y_i-y_j)} \right) \Phit (y_1,\ldots,y_n)
\end{equation}
and thus one thinks of the effective problem in terms of {\bf fractional statistics} (or, better, fractional exchange phases). Indeed if $\Phi$ is bosonic then $\Phit$ should now \emph{formally} pick a phase factor $e^{-\i \pi p^2 / \mu}$ upon particle exchange (respectively $-e^{-\i \pi p^2 / \mu}$ if $\Phi$ is fermionic). 
For $p>1$ this also accounts for the statistics of clusters of quasi-holes; cf. \cite{ThoWu-85}.
See again~\cite{LunRou-16} and references therein for further discussion.

\subsection{Outline of the proof}\label{sect:proof out}

The main ingredient~\eqref{def:Psiqh} in the construction of our trial state is also known as the ``Laughlin function of exponent $1$ with quasi-holes''. The second factor of the product is simply a (non-normalized) Slater determinant of the one-body orbitals 
\begin{equation}\label{eq:orbitals}
 \varphi_k(z) := \frac{\B^{(k+1)/2}}{\sqrt{\pi k!}} z^k e^{-\frac{\B}{2}|z|^2}
\end{equation}
for $k= 0,\ldots,N-1$. The compact form in~\eqref{def:Psiqh} follows from Vandermonde's formula
$$ \det_{N\times N} \left( \varphi_k (z_\ell) \right)_{k,\ell} \propto \prod_{1\leq k < \ell \leq N} (z_k-z_\ell).$$ %\,e^{-\frac{\B}{2}|\bz|^2}
This is a fermionic function which belongs to $\LLL^{\otimes N}$, as required in our model. The interpretation is that the state of the bath is made of one fermion per available orbital (here represented by angular momentum $k$ eigenfunctions~\eqref{eq:orbitals}), a logical choice for free particles
residing in the lowest Landau level.
Assuming a radial confinement of the bath, it is natural to minimize the angular momentum
which is equivalent to $|z|^2$ in the LLL.

The first factor of the product~\eqref{def:Psiqh} represents attachment of quasi-holes to the coordinates of the tracer particles: the joint wave function vanishes whenever $z_k=w_j$ for some pair $k,j$. This factor is an analytic function of $\z$, and symmetric under exchange of the labels of the $\z$ coordinates and of the $\w$ coordinates. Hence the full trial state~\eqref{def:psi} belongs to the state space~\eqref{eq:intro full space} respectively~\eqref{eq:intro full space bis}. The exact cancellation of the interspecies interaction comes at a price: analyticity in $\z$ must be preserved, which leads to phase singularities (vortices) attached to bath-tracer encounters. These are responsible for the emergent gauge field in Theorem~\ref{thm:transmutation}. We now sketch the main steps in the latter's proof. 

First, a direct but lengthy calculation shows that, for the basic ansatz~\eqref{def:psi}
\begin{equation}\label{eq:sketch exact}
\int_{\R^{2(N+n)}} \left| \left(-\i \nabla_{y_j} - q \B y_j^\perp\right) \Psi_\Phi\right|^2  =  \int_{\R^{2n}} \left| \left(-\i \nabla_{y_j} - q \B y_j^\perp + \mathcal{A}_j \right) \Phi\right|^2 + \int_{\R^{2n}} |\Phi|^2 \mathcal{V}_j
\end{equation}
for a vector potential 
$$ \cA_j = \cA_j (w_1,\ldots,w_n) \in \R^2$$
and a scalar potential 
$$ \cV_j = \cV_j (w_1,\ldots,w_n) \in \R$$
to be given in Section~\ref{sec:calculus}. These two potentials depend on the positions of the tracer particles and on some average properties of the state of the bath, conditioned on that of the tracers. The task is to show that, up to very small remainders, 
$$ \cA_j - q \B y_j^\perp \simeq \bAt_j (\by), \qquad \cV_j \simeq 2\B$$
with $\bAt (\by)$ as in~\eqref{eq:gauge pot}. The simplest expressions (but not the only useful ones) we can give are 
\begin{equation} \label{eq:vector pot}
\mathcal{A}_j (\w)  =  \frac{1}{2} \nabla^{\perp}_{w_j}   \log  \cqh^{-2}(\w).
\end{equation}
and
\begin{equation}\label{eq:scalar pot}
\mathcal{V}_j(\w)   = \frac{1}{2} \Delta_{w_j} \log \cqh^{-2}(\w)
\end{equation}
where $\cqh^{-2}(\w)$ is the normalization constant defined in~\eqref{def:cqh}. 

Second, we use the \emph{plasma analogy}~from \cite{Laughlin-83} (see also~\cite{LieRouYng-16,LieRouYng-17,RouSerYng-13b,Rougerie-xedp19,RouYng-14,RouYng-15,RouYng-17} for rigorous applications) to investigate the density of bath particles. In hopefully suggestive notation we rewrite
\begin{equation}\label{eq:plas ana}
\left|\cqh(\w) \Psi_{\rm qh}(\w;\z)\right|^2 = \frac{1}{\cZ (\w)} \exp\left( - T^{-1} H (\w;\z) \right),  
\end{equation}
interpreting  $\cqh(\w)^{-2} = \cZ (\w)$ as a \emph{partition function}, making the above right-hand side a probability measure for the positions $\z$. 
The (fictitious) plasma Hamiltonian is, for the general state \eqref{def:Psiqh frac},
\begin{equation}\label{eq:plas hamil}
H (\w;\z) = \sum_{j=1} ^N |z_j|^2 + \frac{2\mu}{\B} \sum_{1\leq i < j \leq N} \log \frac{1}{|z_i-z_j|} +  \frac{2p}{\B}  \sum_{j=1} ^N \sum_{k=1} ^n \log \frac{1}{|z_j-w_k|}. 
\end{equation}
One interprets the right-hand side of~\eqref{eq:plas ana} as the Gibbs state at temperature $T= \B^{-1} = N^{-1}$ of a 2D system of $N$  point charges (the $\z$ coordinates) repelling one another via 2D Coulomb forces, attracted to a constant background of opposite charge and repelled by $n$ other pinned point charges (the $\w$ coordinates). Screening properties of such a system suggest that its free energy $F(\w)$ behaves as the electrostatic energy of the $\w$ charges in the neutralizing background: 

\begin{heuristic}[\textbf{Screening in the fictitious plasma}]\label{heu:plasma}\mbox{}\\
There is a $\mathrm{constant}$ independent of $\w$ such that, for $\B = N \gg 1$,
\begin{align}\label{eq:free ener}
F(\w) &= - T \log \cZ (\w) = 2 \B^{-1} \log \cqh(\w) \nonumber\\
&\simeq -  \frac{p}{\mu}\sum_{j=1} ^n |w_j| ^2 - \frac{2p^2}{\mu\B} \sum_{1\leq i < j \leq n} \log \frac{1}{|w_i-w_j|} + \mathrm{constant}.
\end{align} 
\end{heuristic}

\begin{proof}[Argument]
We have the Gibbs variational principle
$$ 
F (\w)=  \min_{\nubf\in \mathcal{P}( \R^{2N})} \left\{\int_{\R^{2N}} H(\w;\z) \nubf (d\z) + T \int_{\R^{2N}} \nubf (\z) \log \left(\nubf (\z)\right) d\z\right\}
$$
where the minimum is over probability measures of the $\z$ coordinates. Neglecting the entropy term on the grounds that $T = N^{-1} \ll 1$ one finds 
\begin{equation}\label{eq:no entropy}
 F (\w) \simeq \min_{\z \in \R^{2N}} H(\w;\z).
\end{equation}
On the other hand, denoting 
\begin{equation}\label{eq:Coulomb}
 D \left( \varrho | \nu \right) = \iint_{\R^2\times \R^2}\varrho(x) \log\frac{1}{|x-y|} \nu(y) \,dx dy 
\end{equation}
the 2D Coulomb interaction between two charge densities, we can formally argue that 
\begin{equation}\label{eq:screening}
 H(\w;\z) \approx -  \frac{p}{\mu} \sum_{j=1} ^n |w_j| ^2 - \frac{2p^2}{\mu\B} \sum_{1\leq i < j \leq n} \log \frac{1}{|w_i-w_j|} + \frac{1}{\B} D\left( \rhot | \rhot \right) + \mbox{ constant}  
\end{equation}
where, with the choice $R = \sqrt{N\mu + np }$, 
\begin{equation}\label{eq:total charge}
\rhot := \sqrt{\mu} \sum_{j=1} ^N \delta_{z_j} + \frac{p}{\sqrt{\mu}} \sum_{j= 1} ^n \delta_{w_j} - \frac{1}{\pi \sqrt{\mu}} \1_{D(0,R)}
\end{equation}
is the total charge density of the fictitious plasma (in suitable units). Our convention (i.e. choice of the disk $D(0,R)$ of radius $R$) is that the system is neutral, 
\begin{equation}\label{eq:neutral}
 \int_{\R^2} \rhot  = 0. 
\end{equation}
We have used Newton's theorem to compute the potential generated by the background (last) term in~\eqref{eq:total charge}, and accepted two twists: (i) assuming that all the $\z$ and $\w$ charges lie within the disk $D(0,R)$ (ii) including in the constant term of~\eqref{eq:screening} several self-interactions like $D(\delta_{z_j}|\delta_{z_j})$. The latter are infinite, but they do not depend on the locations of the $\z$ or $\w$ particles. 

Using the neutrality condition~\eqref{eq:neutral} and Plancherel's formula we find, in Fourier space, 
$$ D\left( \rhot | \rhot \right) = \int_{\R^2} \frac{1}{2\pi |k|^2} \left| \widehat{\rhot} (k)\right|^2 dk$$
so that 
$$ 
D\left( \rhot | \rhot \right) \geq 0
$$
with equality if and only if $\rhot \equiv 0$ (see also~\cite[Chapter I, Lemma~1.8]{SafTot-97}). It follows that $\rhot \simeq 0$ for a minimizing configuration of the $\z$ charges. Inserting this in~\eqref{eq:no entropy}-\eqref{eq:screening}, we obtain the desired claim.
\end{proof}

Differentiating the right-hand side of~\eqref{eq:free ener} as in~\eqref{eq:vector pot}-\eqref{eq:scalar pot} yields the expected form of the main result~\eqref{eq:main estimate} (up to a singular term in the scalar potential that we discard for $w_i \neq w_j$),
respectively the conjectured relation~\eqref{eq:main estimate frac}.

The main difficulty to push this analogy to completion and provide a proof of our main results is that the precision of the screening Heuristic~\ref{heu:plasma} is not quite obvious, while we need to obtain extremely small remainders, and to be able to differentiate the results with respect to the tracers' location. The standard statistical mechanics~\cite{RouSerYng-13b} route analyzes a mean-field/small temperature limit for the classical plasma, and yields way too loose estimates. Even recent very refined analysis~\cite{ArmSer-20,BauBouNikYau-15,BauBouNikYau-16,LebSer-15,LebSer-16,Leble-15b,Serfaty-20} of such systems does not seem to provide what we would need. 

For these reasons we follow a different route, based on the well-known determinantal structure of the probability measure~\eqref{eq:plas ana} in the case $p=\mu=1$. 
Up to now the discussion was rather general and
applicable to support Conjecture~\ref{conj:fractional},
but we shall now heavily rely on special structures due to the state of the bath being a Slater (Vandermonde) determinant. The reach of the following discussion is thus limited to support Theorem~\ref{thm:transmutation}.

Hereafter we denote 
\begin{equation} \label{def:Psiqh void}
\Psi_{\rm qh}\left(\emptyset;\z\right) : =  \prod_{1\leq k < \ell \leq N} (z_k-z_\ell) \prod_{k=1}^N e^{-\B |z_k|^2/2}
\end{equation}
the wave-function of the bath in the absence of tracers and
\begin{equation} \label{def:cqh void}
 \cqh(\emptyset)^{-2} := \int_{\C^N} \left| \Psi_{\rm qh}(\emptyset ; \z)\right|^2 \d\z 
\end{equation}
the corresponding normalization constant. It is well-known that $\cqh(\emptyset)^{2} |\Psi_{\rm qh}(\emptyset;\z)|^2$ is the joint probability distribution of the eigenvalues of a random matrix drawn from  $\P_N$, the Ginibre ensemble~\cite{Ginibre-65}. A Ginibre random matrix is a $N\times N$ matrix filled with independent (complex) Gaussian random variables of variance $\B^{-1}$. 
Since such a matrix is non--Hermitian, its eigenvalues $\{\lambda_k\}_{k=1}^N$ form a point process\footnote{All the eigenvalues are distinct almost surely.} in $\C$. The latter is determinantal with a correlation kernel $\K_N$  which is given for any $\B>0$ and $N\in\N$ by 
\begin{equation} \label{def:KN}
\K_N(z,w) : = \sum_{0 \le j <N} \frac{\B^{j+1}}{\pi j!} z^j \overline{w}^j e^{-\B(|z|^2+|w|^2)/2}  , \qquad z,w\in\C .
\end{equation}
We review the properties of these kernels in Section~\ref{sect:kernel}. 
For now we point out that the (normalized) density of eigenvalues is $\rho(z) = N^{-1} \K_N(z,z)$ for any $z\in\C$. In particular, if we scale the variance so that $\B =N$, we have as $N\to+\infty$, 
\[
 \rho_N(z) \to \frac{1}{\pi} \1_{|z| < 1} \qquad\text{for almost all}\quad z\in\C,
\] 
which is known as the circular law~\cite{Girko-84,BorCha-12} and means that in the (macroscopic) regime where $\B =N$, the eigenvalues distribute uniformly in the unit disk $\D =\{ |x| <1\}$. This gives the equilibrium measure of the plasma discussed above (for $\w=\emptyset$) and its support $\D$ is conventionally called the \emph{droplet}.

We now denote 
\begin{equation}\label{eq:char pol}
Q_N(w) = \prod_{k=1}^N (w-\lambda_k) 
\end{equation}
the characteristic polynomial of a Ginibre random matrix.  Then, for any $n\in\N$ and $\w\in\C^n$, we have 
%($\cZ_N$ being the partition function of the Ginibre ensemble, cf~\eqref{def:Z} below)
\begin{equation} \label{charpoly}
\E_N\left[ {\textstyle\prod_{j=1}^n} |Q_N(w_j)|^2 \right] =   \cqh(\emptyset)^{2} \cqh(\w)^{-2}
% \mathcal{Z}_N^{-1} \int_{\C^N} \left| \Psi_{\rm qh}(\w ; \z)\right|^2 \d\z, 
\end{equation}
which provides a random matrix interpretation of the  partition function $ \cqh(\w)$ from \eqref{def:cqh}.  This connection with the characteristic polynomial of the Ginibre ensemble motivated the proof of the following result, taken from~\cite{AkeVer-03,Lambert-20}:

\begin{theorem}[\textbf{Exact expression for partition functions with quasi-holes}] \label{thm:exp}\mbox{}\\
For any  $\w \in\C^n$ with $w_1\neq \dots \neq w_n$, we have 
\begin{align}\label{eq:exact part}
\cqh(\w)^{-2} &= \int_{\C^N} \left| \Psi_{\rm qh}(\w, \z)\right|^2\d\z\nonumber \\
&=  N! \pi^{N+n} {\textstyle  \prod_{k=1}^{N+n-1} k! } \, \B^{-(N+n)(N+n+1)/2}  \det_{n\times n} \left[\K_{N+n}(w_i,w_j) \right]  \frac{\prod_{j=1}^n e^{\B|w_j|^2}}{\left| \triangle(\w)\right|^{2}} , 
\end{align}
where 
% \begin{equation} \label{def:Z}
%  \mathcal{Z}_N : = \frac{1}{N!} \int_{\C^N} |\Psi_{\rm qh}(\emptyset;\z)|^2 \d\z = \B^{-N(N-1)/2} {\textstyle \prod_{k=1}^{N-1}} k! 
% \end{equation}
% and  
\begin{equation}\label{eq:Vandermonde}
\triangle(\w) := \prod_{1\le i<j \le n}(w_j-w_i) 
\end{equation}
is a Vandermonde  determinant. 
\end{theorem}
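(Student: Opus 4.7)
The plan is to reduce the integral to a classical computation about the $n$-point correlation function of an enlarged Ginibre ensemble on $N+n$ points. The starting observation is the Vandermonde factorization: concatenating the $\z$ and $\w$ variables, one has
\begin{equation*}
\triangle(\z,\w) \;=\; \Bigl(\prod_{1\le i<j\le N}(z_j-z_i)\Bigr)\, \triangle(\w) \,\prod_{i=1}^{N}\prod_{k=1}^{n}(w_k-z_i),
\end{equation*}
so that, comparing with \eqref{def:Psiqh}, the integrand can be rewritten as
\begin{equation*}
|\Psi_{\rm qh}(\w;\z)|^2 \;=\; \frac{|\triangle(\z,\w)|^2}{|\triangle(\w)|^2}\prod_{k=1}^{N} e^{-\B|z_k|^2}.
\end{equation*}
Multiplying and dividing by $\prod_{j=1}^{n} e^{-\B|w_j|^2}$ turns the integrand (up to the explicit prefactor $\prod_j e^{\B|w_j|^2}/|\triangle(\w)|^2$) into the unnormalized joint density of $N+n$ Ginibre eigenvalues, with $n$ of them pinned at the $\w$ positions.

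Next, I would interpret the remaining integral over $\z$ as an $n$-point correlation function. Denoting $Z_{N+n}=\int_{\C^{N+n}}|\triangle(\xi)|^2\prod e^{-\B|\xi_k|^2}\d\xi$ the Ginibre partition function, the standard normalization of correlation functions yields
\begin{equation*}
\int_{\C^N}\! |\triangle(\z,\w)|^2 \prod_{k=1}^{N} e^{-\B|z_k|^2} \prod_{j=1}^{n} e^{-\B|w_j|^2}\d\z \;=\; \frac{N!\,Z_{N+n}}{(N+n)!}\, R_{n}(\w),
\end{equation*}
and the determinantal structure of the Ginibre ensemble (to be recalled in Section~\ref{sect:kernel}) gives $R_n(\w)=\det_{n\times n}[\K_{N+n}(w_i,w_j)]$.

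It remains to evaluate $Z_{N+n}$ explicitly. Using the identity $\det[\xi_k^{j-1}]_{j,k}=\triangle(\xi)$ and pulling out the normalizing constants from the orbitals \eqref{eq:orbitals}, we get
\begin{equation*}
\triangle(\xi)\prod_{k=1}^{N+n} e^{-\B|\xi_k|^2/2} \;=\; \prod_{j=0}^{N+n-1}\frac{\sqrt{\pi\, j!}}{\B^{(j+1)/2}}\;\det\!\bigl[\varphi_{j-1}(\xi_k)\bigr]_{j,k=1}^{N+n}.
\end{equation*}
By orthonormality of the $\varphi_j$ in $L^2(\C)$, the standard Gram-determinant identity gives $\int |\det[\varphi_{j-1}(\xi_k)]|^2\d\xi=(N+n)!$, so that
\begin{equation*}
Z_{N+n} \;=\; (N+n)!\, \pi^{N+n}\, \prod_{k=0}^{N+n-1} k!\cdot \B^{-(N+n)(N+n+1)/2}.
\end{equation*}
Combining the three ingredients yields the claimed identity (noting that $0!=1$ collapses the product to $\prod_{k=1}^{N+n-1}k!$).

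There is no real obstacle: the theorem is essentially an algebraic identity that follows from the Vandermonde factorization, the explicit Ginibre partition function, and the determinantal structure of the eigenvalue process. The only points requiring care are the bookkeeping of the $(N+n)!/N!$ factor from the definition of correlation functions against the $(N+n)!$ that comes out of the Slater-determinant integration, and noting that the hypothesis $w_i\neq w_j$ is only needed to keep the right-hand side well-defined ($\triangle(\w)\neq 0$), as the left-hand side extends continuously across the diagonal.
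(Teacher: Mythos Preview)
Your proposal is correct and follows essentially the same route as the paper's proof in Appendix~\ref{Appendix}: recognize $|\Psi_{\rm qh}(\w;\z)|^2$ as the unnormalized joint density of an $(N+n)$-point Ginibre ensemble with $n$ points pinned at $\w$, identify the $\z$-integral as the $n$-point correlation function, and evaluate the partition function via orthonormality of the orbitals~\eqref{eq:orbitals}. The only cosmetic difference is that the paper packages the determinantal identity $R_n(\w)=\det[\K_{N+n}(w_i,w_j)]$ as a lemma on reduced density matrices of Slater determinants proved via Cauchy--Binet, whereas you invoke the determinantal structure directly.
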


The proof is a consequence of the determinantal structure of the Ginibre ensemble; we recall that in Appendix~\ref{Appendix}. Observe now that, taking the $\log$ of~\eqref{eq:exact part} and differentiating as in~\eqref{eq:vector pot}-\eqref{eq:scalar pot} yields exactly the expected expressions for the emerging potentials, up to derivatives of the extra term 
\begin{equation}\label{eq:error}
 \log \left( \det_{n\times n} \left[\K_{N+n}(w_i,w_j) \right]\right). 
\end{equation}
We shall use detailed asymptotics for the correlation kernel $\K_{N+n}$, presented in Section~\ref{sect:Kasymp}, as our main tool to complete the proof of Theorem~\ref{thm:transmutation} based on the exact formulas~\eqref{eq:vector pot}-\eqref{eq:scalar pot}-\eqref{eq:exact part}. 
This is done in Section~\ref{sec:estimates}, while some applications of our main result are discussed in Section~\ref{sec:applications}.

\bigskip

\noindent \textbf{Acknowledgments:} Funding from the European Research Council (ERC) under the European Union's Horizon 2020 Research and Innovation Programme (Grant agreement CORFRONMAT No 758620) is gratefully acknowledged (N.R).
D.L. gratefully acknowledges financial support from the G\"oran Gustafsson Foundation (grant no. 1804) and the Swedish Research Council (grant no. 2021-05328, ``Mathematics of anyons and intermediate quantum statistics'').
G.L. research is supported by the SNSF Ambizione grant S-71114-05-01. This project was started at Institut Mittag-Leffler, on the occasion of the July 2019 workshop ``New directions in mathematics of Coulomb gases and quantum Hall effect'', whose organisers (Robert Berman, Gaetan Borot, Semyon Klevtsov, Sylvia Serfaty, Paul Wiegmann) we warmly thank.

\section{Calculation of the emerging potentials}\label{sec:calculus}

\subsection{Notation}\label{sec:not}

Recall that in the sequel of this article, we adopt the scaling $\B =N$. We have seen that it is also convenient to switch between planar and complex variables; we identify the coordinates $\by=(y_1,\ldots, y_n) \in \R^{2n}$ of the tracer particles with complex numbers  $\w=(w_1, \dots , w_n) \in\C^n$ and the coordinates $\bx=(x_1,\ldots, x_n) \in \R^{2N}$ with $\z=(z_1,\dots, z_N)\in\C^N$. With these conventions, $u(\by;\bx) = u(\w;\z)$ is the same function, without implying that $u$ does not depend on 
the complex conjugates $(\overline{\bw},\overline{\bz})$. 
If $z = x+\i y \in \C$, we denote 
\[
\partial_z = \frac{\partial_x - \i \partial_y}{2} , \qquad \partial_{\overline z} = \frac{\partial_x + \i \partial_y}{2}
\] 
so that 
\begin{equation}\label{eq:Lap}
\Delta_z = 4 \partial_z \partial_{\overline z} = 4\partial_{\overline z} \partial_z. 
\end{equation}
Moreover, we have 
\begin{equation}\label{eq:grad 1}
\nabla = \begin{pmatrix} \partial_x \\ \partial_y\end{pmatrix} 
= \begin{pmatrix} \partial_z + \partial_{\overline z} \\  \i(\partial_z-  \partial_{\overline z}) \end{pmatrix}
\qquad\text{and} \qquad 
\nabla^\perp = \begin{pmatrix} - \partial_y \\ \partial_x \end{pmatrix} 
= \begin{pmatrix} \i(\partial_{\overline z}-\partial_z) \\  \partial_z +  \partial_{\overline z} \end{pmatrix}
\end{equation}
Hence, if  $\phi : \C \to \R$ is smooth and $\psi : \C \to \C$ is analytic, then 
\begin{equation}\label{eq:grad 2}
\nabla \phi = 2 \begin{pmatrix} \Re \partial_z\phi  \\ -\Im \partial_z\phi  \end{pmatrix} 
= 2 \begin{pmatrix} \Re \partial_{\overline z}\phi  \\ \Im \partial_{\overline z}\phi  \end{pmatrix}
\qquad\text{and} \qquad 
\nabla \psi = \begin{pmatrix} 1 \\ \i \end{pmatrix}\partial_{z} \psi
\end{equation}
since $\partial_{\overline z} \psi =0$. Another consequence is that if  $\phi : \C \to \R$ is smooth, then
\begin{equation}\label{eq:grad 3}
 |\nabla \phi| ^2 = 4 |\partial_{z} \phi|^2. 
\end{equation}

\subsection{Expressions using the partition function}

Our starting point is the next proposition, which applies equally well to the fractional statistics case~\eqref{def:Psiqh frac}.

\begin{proposition}[\textbf{Expressions of the emerging potentials}]\label{prop:exp}\mbox{}\\
Fix $n,N=\B\in \N$ and $q\in\R$. Let the joint wave-function $\Psi_\Phi$ be as in \eqref{def:psi}, with a smooth $\Phi$. We have for any $j=1,\dots, n$, 
\begin{equation}\label{eq:emerg exact}
\int_{\R^{2(N+n)}} \left| \left(-\i \nabla_{y_j} - q \B y_j^\perp \right) \Psi_\Phi\right|^2 d\bx \,d\by
 =  \int_{\R^{2n}}   \left| \left(-\i \nabla_{y_j} + \mathcal{A}_j - q \B y_j^\perp \right) \Phi \right|^2 d\by 
 + \int_{\R^{2n}} |\Phi|^2 \mathcal{V}_j 
\end{equation}
where
\begin{align} \label{eq:vector pot bis}
\mathcal{A}_j (\w)  &= \cqh^2(\w) \Im \left( \int_{\R^{2N}} \overline{\Psi_{\rm qh} (\w;\z)}  \nabla_{y_j}  \Psi_{\rm qh} (\w;\z) \d\z \right)
\nonumber\\
&=  \frac{1}{2} \nabla^{\perp}_{w_j}   \log  \cqh^{-2}(\w).
\end{align}
and
\begin{align}\label{eq:scalar pot bis}
\mathcal{V}_j(\w)  &= \cqh^2(\w)  \int_{\R^{2N}}  \left|  \nabla_{y_j}  \Psi_{\rm qh} (\w;\z) \right|^2 
\d\z - \cqh^4(\w)   \left| \int_{\R^{2N}}  \overline{\Psi_{\rm qh} (\w;\z)} \nabla_{y_j}  \Psi_{\rm qh} (\w;\z) \d\z \right|^2 \nonumber\\
&= \frac{1}{2} \Delta_{w_j} \log \cqh^{-2}(\w)
\end{align}
where $\cqh^{-2}(\w)$ is the normalization constant defined in~\eqref{def:cqh}. 
\end{proposition}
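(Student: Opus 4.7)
The plan is to recognize \eqref{def:psi} as a Born--Oppenheimer factorization and carry out the standard fiber-integration computation of the kinetic energy, then to specialize to the explicit ansatz \eqref{def:Psiqh} by exploiting the holomorphy of $\Psi_{\rm qh}$ in the tracer variables. Set $\chi(\w;\z) := \cqh(\w) \Psi_{\rm qh}(\w;\z)$, which by \eqref{def:cqh} is normalized in $\z$: $\int_{\C^N} |\chi(\w;\z)|^2 d\z = 1$ for every $\w$. Writing $D_j := -\i \nabla_{y_j} - q\B y_j^\perp$ and applying the Leibniz rule yields $D_j \Psi_\Phi = (D_j\Phi)\chi - \i \Phi \nabla_{y_j} \chi$. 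Squaring the modulus and integrating in $\z$, the pure $|\chi|^2$ term contributes $|D_j\Phi|^2$ (using the normalization), the pure $|\nabla_{y_j} \chi|^2$ term contributes $|\Phi|^2 \int |\nabla_{y_j} \chi|^2 d\z$, and the cross term equals $2\Re\left[-\i\,\overline{D_j \Phi}\,\Phi\,\int \overline{\chi}\nabla_{y_j}\chi\,d\z \right]$. Differentiating the normalization identity in $y_j$ forces $\Re \int \overline{\chi} \nabla_{y_j} \chi \, d\z = 0$, so this integral equals $\i\,\mathcal{A}_j(\w)$ with $\mathcal{A}_j := \Im \int \overline{\chi} \nabla_{y_j}\chi \, d\z \in \R^2$. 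Completing the square then rearranges the result into $|(D_j + \mathcal{A}_j)\Phi|^2 + |\Phi|^2 \bigl(\int |\nabla_{y_j}\chi|^2 d\z - |\mathcal{A}_j|^2\bigr)$, which is \eqref{eq:emerg exact} with $\mathcal{V}_j := \int|\nabla_{y_j}\chi|^2 d\z - |\mathcal{A}_j|^2 \geq 0$.

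Second, one substitutes $\chi = \cqh(\w)\Psi_{\rm qh}(\w;\z)$ back into these expressions. The Leibniz rule applied to this product produces several terms involving $\nabla \cqh$. Differentiating $\int |\Psi_{\rm qh}|^2 d\z = \cqh^{-2}$ in $y_j$ gives the elementary identity $\Re \int \overline{\Psi_{\rm qh}}\nabla_{y_j}\Psi_{\rm qh}\, d\z = \tfrac{1}{2}\nabla_{y_j} \cqh^{-2} = -\cqh^{-3}\nabla_{y_j} \cqh$, and all $\nabla \cqh$ contributions then cancel exactly, leaving the first equalities in \eqref{eq:vector pot bis} and \eqref{eq:scalar pot bis} after direct rearrangement.

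Third, for the logarithmic-derivative forms (the second equalities in \eqref{eq:vector pot bis} and \eqref{eq:scalar pot bis}) the key structural input is that $\Psi_{\rm qh}(\w;\z)$ is holomorphic in each $w_j$, as is clear by inspection of \eqref{def:Psiqh}, so $\partial_{\bar w_j}\Psi_{\rm qh} = 0$. Differentiating $\cqh^{-2}(\w) = \int |\Psi_{\rm qh}|^2 d\z$ then gives $\partial_{w_j}\log \cqh^{-2} = \cqh^2 \int \overline{\Psi_{\rm qh}}\partial_{w_j}\Psi_{\rm qh}\, d\z$. The Wirtinger formulas \eqref{eq:grad 2} applied to $\Psi_{\rm qh}$ yield $\nabla_{y_j}\Psi_{\rm qh} = (1, \i)^\top \partial_{w_j}\Psi_{\rm qh}$, and splitting real and imaginary parts in the identity just displayed identifies $\tfrac12\nabla^\perp_{w_j}\log \cqh^{-2}$ with $\mathcal{A}_j$. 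For $\mathcal{V}_j$, expand $\tfrac12 \Delta_{w_j} = 2\partial_{w_j}\partial_{\bar w_j}$ acting on $\log \cqh^{-2}$: the Leibniz rule produces one term $\cqh^2 \int |\partial_{w_j}\Psi_{\rm qh}|^2 d\z$ (from hitting the inner integrand) and one term $-\cqh^4 |\int \overline{\Psi_{\rm qh}}\partial_{w_j}\Psi_{\rm qh}\,d\z|^2$ (from hitting the $\cqh^2$ prefactor), while the remaining potential terms vanish by antiholomorphy of $\overline{\Psi_{\rm qh}}$ and $\overline{\partial_{w_j}\Psi_{\rm qh}}$. The $(1,\i)^\top$ factor again converts each to the real-gradient form and matches the first equality in \eqref{eq:scalar pot bis}. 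No serious obstacle is expected: the computation is elementary, with the main care needed being consistent bookkeeping of real- vs.\ complex-valued gradients and a clean use of the identity $\Re \int \overline{\chi}\nabla_{y_j}\chi \, d\z = 0$ that kills the kinetic/gauge cross term. The conceptual point is that holomorphy of the bath wave function in the tracer variables is precisely what collapses both emergent potentials into $\tfrac12 \nabla^\perp_{w_j}$ and $\tfrac12 \Delta_{w_j}$ of the single scalar $\log\cqh^{-2}$, making the subsequent sharp random-matrix asymptotics in terms of $\K_{N+n}$ directly applicable.
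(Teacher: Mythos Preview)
Your proposal is correct and follows essentially the same route as the paper's proof: the same Born--Oppenheimer factorization $\chi=\cqh\Psi_{\rm qh}$ (the paper writes $\Xi$), the same use of $\Re\int\overline{\chi}\nabla_{y_j}\chi\,d\z=0$ from normalization to complete the square, the same substitution back into $\chi=\cqh\Psi_{\rm qh}$ with cancellation of the $\nabla\cqh$ terms, and the same appeal to holomorphy of $\Psi_{\rm qh}$ in $w_j$ to reduce both potentials to derivatives of $\log\cqh^{-2}$. The only cosmetic difference is that you separate the substitution step and the holomorphy step into two explicitly labeled stages, whereas the paper folds the substitution into its ``first expressions'' block; the logical content is the same.
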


The second expressions in~\eqref{eq:vector pot bis}-\eqref{eq:scalar pot bis} are those we announced already. The first ones will be useful as well. In particular, applying Cauchy-Schwarz to the first line of~\eqref{eq:scalar pot bis} we immediately have $\cV_j \geq 0$.

\begin{proof}
By symmetry, it suffices to treat the case $j=1$.  

\noindent\textbf{First expressions for the emerging potentials.} Some of the calculations below might be folklore in adiabatic theory. It is convenient to denote 
$$ \Xi (\by;\bx) = \cqh(\w) \Psi_{\rm qh}(\w;\z)$$
so that 
$$ \Psi_\Phi (\by;\bx) = \Phi (\by)\Xi (\by;\bx).$$
We first observe that 
\begin{equation}\label{eq:use norm}
\Re \left\{\int_{\R^{2N}} \Xi (\by;\bx) \overline{ \nabla_{y_1} \Xi (\by;\bx) } d\bx \right\} = 0
\end{equation}
since by definition 
$$\int_{\R^{2N}} |\Xi (\by;\bx)|^2 d\bx \equiv 1.$$
Then 
\begin{align}\label{eq:calc vec}
\left| \left(-\i \nabla_{y_1} - q \B y_1^\perp\right) \Psi_\Phi\right| ^ 2 &= \left| \Phi (-\i \nabla_{y_1} \Xi) + \Xi \left( -\i \nabla_{y_1} \Phi - q \B y_1 ^{\perp} \Phi \right)\right|^2 \nonumber \\
&=|\Xi| ^2 \left| \left(-\i \nabla_{y_1} - q \B y_1^\perp\right) \Phi \right| ^ 2 + |\Phi|^2 |\nabla_{y_1} \Xi|^2 \nonumber \\
&+ \Xi \overline{\Phi} \left(-\i \nabla_{y_1} - q \B y_1^\perp\right) \Phi \cdot \i \overline{\nabla_{y_1} \Xi} + \overline{\Xi} \Phi \overline{\left(-\i \nabla_{y_1} - q \B y_1^\perp\right) \Phi} \cdot (-\i \nabla_{y_1} \Xi).
\end{align}
With 
$$ 
\cA_1 (\by) := \i \int_{\R^{2N}} \overline{\Xi} \nabla_{y_1} \Xi d\bx 
$$
it follows from~\eqref{eq:use norm} that $\cA_1$ is real, and since so is $\cqh(\w)$, this definition coincides with the first expression in~\eqref{eq:vector pot bis}. Integrating~\eqref{eq:calc vec} with respect to $\bx$ and completing a square we find   
$$
\int_{\R^{2(N)}} \left| \left(-\i \nabla_{y_1} - q \B y_1^\perp \right) \Psi_\Phi\right|^2 
 =  \left| \left(-\i \nabla_{y_1} + \mathcal{A}_1 - q \B y_1^\perp \right) \Phi \right|^2   
 + |\Phi|^2 \left( \int_{\R^{2N}}|\nabla_{y_1} \Xi| ^2 - |\cA_1| ^2\right)
$$
which leads to~\eqref{eq:emerg exact} with 
$$ 
\cV_1 = \int_{\R^{2N}}|\nabla_{y_1} \Xi| ^2 - |\cA_1| ^2.
$$
To complete the proof, we note that~\eqref{eq:use norm} leads to 
\begin{equation}\label{eq:use norm 2} 
\cqh \Re \left( \int_{\R^{2N}} \overline{\Psi_{\rm qh}} \nabla_{y_1} \Psi_{\rm qh} \right) = - \left( \nabla_{y_1} \cqh\right) \int_{\R^{2N}} \left| \Psi_{\rm qh}\right|^2
\end{equation}
and hence
$$ \int_{\R^{2N}} |\nabla_{y_1} \Xi| ^2 =  \cqh^2 \int_{\R^{2N}} \left|\nabla_{y_1} \Psi_{\rm qh}\right|^2 - \left| \nabla_{y_1} \cqh \right|^2 \int_{\R^{2N}} \left| \Psi_{\rm qh}\right|^2.$$
On the other hand 
$$ |\cA_1| ^2 =  \cqh^4 \left| \int_{\R^{2N}} \overline{\Psi_{\rm qh}} \nabla_{y_1} \Psi_{\rm qh} \right|^2 -  \cqh^4 \Re \left( \int_{\R^{2N}} \overline{\Psi_{\rm qh} }  \nabla_{y_1}  \Psi_{\rm qh} \right)^2$$
so that, using~\eqref{eq:use norm 2} again and recalling that 
$$  \cqh^2 \int_{\R^{2N}} \left| \Psi_{\rm qh} \right|^2 \equiv 1$$
yields~\eqref{eq:emerg exact} with the expressions in the first lines of~\eqref{eq:vector pot bis} and~\eqref{eq:scalar pot bis}.

\medskip 

\noindent\textbf{Second expressions for the emerging potentials.} Since the function $w_1 \mapsto  \Psi_{\rm qh} (\w;\z)$ is analytic, we have 
\begin{equation}\label{eq:use analytic}
\nabla_{y_1}  \Psi_{\rm qh}  = { \begin{pmatrix} 1 \\ \i \end{pmatrix}}  \partial_{w_1}  \Psi _{\rm qh}  
\end{equation}
so that
\begin{equation} \label{A1}
\begin{aligned}
\cqh ^2  \Im \left( \int_{\R^{2N}} \overline{\Psiqh}  \nabla_{y_1}  \Psiqh  \right) 
& =  \cqh ^2  \Im \bigg(   \begin{pmatrix} 1 \\ \i \end{pmatrix} \partial_{w_1}   \int_{\C^N} \big|\Psiqh (\w;\z) \big|^2\d\z \bigg)  \\
& = \Im \Big(  {\footnotesize	 \begin{pmatrix} 1 \\ \i \end{pmatrix}} \partial_{w_1}   \log  \cqh^{-2}(\w)   \Big)  . 
\end{aligned}
\end{equation}
Where we used the first identity in~\eqref{eq:grad 2}, valid for real-valued functions. The second expression in~\eqref{eq:vector pot bis} follows. 

As regards the scalar potential, we use~\eqref{eq:grad 3} on real-valued functions and~\eqref{eq:Lap}. Combining with~\eqref{eq:use analytic} and~\eqref{def:cqh} again we find
\begin{equation*}
\begin{aligned} 
\mathcal{V}_1(\w)  & =\cqh^{2}(\w)   \int_{\C^N}  \big|  \nabla_{w_1}  \Psiqh(\w;\z)\big|^2 \d\z
- \cqh^{4}(\w)   \bigg| \int_{\C^N}  \overline{\Psiqh(\w;\z)} \nabla_{w_1}  \Psiqh(\w;\z)\d\z \bigg|^2  \\
& =  \frac{1}{2} \cqh^{2}(\w) \Delta_{w_1}\bigg(   \int_{\C^N}  \big| \Psiqh(\w;\z)\big|^2 \d\z \bigg)
- \frac{1}{2} \cqh^{4}(\w) \bigg|  \nabla_{w_1}   \int_{\C^N} \big|\Psiqh(\w;\z) \big|^2\d\z \bigg|^2 \\
& 
 = \frac{1}{2} \Delta_{w_1} \log \cqh^{-2}(\w).
\end{aligned}
\end{equation*}
This vindicates the second line of~\eqref{eq:scalar pot bis}.
\end{proof}

\subsection{Integral expressions}

We finally give a third expression of both potentials, less transparent but useful for some estimates below. 

\begin{notation}[\textbf{Remainders in the partition function}]\label{not:remain}\mbox{}\\
For any $n\in\N$ and $\w \in\C^n$, we denote  for $N\in \{1, \dots, \infty\}$, 
 \begin{equation} \label{def:Upsilon}
\Upsilon_N(\w) := \det \big[\pi \B^{-1} \K_{N+n}(w_i,w_j) \big]_{n \times n} 
\end{equation}
where $K_{N+n}$ is as in~\eqref{def:KN}. Theorem~\ref{thm:exp} can then be rephrased as 
\begin{equation}\label{cqh}
 \cqh^{-2} (\w) = \Gamma_N^n  \frac{\prod_{j=1}^n e^{\B|w_j|^2}}{\left| \triangle(\w)\right|^{2}} \Upsilon_N(\w) ,
 \qquad  \Gamma_N^n := N! \B^{-\frac{(N+n)(N+n-1)}{2} + N } \pi ^{N} {\textstyle   \prod_{k=1}^{N+n-1} k! }
\end{equation}
for all $\w \in \C^n$.
\end{notation}

Observe that since $K_{N+n}$ is a correlation kernel (see Section~\ref{sect:kernel} below), the matrix on the right-hand side of \eqref{def:Upsilon} is positive definite if and only if $w_1\neq \cdots \neq w_n$, 
and the normalization is such that for any $N,n\in\N$ and all $\w \in\C^n$, 
\begin{equation} \label{Upsilon1}
\Upsilon_N(\w)  \in [0,1] . 
\end{equation}
Indeed, by  Hadamard's inequality and the fact that $\K_{N+n}(w,w) \le \B/\pi$ for all $w\in\C$ (see Equation~\eqref{Kasymp2}), we have $\Upsilon_N(\w) \le 1$ for all $\w\in\C^n$. We use the function $\Upsilon_N$ to obtain integral formulae for the emerging potentials using complex coordinates.

\begin{proposition}[\textbf{Integral formulae for the emerging potentials}]\label{prop:int}\mbox{}\\
Fix $n,N \in\N$ and $\w\in \C^n$. With $\Upsilon=\Upsilon_N$, it holds for $j=1,\dots,n$, 
\begin{equation}\label{eq:vect int}
\mathcal{A}_j (\w)  =  N  \Im\bigg(  {\footnotesize	 \begin{pmatrix} 1 \\ \i \end{pmatrix}} \int_\C  \frac{\Upsilon(\w,z)}{\Upsilon(\w)} \frac{\d^2z}{\pi(w_j-z)} \bigg)
\end{equation}
and 
\begin{equation}\label{eq:scal int}
\mathcal{V}_j(\w) =  \frac{2N}{\Upsilon(\w)} \bigg( \int_\C  \frac{\Upsilon(\w,z)}{|w_j-z|^2} \frac{\d^2z}{\pi} 
  + \frac{N}{\Upsilon(\w)}  \iint_{\C^2} \frac{\Upsilon(\w,z,\zeta)\Upsilon(\w)- \Upsilon(\w,z) \Upsilon(\w,\zeta) }{ (w_j-z)(\overline{w_j-\zeta})} \frac{\d^2z}{\pi} \frac{\d^2\zeta}{\pi} \bigg) . 
\end{equation}
\end{proposition}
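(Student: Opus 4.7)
The plan is to bypass the compact ``second expressions'' in Proposition~\ref{prop:exp} and work instead from its first integral representations of $\cA_j$ and $\cV_j$, reinterpreting the resulting $\z$-integrals as expectations against the correlation functions of the probability measure
\[
\mu_{N,\w}(\d\z) \;:=\; \cqh(\w)^2\,|\Psiqh(\w;\z)|^2\,\d\z
\]
on $\C^N$. First I would exploit analyticity of $\Psiqh$ in each $w_j$: from~\eqref{def:Psiqh} one reads $\nabla_{y_j}\Psiqh = \begin{pmatrix}1\\\i\end{pmatrix}\partial_{w_j}\Psiqh$ together with the logarithmic-derivative identity $\partial_{w_j}\Psiqh/\Psiqh = \sum_{k=1}^N (w_j - z_k)^{-1}$. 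Substituting these into~\eqref{eq:vector pot bis}--\eqref{eq:scalar pot bis} and splitting the squared sum $|\sum_k (w_j-z_k)^{-1}|^2$ into its diagonal ($k=\ell$) and off-diagonal ($k\neq\ell$) contributions turns the two potentials into
\begin{align*}
\cA_j(\w) &= \Im\begin{pmatrix}1\\\i\end{pmatrix}\int_\C \frac{\rho^{(1)}(z)}{w_j - z}\,\d^2 z,\\
\cV_j(\w) &= 2\int_\C \frac{\rho^{(1)}(z)}{|w_j-z|^2}\,\d^2 z \;+\; 2\iint_{\C^2}\frac{\rho^{(2)}(z,\zeta) - \rho^{(1)}(z)\rho^{(1)}(\zeta)}{(w_j-z)\overline{(w_j-\zeta)}}\,\d^2 z\,\d^2\zeta,
\end{align*}
where $\rho^{(k)} = \rho^{(k)}_{N,\w}$ denotes the $k$-point intensity of $\mu_{N,\w}$.

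Next I would identify these intensities explicitly. The factorisation $|\Psiqh(\w;\z)|^2 = |\triangle(\w)|^{-2}\prod_j e^{-\B|w_j|^2}\,|\Psiqh(\emptyset;\w,\z)|^2$ shows that $\mu_{N,\w}$ is precisely the conditional distribution of $N$ eigenvalues of an $(N+n)\times(N+n)$ Ginibre matrix given that the other $n$ eigenvalues are pinned at $w_1,\ldots,w_n$. Since this ensemble is determinantal with kernel $\K_{N+n}$, the classical formula for conditional correlation functions of a determinantal point process reads
\[
\rho^{(k)}_{N,\w}(\zeta_1,\ldots,\zeta_k) \;=\; \frac{\det\bigl[\K_{N+n}(m_i,m_j)\bigr]_{(n+k)\times(n+k)}}{\det\bigl[\K_{N+n}(w_i,w_j)\bigr]_{n\times n}},
\]
with $(m_1,\ldots,m_{n+k})=(w_1,\ldots,w_n,\zeta_1,\ldots,\zeta_k)$. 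Factoring $(\pi\B^{-1})^{k}=(\pi/N)^k$ out of numerator and denominator converts both determinants into the corresponding $\Upsilon$'s (with $\Upsilon(\w,\cdot)$ understood as the same $\K_{N+n}$-determinant merely enlarged by the extra arguments), yielding $\rho^{(1)}(z)=(N/\pi)\,\Upsilon(\w,z)/\Upsilon(\w)$ and $\rho^{(2)}(z,\zeta)=(N/\pi)^2\,\Upsilon(\w,z,\zeta)/\Upsilon(\w)$.

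Plugging these ratios into the displayed formulas of the first step and absorbing each $1/\pi$ into the measure $\d^2 z/\pi$ produces~\eqref{eq:vect int} and~\eqref{eq:scal int} after purely algebraic rearrangement. The only non-routine ingredient is the conditional-correlation formula; I would either invoke it from standard determinantal-process theory, or derive it directly from Theorem~\ref{thm:exp} by induction on $k$, iterating the marginalisation identity
\[
\int_\C \det\bigl[\K_{N+n}(m_i,m_j)\bigr]_{k\times k}\,\d^2 m_k \;=\; (N+n-k+1)\,\det\bigl[\K_{N+n}(m_i,m_j)\bigr]_{(k-1)\times(k-1)},
\]
which itself follows from the reproducing property $\int_\C \K_{N+n}(z,w)\K_{N+n}(w,z')\,\d^2 w = \K_{N+n}(z,z')$ of the Ginibre projection kernel. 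The one subtle bookkeeping check is that the coefficient in front of $\rho^{(2)}$ works out to $N^2$ (and not $N(N-1)$): the local two-point intensity scales like $N^2$, and the $-N$ ``Pauli-hole'' correction sits entirely inside the combination $\rho^{(2)}-\rho^{(1)}\otimes\rho^{(1)}$ appearing in $\cV_j$, consistently with the total masses $\int\rho^{(1)}=N$ and $\iint\rho^{(2)}=N(N-1)$.
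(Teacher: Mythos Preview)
Your proposal is correct and, at bottom, follows the same route as the paper: both start from the first integral expressions in Proposition~\ref{prop:exp}, use the analyticity identity $\partial_{w_j}\Psiqh/\Psiqh=\sum_k(w_j-z_k)^{-1}$, and then reduce the resulting $\z$-integrals to ratios of $\K_{N+n}$-determinants via Theorem~\ref{thm:exp}. The difference is one of packaging. The paper performs the reduction by hand: it rewrites each summand of $\nabla_{w_1}\Psiqh$ as a product involving $\Psi_{\rm qh}(\w_{(\ell)};\z^{(\ell)})$ with one extra quasi-hole and one fewer bath particle, and then applies~\eqref{cqh} at parameters $(n+1,N-1)$ and $(n+2,N-2)$ to obtain the $\Upsilon$-ratios. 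You instead recognise the relevant $\z$-integrals as the one- and two-point intensities of $\mu_{N,\w}$, identify this measure as the reduced Palm measure of the $(N{+}n)$-point Ginibre process, and invoke the standard determinantal formula $\rho^{(k)}_{\w}=\det[\K_{N+n}]_{(n+k)}/\det[\K_{N+n}]_{n}$. This is cleaner and more conceptual, and it makes the bookkeeping (in particular your observation that the coefficient is $N^2$ rather than $N(N-1)$) transparent; the paper's hands-on derivation has the advantage of being self-contained, effectively re-proving the Palm formula in the special case needed without citing outside DPP theory. Your remark that $\Upsilon(\w,z)$ and $\Upsilon(\w,z,\zeta)$ are to be read as the enlarged determinants with the \emph{same} kernel $\K_{N+n}$ is exactly the convention the paper uses (implicitly, via $\Upsilon_{N-1}$ on $\C^{n+1}$ and $\Upsilon_{N-2}$ on $\C^{n+2}$), so there is no discrepancy there.
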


\begin{proof}[Proof of~\eqref{eq:vect int}]
We start from the first expression in~\eqref{eq:vector pot bis} and the definition~\eqref{def:Psiqh}. Since the function $\Psi_{\rm qh}(\w;\z)$ is analytic in $w_1 \in\C$, we have
\[\begin{aligned}
\nabla_{w_1}\Psi_{\rm qh}(\w;\z)  
&= \begin{pmatrix} 1 \\ \i \end{pmatrix} \partial_{w_1}\Psi_{\rm qh}(\w;\z)  \\
& = \begin{pmatrix} 1 \\ \i \end{pmatrix} \sum_{\ell=1}^N  \frac{1}{w_1-z_\ell}  \prod_{k=1}^N e^{-\B |z_k|^2/2}\prod_{j=1}^n (w_j-z_k)  \prod_{q=k+1}^N (z_q-z_k)  \\
&= \begin{pmatrix} 1 \\ \i \end{pmatrix} 
 \sum_{\ell=1}^N  (-1)^{N-\ell} e^{-\B |z_\ell|^2/2}  \prod_{j=2}^n (w_j-z_\ell) 
\prod_{\substack{k=1 \\ k\neq \ell}}^N e^{-\B |z_k|^2/2}\prod_{\substack{j=1 \\ w_{n+1} = z_\ell}}^{n+1} (w_j-z_k)  \prod_{\substack{q=k+1\\ q\neq \ell}}^N (z_q-z_k) . 
\end{aligned}\]
This shows that 
\begin{equation} \label{def:Psi'}
\nabla_{w_1}\Psi_{\rm qh}(\w;\z) =   \begin{pmatrix} 1 \\ \i \end{pmatrix} 
 \sum_{\ell=1}^N  (-1)^{N-\ell}  e^{-\B |z_\ell|^2/2} \prod_{j=2}^n (w_j-z_\ell)  \Psi_{\rm qh}(\w_{(\ell)};\z^{(\ell)}) 
\end{equation}
where $\w_{(\ell)} = (w_1,\dots, w_{n},z_\ell)$ and $\z^{(\ell)} = \z \setminus\{z_\ell\}$. 
Similarly, we check that for any $\ell=1,\dots, N$, 
\[
\Psi_{\rm qh}(\w;\z) =  (-1)^{N-\ell}   e^{-\B |z_\ell|^2/2} \prod_{j=1}^n (w_j-z_\ell) 
 \Psi_{\rm qh}(\w_{(\ell)};\z^{(\ell)})  . 
  \]
Hence, we deduce from the previous formulae that for any $\w\in\C^n$ with $w_1\neq \cdots \neq w_n$,  
\begin{equation*} 
\int_{\C^{N}}  \overline{\Psi_{\rm qh}(\w;\z)  } \nabla_{w_1}  \Psi_{\rm qh}(\w;\z)   \d\z  
=  \begin{pmatrix} 1 \\ \i \end{pmatrix}  N  \int_\C \frac{e^{-\B |z|^2}}{w_1-z}  \prod_{j=1}^n |w_j-z|^2\int_{\C^{N-1}} \big| \Psi_{\rm qh}(\w';\z')\big|^2 \d\z' \d z
\end{equation*}
where we used the symmetry with respect to $\z \in\C^N$ and we let  $\w' = (w_1,\dots, w_{n},z)$, $\z' = (z_1,\dots, z_{N-1})$.

\medskip

Moreover, using Theorem~\ref{thm:exp} formulated as in~\eqref{cqh}, we obtain the ratio
 \begin{align} \label{exp0}
  \cqh^2(\w) \int_{\C^{N-1}} \big| \Psi_{\rm qh}(\w';\z')\big|^2 \d\z'  
& =  \frac{\Gamma_{N-1}^{n+1}}{\Gamma_{N}^n} \left|  \frac{\Delta(\w)}{\Delta(\w')} \right|^2  \frac{\Upsilon(\w')}{\Upsilon(\w)} e^{\B|z|^2}
% & 
% =  \frac{\Upsilon(\w')}{\Upsilon(\w)} \prod_{j=1}^n |w_j-z|^{-2}  e^{N|z|^2} , 
\end{align}
where $\Upsilon$ is as in~\eqref{def:Upsilon} and  $\frac{\Gamma_{N-1}^{n+1}}{\Gamma_{N}^n} = 1/\pi$.  This implies that 
\begin{equation*} 
 \cqh^2(\w) \int_{\C^{N}}  \overline{\Psi_{\rm qh}(\w;\z)  } \nabla_{w_1}  \Psi_{\rm qh}(\w;\z)   \d\z  
=  \begin{pmatrix} 1 \\ \i \end{pmatrix}  N  \int_\C \frac{1}{w_1-z} \frac{\Upsilon(\w')}{\Upsilon(\w)} \frac{\d z}{\pi}
\end{equation*}
By \eqref{eq:vector pot bis}, this completes the proof of the integral formula \eqref{eq:vect int}.
\end{proof}

\begin{proof}[Proof of~\eqref{eq:scal int}]
According to \eqref{eq:scalar pot bis}, there only remains to show that
\begin{equation}\label{eq:scal int bis}
 \cqh^2(\w) \int_{\C^N}  \big|\nabla_{w_1}\Psi_{\rm qh}(\w;\z) \big|^2 \d\z
=  \frac{2N}{\Upsilon(\w)} \bigg( \int_\C  \frac{\Upsilon(\w,z)}{|w_1-z|^2} \frac{\d^2z}{\pi} 
  + N  \iint_{\C^2} \frac{\Upsilon(\w,z,\zeta)}{ (w_1-z)(\overline{w_1-\zeta})} \frac{\d^2z}{\pi} \frac{\d^2\zeta}{\pi} \bigg). 
 \end{equation}
 
Using formula \eqref{def:Psi'}, we obtain
\[
 \big|\nabla_{w_1}\Psi_{\rm qh}(\w;\z) \big|^2 
= 2 \bigg|  \sum_{\ell=1}^N e^{-\B |z_\ell|^2/2}  (-1)^{\ell}  \prod_{j=2}^n (w_j-z_\ell) \Psi_{\rm qh}(\w_{(\ell)};\z^{(\ell)}) \bigg|^2
\]
where $\w_{(\ell)} = (w_1,\dots, w_{n},z_\ell)$ and $\z^{(\ell)} = \z \setminus\{z_\ell\}$. 
Expanding the square, this leads to
\begin{multline*}
 \big|\nabla_{w_1}\Psi_{\rm qh}(\w;\z) \big|^2 =
\\ 4 \sum_{1\le\ell<k\le N}  e^{-\B(|z_\ell|^2+|z_k|^2)/2} (-1)^{k-\ell}  
\Re\bigg\{ \prod_{j=2}^n (w_j-z_\ell) \overline{(w_j-z_k)} \Psi_{\rm qh}(\w_{(\ell)};\z^{(\ell)}) \overline{\Psi_{\rm qh}(\w_{(k)};\z^{(k)})}  \bigg\} \\
\qquad + 2 \sum_{\ell=1}^N e^{-\B |z_\ell|^2}  \prod_{j=2}^n |w_j-z_\ell|^2 \big| \Psi_{\rm qh}(\w_{(\ell)};\z^{(\ell)}) \big|^2 . 
\end{multline*}
One may also check that for any $k=1,\dots, N$ with $k\neq\ell$,  
\[
\Psi_{\rm qh}(\w_{(\ell)};\z^{(\ell)})= (-1)^{N-\ell} e^{-\B |z_k|^2/2}(z_\ell-z_k) \prod_{j=1}^{n} (w_j-z_r) 
  \Psi_{\rm qh}(\w_{(\ell,k)};\z^{(\ell,k)})  , 
  \]
where $\w_{(\ell,k)}= (w_1,\dots, w_{n},z_\ell, z_k)$ and $\z^{(\ell,k)} = \z \setminus\{z_\ell, z_k\}$. Using this, we obtain
\begin{multline*}
 \big|\nabla_{w_1}\Psi_{\rm qh}(\w;\z) \big|^2 
  =\\  4 \sum_{1\le\ell<k\le N}   e^{-\B(|z_\ell|^2+|z_k|^2)} \Re\bigg\{ \frac{|z_\ell-z_k|^2}{ (w_1-z_\ell)(\overline{w_1-z_k})}\bigg\}\prod_{j=1}^n |w_j-z_\ell|^2 |w_j-z_k|^2 \big|  \Psi_{\rm qh}(\w_{(\ell,k)};\z^{(\ell,k)}) \big|^2 \\
  \qquad + 2 \sum_{\ell=1}^N e^{-\B |z_\ell|^2}  \prod_{j=2}^n |w_j-z_\ell|^2 \big| \Psi_{\rm qh}(\w_{(\ell)};\z^{(\ell)}) \big|^2 . 
\end{multline*}
Hence, by symmetry with respect to permutations in the variable $\z$, this implies that 
 \begin{multline} \label{exp1}
\int_{\C^N}  \big|\nabla_{w_1}\Psi_{\rm qh}(\w;\z) \big|^2 \d\z
\\=  2 \pi \int_\C  \prod_{j=2}^n |w_j-z|^2 \bigg( \int_{\C^{N-1}}\big| \Psi_{\rm qh}(\w';\z') \big|^2 \d\z'  \bigg) \phi_\B(\d z) +  2\pi^2 \frac{N-1}{N} \\
 \iint_{\C^2}\Re\bigg\{ \frac{|z-\zeta|^2}{ (w_1-z)(\overline{w_1-\zeta})}\bigg\}\prod_{j=1}^n |w_j-z|^2 |w_j-\zeta|^2
\bigg( \int_{\C^{N-2}} \big|  \Psi_{\rm qh}(\w'';\z'') \big|^2  \d\z'' \bigg) \phi_\B(\d z)\phi_\B(\d\zeta), 
\end{multline}
where $\w'=(w_1,\dots, w_n,z)$, $\w''=(w_1,\dots, w_n,z,\zeta)$, $\z'=(z_1,\dots, z_{N-1})$, $\z''=(z_1,\dots, z_{N-2})$ and 
$$\phi_{\B} = \frac{ \B e ^{-\B|x|^2}}{\pi}.$$ 
Then, using~\eqref{cqh},  we obtain for  $\w\in\C^n$, 
 \[
 \cqh^2(\w) \int_{\C^{N-2}} \big|  \Psi_{\rm qh}(\w'';\z'') \big|^2  \d\z''   =  \frac{\Gamma_{N-2}^{n+2}}{\Gamma_{N}^n} \frac{\Upsilon(\w'')}{\Upsilon(\w)}
 \frac{| \triangle(\w)|^2}{| \triangle(\w'')|^2}  e^{\B(|z|^2+|\zeta|^2)} .
\]
Since $ \frac{| \triangle(\w)|^2}{| \triangle(\w'')|^2} |z-\zeta|^2\prod_{j=1}^n |w_j-\zeta|^2 |w_j-z|^2 =1$ and $\frac{\Gamma_{N-2}^{n+2}}{\Gamma_{N}^n} =  \frac{N/\pi^2}{N-1}$, by combining this formula with \eqref{exp0} and \eqref{exp1}, we conclude that
\begin{multline*}
\int_{\C^N}  \big|\nabla_{w_1}\Psi_{\rm qh}(\w;\z) \big|^2 \d\z
= 2N \int_\C \frac{1}{|w_1-z|^2} \frac{\Upsilon(\w')}{\Upsilon(\w)}  \frac{\d z}{\pi} \\
+ 2 N^2  \iint_{\C^2}\Re\bigg\{ \frac{\Upsilon(\w'')/\Upsilon(\w)}{ (w_1-z)(\overline{w_1-\zeta})}\bigg\}  \frac{\d^2z}{\pi} \frac{\d^2\zeta}{\pi} . 
\end{multline*}
This completes the proof of formula \eqref{eq:scal int bis} since the second integral on the right-hand side of \eqref{eq:scal int bis} is real-valued (by symmetry $z \leftrightarrow \zeta$). 
\end{proof}

\section{Correlation kernels} \label{sect:kernel}

In this section, we review the main properties of the correlation kernel of the Ginibre ensemble, and give first useful consequences thereof. To unify the presentation we use the Lowest Landau Level~\eqref{eq:intro LLL} as our starting point. This is a special case (for a Gaussian weight) of the usual Bergman space used in the context of (planar) orthogonal polynomials  and normal random matrices, e.g.~\cite{AmeHedMak-11}.

\subsection{The lowest Landau level and its kernel} \label{sect:Kprop}

Using the lowest Landau level orbitals from~\eqref{eq:orbitals}, we can rewrite the (Ginibre) correlation kernel from~\eqref{def:KN} as
$$
\K_N(z,w) = \sum_{j=0} ^{N-1} \varphi_j (z) \overline{\varphi_j (w)} .
$$
Hence, $\K_N$ is the integral kernel of the $L^2$-orthogonal projector onto the subspace of the lowest Landau level~\eqref{eq:intro LLL} spanned by eigenfunctions of the angular momentum with eigenvalues less than or equal to $N-1$. 
The natural large-$N$ bulk limit of $\K_N$ is 
\begin{equation} \label{def:K}
 \K_\infty(z,w) =  \frac{\B}{\pi} e^{-\frac{\B}{2}\left( |z|^2 + |w|^2 - 2 z \overline{w}\right)},
\end{equation}
the integral kernel of the $L^2$-orthogonal projector onto the full $\LLL$; see e.g.~\cite{RouYng-19} and references therein. We record some straightforward properties that we will extensively use in the sequel: 

\begin{properties}[\textbf{The Bergman kernel $\K_\infty$}]\label{pro:Kinfty}\mbox{}\\ With $\B=N$, we have the following identities:
\begin{equation} \label{Kmodule}
| \K_\infty(z,w) | = \frac{N}{\pi} e^{-N|z-w|^2/2}.
\end{equation} 
More generally, any (mixed) derivative of the kernel  $\K_\infty$ equals a polynomial in $(z,w)$ times $\K_\infty (z,w)$. In particular, for any multi-index $\alpha \in \N_0^4$, uniformly for all $z,w \in \D$ with $|z-w| \ge 2\delta_N= 2\kappa \sqrt{\frac{\log N}{N}}$, we have
\[
\big| \partial_{\overline{z}}^{\alpha_1} \partial_z^{\alpha_2}   \partial_{\overline{w}}^{\alpha_3}  \partial_w^{\alpha_4} \,\K_\infty(z,w) \big| = \O_\alpha\big(N^{1+|\alpha|-2\kappa^2}\big) . 
\]
where $|\alpha| = \alpha_1  + \alpha_2 + \alpha_3 + \alpha_4$.
\end{properties}

Since $K_N,K_\infty$ are integral kernels of orthogonal projectors we also have the following formulae:

\begin{properties}[\textbf{Reproducing kernels}]\label{pro:reprod}\mbox{}\\
For any $N \in \{1,2,\cdots, \infty \}$ and for any  (analytic) polynomial $f$ of degree (strictly) less than $N$,
\begin{equation*} \label{reproducing0}
\int_\C\K_N (z,w) f(w) e^{-\frac{\B}{2} |w|^2} dw = f(z) e^{-\frac{\B}{2} |z|^2} , \qquad z\in \C.
\end{equation*}
Hence, for all $z,w\in \C$
\begin{equation} \label{reproducing1}
\int_\C \K_N(z,x) \K_N(x,w) \,dx  = \K_N (z,w). 
\end{equation}
Moreover, if $f(z)$ is a polynomial of degree $\le N$ such that $f(x) =0$, we have for any $x\in\C$,
\begin{equation} \label{reproducing4}
\int_\C \frac{f(w)}{w-x}  \K_N(z,w) e^{-\frac{\B}{2} |w|^2} dw  = \frac{f(z)}{z-x} e^{-\frac{\B}{2} |z|^2} 
\end{equation}
where the right-hand side is interpreted as $f'(x)e^{-\frac{\B}{2} |x|^2}$ when $z=x$, with $f' = \partial_z f$.
\end{properties}

We also easily verify that for any $\B>0$ and $N\in\N$, 
\begin{equation} \label{Kasymp2}
0< \K_N(z,z) \le  \K_\infty (z,z)    = \frac{\B}{\pi} \qquad \text{for all } z\in\C
\end{equation}
and 
\begin{equation} \label{reproducing2}
\int_\C \K_N(z,z)  dz = N, \quad \int_\C |\K_N(z,w)|^2 dz dw  = N. 
\end{equation}
The latter integrals are indeed respectively the trace-class and Hilbert-Schmidt norms of the corresponding orthogonal, rank $N$, projectors. Finally, we record that by the Cauchy--Schwarz inequality, we have for any $N\in\N$ and $z,w\in\C$, 
\begin{equation} \label{reproducing3}
|\K_N(z,w) |^2 \le \K_N(z,z) \K_N(w,w) \leq \frac{\B^2}{\pi^2} . 
\end{equation}

\subsection{Large $N$ asymptotics of correlation kernels} \label{sect:Kasymp}

Recall that  
$$\delta_N(\kappa) = \kappa \sqrt{\frac{\log N}{N}} < 1$$
 for $\kappa\ge 0$ and that we always set $\B= N$ sufficiently large. Up to the extra factor $\sqrt{\log N}$, this corresponds to the \emph{microscopic scale} that is, the inter-particle distance inside the plasma.

\begin{lemma}[\textbf{Large $N$ asymptotics of correlation kernels}]\label{lem:ker}\mbox{}\\
For any $\delta\in(0,1]$ (possibly depending on $N>0$), we have for any $z,w\in\C$ with $|w| \le 1-\delta$, 
\begin{equation} \label{Kasymp0}
|\K_{N+n}(z,w)| \le \frac{N}{\pi} e^{-N( |z| - 1+\delta)_+^2/2} . 
\end{equation}
Moreover, for any multi-index $ \alpha \in \N_0^4$, uniformly for all  $|z|,|w|  \le 1-  \delta_N(\kappa)$, we have 
\begin{equation} \label{Kasymp1}
\partial_{\overline{z}}^{\alpha_1} \partial_z^{\alpha_2}   \partial_{\overline{w}}^{\alpha_3}  \partial_w^{\alpha_4} \, \K_{N+n}(z,w) 
= \partial_{\overline{z}}^{\alpha_1} \partial_z^{\alpha_2}   \partial_{\overline{w}}^{\alpha_3}  \partial_w^{\alpha_4} \,\K_\infty(z,w) + \cO_\alpha\big(N^{1+|\alpha|-2\kappa^2}\big) 
\end{equation}
where $|\alpha| = \alpha_1  + \alpha_2 + \alpha_3 + \alpha_4$.
\end{lemma}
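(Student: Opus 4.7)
The plan is to prove both bounds by direct manipulation of the explicit series representation of $\K_{N+n}$, combined with Chernoff-type tail estimates for Poisson distributions to control the deviation from the Bergman kernel $\K_\infty$.

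For the pointwise upper bound \eqref{Kasymp0}, my first step is to apply the triangle inequality to the definition \eqref{def:KN} and dominate the truncated exponential by the full series:
\begin{equation*}
|\K_{N+n}(z,w)| \leq \frac{N}{\pi} e^{-N(|z|^2+|w|^2)/2} \sum_{j < N+n} \frac{(N|zw|)^j}{j!} \leq \frac{N}{\pi} e^{-N(|z|-|w|)^2/2}.
\end{equation*}
Assuming $|w| \leq 1-\delta$, I split cases: if $|z| \geq 1-\delta$, then $|z|-|w| \geq (|z|-1+\delta)_+$, giving \eqref{Kasymp0} directly; if $|z| < 1-\delta$, the claim reduces to the trivial bound $|\K_{N+n}(z,w)| \leq N/\pi$ from \eqref{Kasymp2} and \eqref{reproducing3}.

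For the asymptotic expansion \eqref{Kasymp1} with $\alpha=0$, I write the difference as the deficit series
\begin{equation*}
\K_\infty(z,w) - \K_{N+n}(z,w) = \sum_{j \geq N+n} \varphi_j(z)\overline{\varphi_j(w)},
\end{equation*}
and apply Cauchy--Schwarz in the index $j$, reducing the estimate to bounding the diagonal deficit $D_N(z) := \K_\infty(z,z)-\K_{N+n}(z,z) = (N/\pi)\P(Y_z \geq N+n)$, with $Y_z \sim \mathrm{Poisson}(N|z|^2)$. For $|z| \leq 1-\delta_N(\kappa)$, one has $N|z|^2 \leq N(1-\delta_N)^2 = N - 2N\delta_N + N\delta_N^2$, so the Poisson mean falls below the cutoff $N+n$ by at least $2N\delta_N$ for $N$ large. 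A standard Chernoff bound then gives $\P(Y_z \geq N+n) \leq \exp(-2\kappa^2 \log N\cdot(1+o(1))) = N^{-2\kappa^2+o(1)}$, so $D_N(z) \leq \pi^{-1} N^{1-2\kappa^2+o(1)}$ and the $\alpha=0$ case follows. The key arithmetic point is that the expansion $(1-\delta_N)^2 = 1 - 2\delta_N + \O(\delta_N^2)$ produces a factor-two gap, doubling the exponent from $\kappa^2$ to the claimed $2\kappa^2$.

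For general $\alpha$, I plan to factor
\begin{equation*}
\K_\infty(z,w) - \K_{N+n}(z,w) = e^{-N(|z|^2+|w|^2)/2}\, g(z\bar w), \qquad g(x) := \frac{N}{\pi}\sum_{j \geq N+n}\frac{(Nx)^j}{j!},
\end{equation*}
with $g$ entire. Since $z\bar w$ is analytic in $z$ and anti-analytic in $w$, derivatives $\partial_{\bar z}$ and $\partial_w$ only hit the Gaussian and produce multiplicative factors $-Nz/2$, $-N\bar w/2$; derivatives $\partial_z$ and $\partial_{\bar w}$ produce, via product and chain rule, either $-N\bar z/2$, $-Nw/2$ (Gaussian) or $\bar w$, $z$ (chain factors times $g'$). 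Hence any derivative of order $|\alpha|$ becomes a finite linear combination of terms $P_\alpha(z,w,\bar z,\bar w)\, e^{-N(|z|^2+|w|^2)/2}\, g^{(k)}(z\bar w)$ with $0 \leq k \leq \alpha_2+\alpha_3$, where $|P_\alpha| \leq C_\alpha N^{|\alpha|-k}$ uniformly on $\{|z|,|w| \leq 1\}$. Differentiating the series for $g$ termwise and reapplying the Chernoff argument above (with cutoff shifted by $k = \O(1)$, harmless since $|\alpha|$ is fixed) yields $|g^{(k)}(x)| \leq C N^{1+k-2\kappa^2+o(1)} e^{N|x|}$ for $|x| \leq (1-\delta_N)^2$; combining with $e^{-N(|z|^2+|w|^2)/2} e^{N|zw|} \leq 1$ gives $|\partial^\alpha(\K_{N+n}-\K_\infty)| = \O(N^{1+|\alpha|-2\kappa^2})$ as required. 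The main obstacle throughout is careful bookkeeping: securing the factor $2\kappa^2$ via the Taylor expansion of $(1-\delta_N)^2$, and tracking the maximal power of $N$ arising through repeated differentiation of the factored form; all other manipulations are routine.
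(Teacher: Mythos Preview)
Your proof is correct and takes a genuinely different route from the paper's. For the bound \eqref{Kasymp0} you and the paper do exactly the same thing. For \eqref{Kasymp1}, however, the paper represents the tail $\sum_{j\ge N+n} x^j/j!$ by a contour integral (residue formula), rewrites it as $\oint e^{N\varphi(z)}(z-x)^{-1}z^{-n}\,dz$ with $\varphi(z)=z-\log z -1$, and applies Laplace's method on the circle $|z|=1$ to obtain the sharp bound $|x|^{N+n}e^N/(\sqrt N(1-|x|))$; the exponent $2\kappa^2$ then emerges from the Taylor expansion $\varphi(r^2)\simeq 2(1-r)^2$ at $r=1-\delta_N$. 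You instead reduce (for $\alpha=0$) to the diagonal via Cauchy--Schwarz and recognise the diagonal deficit as a Poisson tail probability, which you control by a Chernoff bound; for general $\alpha$ you differentiate the factored form and bound $g^{(k)}$ by the same Poisson-tail mechanism with shifted cutoff. The two arguments are dual---indeed the paper's Remark~\ref{rk:exp_asymp} notes that the contour bound is exactly Cram\'er's large-deviation rate for i.i.d.\ exponentials, and your Poisson Chernoff rate $\lambda I(t/\lambda)$ with $I(x)=x\log x-x+1$ coincides with $N\varphi(r^2)$ on the nose. Your approach is more elementary (no complex analysis, no saddle point), while the paper's contour representation gives a single closed-form estimate that handles the off-diagonal and all derivatives uniformly without the separate Cauchy--Schwarz step.
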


To prove our main results, we can choose $\kappa$ arbitrary large so that all remainders in the estimates above can thus be thought of $\O(N^{-\infty})$ as $N\to\infty$ that is, they decay faster than any (negative) power of $N$. For completeness we state and prove nearly optimal bounds, but if one restricts further inside the droplet, e.g. to $|z|,|w| \leq 1 - N^{-1/2 + \epsilon}$, the proof of these estimates is even more straightforward. 

Lemma~\ref{lem:ker} will allow to replace $\K_{N+n}$ (along with any of its derivatives) by the explicit kernel $\K_\infty$ in the sequel. 
Then, any derivative of $\K_\infty$ can be easily controlled using Properties~\ref{pro:Kinfty}. For instance, by \eqref{def:Upsilon},  Lemma~\ref{lem:ker} with $\alpha=0$ implies that for any configuration of quasi-holes $\w \in \DROPn$, cf.~\eqref{eq:config}, 
\[
\Upsilon_N(\w) = \Upsilon_\infty(\w) + \cO(N^{-2\kappa^2})
\]
where $\Upsilon_\infty(\w)$ is defined as in~\eqref{def:Upsilon} but with $\K_{N+n}$ replaced by $K_\infty$.

\begin{proof}
We already know that  $|\K_{N+n}(z,w)| \le N$  for all $z,w\in\C$. Hence, to prove \eqref{Kasymp0}, it suffices to consider the case $|z|\ge 1-\delta \ge |w|$. 
 By the triangle inequality, 
 \begin{align*} \notag
|\K_{N+n}(z,w)|   & \le \frac{N}{\pi} e^{-N( |z|^2 +|w|^2)/2} \sum_{k=0}^{N+n-1} \frac{N^k|zw|^k}{k!}   \\
&\le  \frac{N}{\pi} e^{-N( |z|^2 +|w|^2 - 2|z||w|)/2}  = \frac{N}{\pi} e^{-N( |z|-|w|)^2 /2}  . 
\end{align*}
This immediately implies \eqref{Kasymp0}. 

In order to obtain the asymptotics \eqref{Kasymp1}, we observe that by the residue formula, we have for any $x\in\C$ and any $\epsilon>0$, 
\[ \begin{aligned}
\sum_{j=N+n}^{+\infty} \frac{x^j}{j!}  & =  \frac{1}{2\pi\i}\sum_{j=N+n}^{+\infty}\oint_{|\xi|=1+\epsilon} e^{x \xi} \frac{\d\xi}{\xi^{j+1}}  \\
& =  \frac{1}{2\pi\i}  \oint_{|\xi|=1+\epsilon}  \frac{e^{x \xi}}{\xi-1} \frac{\d\xi}{\xi^{N+n}} . 
\end{aligned}\]
Let us define 
$$\varphi(z) = z-\log(z)-1$$
where $\log(\cdot)$ denotes the principal branch. 
The previous formula implies that for any $x\in\C$ and any $\epsilon>0$, 
\begin{equation} \label{Kasymp5} 
\begin{aligned}
\sum_{j=N+n}^{+\infty} N^{j+1}\frac{x^j}{j!} & =  \frac{N}{2\pi\i}\oint_{|\xi|=1+\epsilon}  \frac{e^{N x \xi}}{\xi-1} \frac{\d\xi}{\xi^{N+n}} \\
& =  \frac{N}{2\pi\i} x^{N+n} \oint_{|\xi|=|x|+\epsilon}  \frac{e^{Nz}}{z-x} \frac{\d z}{z^{N+n}} \\
& =  \frac{N}{2\pi\i} x^{N+n} e^N \oint_{|z|=|x|+\epsilon}  \frac{e^{N\varphi(z)}}{z-x} \frac{\d z}{z^{n}} . 
\end{aligned}
\end{equation}
We use Laplace's method to compute the asymptotics of this integral. We verify that the function $\varphi$ has a unique saddle point at $z=1$ with $\varphi(1)=0$ and $\varphi''(1)=1$ and that (by convexity) we have for any $\theta\in[-\pi, \pi]$, 
\[
-\Re \varphi(e^{\i\theta}) = 1-\cos(\theta) \ge \theta^2/5 .
\]
This implies that for any $|x|<1$, 
\begin{equation} \label{Laplace_asymp}
\left| \oint_{|z|=1}  \frac{e^{N\varphi(z)}}{z-x} \frac{\d z}{z^{n}}  \right|  \le \int_{-\pi}^{\pi} \frac{e^{-N\theta^2/5}}{|1- xe^{-\i \theta}|} d\theta  \le \sqrt{\frac{5\pi}{N}} \frac{1}{1-|x|} . 
\end{equation}
By combining \eqref{Kasymp5} and \eqref{Laplace_asymp}, we have shown that for any $|x|<1$ 
\begin{equation} \label{exp_asymp}
\left| \sum_{j=N+n}^{+\infty} N^{j+1}\frac{x^j}{j!}  \right| \le  \sqrt{\frac{5N}{4\pi}} \frac{|x|^{N+n} e^N}{1-|x|}
\end{equation}
By definition, observe that 
\begin{equation} \label{Kdiff}
\K_\infty(z,w) - \K_{N+n}(z,w) = \frac{N}{\pi}e^{-N( |z|^2 +|w|^2)/2} \sum_{j=N+n}^{+\infty} N^{j}\frac{x^j}{j!}
\end{equation}
 with $x=z\overline{w}$. 
Thus, this implies that  for any $z,w\in\C$ with $|z\overline{w}|<1$, 
\begin{align} \label{exp_asymp_2}
\Big| \K_\infty(z,w) - \K_{N+n}(z,w) \Big|
\le \frac{1}{\pi} \sqrt{\frac{5N}{4\pi}}\frac{1}{1-|z\overline{w}|}  e^{-N\big( \varphi(|z|^2)+ \varphi(|w|^2) \big)/2} ,
\end{align}
where we used that 
$$\varphi(|z|^2) = |z|^2-1-2\log|z|.$$
We have $\varphi(|z|^2) \ge 0$ for all $z\in \C$ and 
\begin{equation} \label{varphi_bd}
N\varphi(|w|^2) \ge 2\kappa^2 \log N +\O\bigg(\frac{\kappa^3(\log N)^{3/2}}{\sqrt{N}}\bigg) \qquad\text{if}\qquad |w| \le 1- \delta_N(\kappa). 
\end{equation}
We conclude that, uniformly for all $|z| , |w| \le 1- \delta_N(\kappa)$, 
\begin{equation} \label{kernelbound}
\Big| \K_\infty(z,w) - \K_{N+n}(z,w) \Big| =\O\bigg(\frac{N^{1-2\kappa^2}}{\kappa\sqrt{\log N}}\bigg) .
\end{equation}
This completes the proof of \eqref{Kasymp1} in case $\alpha=0$. 
The estimates on the derivatives of the kernel are obtained in the exact same way by differentiating formulae \eqref{Kdiff} and using the bound \eqref{exp_asymp}. In particular, we simply pay a factor of $N$ in the error term for each derivative. 
\end{proof}

\begin{remark}\label{rk:exp_asymp}
The bound \eqref{exp_asymp} shows that for all $x\in[0,1)$, 
\[
\sum_{j=N+n}^{+\infty} N^{j+1}\frac{x^j}{j!}  e^{-Nx} 
\le \sqrt{\frac{5N}{4\pi}} \frac{ e^{-N \varphi(x)}}{1-x} , 
\]
where $\varphi(x) = x-\log(x)-1$. There is a probabilistic interpretation for this bound. We have for any $x>0$, 
\[
\sum_{j<N+n} N^{j}\frac{x^j}{j!}  e^{-Nx}  = \P[ X_1+ \cdots + X_{N+n-1} \le xN]
\]
where $(X_k)_{k=0}^{+\infty}$ are i.i.d. exponential random variables with mean $1$. 
Then by Cram\'er's theorem, it holds that for any $x<1$, 
\[
\limsup_{N\to+\infty}\frac{1}{N}\log\bigg(\sum_{j<N+n} N^{j}\frac{x^j}{j!}  e^{-Nx}\bigg) \le  - \varphi(x)
\]
where $ \varphi(x)= \sup_{t > -1}\big\{\log(1+t)-tx \big\} = \varphi(x)$ corresponds to the Legendre transform of the cumulant generating function of an exponential random variable. 
\end{remark}

\subsection{First bounds for the remainder terms in the partition function}

We conclude this section by deducing asymptotics for the partition function $\cqh(\w)$ for $\w \in \DROPn^{\varnothing}$, cf.~\eqref{eq:nomerg}. We start from Theorem \ref{thm:exp} and recall its' rewriting as \eqref{cqh}. Our task is to prove that $\Upsilon_N(\w) \simeq 1$ with sufficient precision, in particular that the derivatives of $\Upsilon_N$ can be mostly neglected when computing the emerging fields as in \eqref{eq:vector pot}-\eqref{eq:scalar pot}.

\begin{proposition}[\textbf{Bounds on remainders, no merging}] \label{prop:main}\mbox{}\\
Recall the notation \eqref{def:Upsilon} with $\B=N$. 
For any multi-index $\alpha,\beta \in \N_0^{n}$ and uniformly for all $\w\in \DROPn^{\varnothing}$, we have 
\[
\partial_{\w}^\alpha  \partial_{\overline{\w}}^\beta \Upsilon_N(\w) = \delta_{\alpha=0}\delta_{\beta=0}+\cO\big( N^{|\alpha| + |\beta| -2\kappa^2}\big)  . 
\]
\end{proposition}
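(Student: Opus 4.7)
My plan is to show that on $\DROPn^{\varnothing}$ the matrix $M(\w) := [\pi N^{-1} \K_{N+n}(w_i,w_j)]_{n\times n}$ is the identity plus a very small perturbation, uniformly in derivatives, and then expand the determinant via Leibniz.

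\smallskip

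\noindent\textbf{Step 1: Bounds on entries and their derivatives.} I first estimate, entry by entry, $M_{ij}$ and all its mixed derivatives in $\w,\overline{\w}$. For the diagonal entries, $\K_\infty(z,z) = N/\pi$ is constant in $z$, so $\pi N^{-1}\K_\infty(w_i,w_i) = 1$ with all positive-order derivatives vanishing. Writing $\K_{N+n} = \K_\infty + (\K_{N+n}-\K_\infty)$, Lemma~\ref{lem:ker} bounds any derivative of the remainder $f(z,w) := \K_{N+n}(z,w)-\K_\infty(z,w)$ by $\cO(N^{1+|\gamma|-2\kappa^2})$ uniformly for $|z|,|w| \le 1-\delta_N(\kappa)$; applying the chain rule to $f(w_i,w_i)$ and multiplying by $\pi/N$ yields, for any multi-indices $a,b\in\N_0$,
\[
\partial_{w_i}^{a}\partial_{\overline{w_i}}^{b} M_{ii}(\w) = \delta_{a=0}\delta_{b=0} + \cO\bigl(N^{a+b-2\kappa^2}\bigr).
\]
For the off-diagonal entries ($i\neq j$), on $\DROPn^{\varnothing}$ we have $|w_i-w_j|\ge 2\delta_N(\kappa)$. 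Properties~\ref{pro:Kinfty} then gives that every mixed derivative of order $|\gamma|$ of $\K_\infty(w_i,w_j)$ is a polynomial of degree $|\gamma|$ in $(w_i,\overline{w_i},w_j,\overline{w_j})$ times $\K_\infty(w_i,w_j)$, hence of size $\cO(N^{1+|\gamma|-2\kappa^2})$; combined with Lemma~\ref{lem:ker} this shows that every mixed derivative (in $\w,\overline{\w}$) of order $|\gamma|$ of $M_{ij}$ is $\cO(N^{|\gamma|-2\kappa^2})$.

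\smallskip

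\noindent\textbf{Step 2: Leibniz expansion of the determinant.} I expand
\[
\Upsilon_N(\w) = \sum_{\sigma\in S_n} \operatorname{sgn}(\sigma) \prod_{i=1}^n M_{i,\sigma(i)}(\w),
\]
differentiate, and distribute $(\alpha,\beta)$ across the $n$ factors by the generalized Leibniz rule. For the identity permutation with no derivatives, the product is $\prod_i (1+\cO(N^{-2\kappa^2})) = 1+\cO(N^{-2\kappa^2})$, accounting for the $\delta_{\alpha=0}\delta_{\beta=0}$ term. For the identity with $(\alpha,\beta)\neq 0$, at least one diagonal factor is differentiated and contributes $\cO(N^{|\gamma|+|\delta|-2\kappa^2})$ by Step~1, while the remaining diagonal factors are $\cO(1)$; in the worst case the entire $(\alpha,\beta)$ lands on one factor, giving $\cO(N^{|\alpha|+|\beta|-2\kappa^2})$. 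For any non-identity $\sigma$, at least two off-diagonal factors are present, each contributing at worst $\cO(N^{|\gamma|+|\delta|-2\kappa^2})$, so the product is $\cO(N^{|\alpha|+|\beta|-4\kappa^2})$, a strictly better bound. Summing over the finite set of permutations (size $n!$, absorbed in $\cO$ since constants depend on $n$) yields
\[
\partial_{\w}^\alpha \partial_{\overline{\w}}^\beta \Upsilon_N(\w) = \delta_{\alpha=0}\delta_{\beta=0} + \cO\bigl(N^{|\alpha|+|\beta|-2\kappa^2}\bigr),
\]
uniformly for $\w\in\DROPn^{\varnothing}$.

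\smallskip

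\noindent\textbf{Main obstacle.} The only mildly delicate point is the diagonal case $i=j$: here $\K_\infty(z,z)$ is constant but $\K_\infty(z,w)$ is not separately holomorphic, so one must be careful when passing derivatives of $f(z,w) = \K_{N+n}(z,w) - \K_\infty(z,w)$ from the two-variable kernel to the one-variable trace. This is handled cleanly by the chain rule, which distributes $\partial_{w_i}^{a}\partial_{\overline{w_i}}^{b}$ as a finite sum of $\partial_z^{a_1}\partial_{\overline z}^{b_1}\partial_w^{a_2}\partial_{\overline w}^{b_2}$ with $a_1+a_2=a$, $b_1+b_2=b$, evaluated at $z=w=w_i$. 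Each summand is controlled by Lemma~\ref{lem:ker} with the same exponent $1+|\gamma|-2\kappa^2$, so no loss of decay occurs. The rest of the argument is essentially bookkeeping of the Leibniz formula.
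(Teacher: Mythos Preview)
Your proof is correct and follows essentially the same route as the paper: both arguments expand the determinant via Leibniz over permutations, invoke Lemma~\ref{lem:ker} to control the difference $\K_{N+n}-\K_\infty$ and its derivatives, and use Properties~\ref{pro:Kinfty} together with the constancy of $\K_\infty(z,z)=N/\pi$ to isolate the identity permutation as the sole leading contribution. The only cosmetic difference is ordering: the paper first replaces $\K_{N+n}$ by $\K_\infty$ inside the permutation sum (absorbing the error) and then discards non-identity terms, whereas you bound each entry $M_{ij}$ directly before expanding.
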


\begin{proof}
By definition
$$ \Upsilon_N (\w) =  \left( \frac{\pi}{N} \right)^n\sum_{\sigma \in \S_n} \mathrm{sgn}(\sigma) \prod_{i=1}^n \K_{N+n}\big(w_i,w_{\sigma(i)}\big)$$
where the sum is over all permutations of $n$ elements. Hence, by Lemma~\ref{lem:ker}, we have that, for any multi-index $\alpha,\beta \in \N_0^{n}$, uniformly for $\w \in \DROPn$, 
\begin{equation} \label{Lapexp}
\partial_{\w}^\alpha  \partial_{\overline{\w}}^\beta\Upsilon_N(\w) =
  \left( \frac{\pi}{N} \right)^n\sum_{\sigma \in \S_n} \mathrm{sgn}(\sigma) \,
  \partial_{\w}^\alpha \partial_{\overline{\w}}^\beta \Big( \prod_{i=1}^n \K_{\infty}\big(w_i,w_{\sigma(i)}\big) \Big) + \cO\big(N^{|\beta|+|\alpha|-2\kappa^2}\big) . 
\end{equation}
Then, using Properties~\ref{pro:Kinfty}, for $\w \in \DROPn^{\varnothing}$, the main contribution to this sum comes from the identity permutation, that is, 
\[
\partial_{\w}^\alpha  \partial_{\overline{\w}}^\beta\Upsilon_N(\w) =
  \partial_{\w}^\alpha \partial_{\overline{\w}}^\beta \Big( \prod_{i=1}^n \tfrac{\pi}{N}  \K_{\infty}\big(w_i,w_i\big) \Big) + \cO\big(N^{|\beta|+|\alpha|-2\kappa^2}\big) . 
\]
Since, $\K_\infty (z,z) = N/\pi$ is constant the leading term equals $1$ if $\alpha=\beta =0$ and vanishes otherwise. This completes the proof, with the required uniformity. 
\end{proof}

This argument also allows to obtain the asymptotics of $\Upsilon_N$ (along with its derivatives) when quasi-holes are merging. 
In case $|w_i-w_j| \le 2\delta_N$, we cannot a priori drop the terms $\K_\infty(w_i,w_j)$ from the sum \eqref{Lapexp}. In general, this gives rise to a challenging combinatorial problem to keep track of all non-trivial terms.  
However,  it is easy to keep track of the correction terms if we restrict to the case where \emph{only} two quasi-holes are allowed to come \emph{microscopically close}. This is referred as \emph{simple merging} in the sequel.

For (fixed) parameters $\kappa, \gamma >0$, let us define the subsets in the complement of $\DROPn^{\varnothing}$ (cf~\eqref{eq:nomerg}), 
for any (unordered) pair  $\{i,j\}\in\{1,\dots n\}$, 
\begin{equation*}\label{eq:onemerge}
\DROPn^{ij} := \big\{ \by \in \DROPn : |y_i-y_j| \le 2 \delta_N(\kappa) \text{ and } |y_k-y_\ell| \ge 2 \delta_N(\kappa) \text{ for } \{k,\ell\} \neq \{i,j\} \big\},
\end{equation*}
with a single merging pair. Without loss of generality we pick the pair $(1,2)$ of particles as the colliding ones, i.e. we work on the set $\DROPn^{12}$. Compared to Proposition~\ref{prop:main}, our next result includes the \emph{correction} which arises from the quasi-holes merging in $\DROPn^{12}$:

\begin{proposition}[\textbf{Refined remainders on single merging}] \label{prop:upgrade}\mbox{}\\
For any multi-index $\alpha,\beta \in \N_0^{n}$ and uniformly for $\w\in \DROPn^{12}$, we have 
\begin{equation} \label{Upscorr}
\partial_{\w}^\alpha  \partial_{\overline{\w}}^\beta\Upsilon_N(\w)
= \delta_{\alpha=0}\delta_{\beta=0} - \partial_{\w}^\alpha \partial_{\overline{\w}}^\beta  \big(e^{-N|w_1-w_2|^2} \big) + \cO\big(N^{|\beta|+|\alpha|-2\kappa^2}\big) . 
\end{equation}
\end{proposition}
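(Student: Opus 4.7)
The proof will refine the argument of Proposition~\ref{prop:main} by isolating the single non-negligible correction which arises from the transposition $(1\,2)$ in the Leibniz expansion of $\Upsilon_N$, while keeping all other permutations under the $\O(N^{|\alpha|+|\beta|-2\kappa^2})$ remainder.

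First, I would start from the same Leibniz expansion used in \eqref{Lapexp}, and invoke Lemma~\ref{lem:ker} to replace $\K_{N+n}$ (and each of its derivatives) by $\K_\infty$ uniformly on $\DROPn^{12}\subset \DROPn$, introducing a negligible error of order $\O(N^{|\alpha|+|\beta|-2\kappa^2})$. This reduces matters to estimating
\begin{equation*}
\left(\frac{\pi}{N}\right)^n \sum_{\sigma \in \S_n} \mathrm{sgn}(\sigma)\, \partial_{\w}^\alpha \partial_{\overline{\w}}^\beta \prod_{i=1}^n \K_\infty\bigl(w_i,w_{\sigma(i)}\bigr).
\end{equation*}

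Next, I would classify the permutations by their cycle decomposition. Using \eqref{Kmodule} and Properties~\ref{pro:Kinfty}, each factor $\partial^\gamma \K_\infty(w_i, w_{\sigma(i)})$ with $|w_i-w_{\sigma(i)}| \ge 2\delta_N(\kappa)$ is bounded by $\O\bigl(N^{1+|\gamma|-2\kappa^2}\bigr)$ uniformly on $\D$, because derivatives of $\K_\infty$ equal a polynomial (of size $N^{|\gamma|}$ on $\D$) times $\K_\infty$ itself, and $|\K_\infty(w_i,w_{\sigma(i)})| \le (N/\pi)\, N^{-2\kappa^2}$ on such ``long edges''. A short combinatorial check shows that the only permutations $\sigma\in\S_n$ whose every cycle edge lies on the pair $\{1,2\}$ or at a fixed point are $\sigma=\mathrm{id}$ and $\sigma=(1\,2)$; any other $\sigma$ contains at least one edge $(i,j)$ with $\{i,j\}\neq\{1,2\}$, hence carries at least one long-edge factor. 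After applying Leibniz to distribute the $|\alpha|+|\beta|$ derivatives over the $n$ factors and multiplying by $(\pi/N)^n$, each such ``bad'' permutation contributes $\O(N^{|\alpha|+|\beta|-2\kappa^2})$.

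The two surviving terms can be computed explicitly. The identity yields
$(\pi/N)^n \prod_i \K_\infty(w_i,w_i) = 1$ since $\K_\infty(w,w)=N/\pi$, producing the $\delta_{\alpha=0}\delta_{\beta=0}$ term. The transposition $\sigma=(1\,2)$ of sign $-1$ yields
\begin{equation*}
-\left(\frac{\pi}{N}\right)^n \K_\infty(w_1,w_2)\K_\infty(w_2,w_1)\prod_{i\ge 3}\K_\infty(w_i,w_i) = -\left(\frac{\pi}{N}\right)^2 |\K_\infty(w_1,w_2)|^2 = -e^{-N|w_1-w_2|^2},
\end{equation*}
by \eqref{Kmodule}, and differentiating gives precisely the correction $-\partial_{\w}^\alpha \partial_{\overline{\w}}^\beta e^{-N|w_1-w_2|^2}$ in \eqref{Upscorr}.

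The main obstacle will be the uniform bookkeeping for the bad permutations after differentiation: even when $|w_1-w_2|$ is small (so the factors involving the close pair $\{1,2\}$ carry no exponential gain), I must ensure that every bad $\sigma$ retains at least one long-edge factor whose exponential decay $N^{-2\kappa^2}$ survives to absorb the polynomial growth $N^{|\alpha|+|\beta|}$ produced by the derivatives. This follows from the combinatorial observation that any $\sigma\notin\{\mathrm{id},(1\,2)\}$ must move some index in $\{3,\dots,n\}$, and then the corresponding cycle necessarily crosses a pair $\{i,j\}\neq\{1,2\}$, so $\kappa$ large enough yields the stated bound.
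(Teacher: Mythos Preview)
Your proposal is correct and follows essentially the same approach as the paper: start from the expansion \eqref{Lapexp}, replace $\K_{N+n}$ by $\K_\infty$ via Lemma~\ref{lem:ker}, and then use Properties~\ref{pro:Kinfty} to see that on $\DROPn^{12}$ only the identity and the transposition $(1\,2)$ survive, all other permutations being $\cO(N^{|\alpha|+|\beta|-2\kappa^2})$. Your write-up is in fact slightly more explicit than the paper's on the combinatorial point (why any $\sigma\notin\{\mathrm{id},(1\,2)\}$ must contain a long edge), but the argument is the same.
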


\begin{proof}
Starting from the expansion \eqref{Lapexp} and using the Properties~\ref{pro:Kinfty} of the kernel $\K_\infty$,  for $\w\in \mathscr{D}_{n,1}^{\varnothing}$, the main contribution to this sum comes from the permutation
$$\sigma\in \S_n : \sigma(1)= 2, \sigma(2)=1 \text{ and } \sigma(i) =i \text{ for all } i \notin\{1,2\}.$$
This implies that, uniformly for  $\w\in \mathscr{D}_{n,1}^{\varnothing}$, 
\[\begin{aligned}
\partial_{\w}^\alpha  \partial_{\overline{\w}}^\beta\Upsilon_N(\w)
&  =  \partial_{\w}^\alpha \partial_{\overline{\w}}^\beta \Big( \prod_{i=1}^n  \tfrac{\pi}{N}  \K_{\infty} (w_i,w_i) \Big)   \\
& \quad -  \partial_{\w}^\alpha \partial_{\overline{\w}}^\beta \Big( |  \tfrac{\pi}{N} \K_\infty(w_1,w_2)|^2 \prod_{i\notin \{1,2\}}   \tfrac{\pi}{N} \K_{\infty} (w_i,w_i) \Big) + \cO\big(N^{|\beta|+|\alpha|-2\kappa^2}\big)
\end{aligned}\]
where the first term comes from the identity permutation. Using \eqref{Kmodule} and that on the diagonal $\K_\infty (z,z) = N/\pi$ is constant $\K_\infty$ is constant and equals $N/\pi$, we immediately obtain the claimed asymptotics.
\end{proof}

The asymptotics \eqref{Upscorr} hold on the set 
\[
\big\{ \by \in \DROPn : |y_k-y_\ell| \ge 2 \delta_N(\kappa) \text{ for } (k,\ell) \neq (1,2) \big\} ,
\]
however, the correction term is only relevant if  $|y_j-y_1| \le \delta_N$. Otherwise, we recover the estimates from Proposition~\ref{prop:main}.

\section{Estimates of the emerging potentials}\label{sec:estimates}

We now connect the exact expressions of the emerging potentials obtained in Section~\ref{sec:calculus} with Theorem~\ref{thm:transmutation}, thereby putting the heuristics of Section~\ref{sect:proof out} on rigorous ground. In contrast with the calculations of Section~\ref{sec:calculus}, the sequel heavily relies on the determinantal structure of the Ginibre ensemble.

We shall use three types of bounds, presented in separate subsections. We consider first  the case where the quasi-holes are all well-separared, then the case of simple quasi-holes encounters/merging. More complicated mergings will be controlled by uniform (non-optimal) bounds on the emerging potentials obtained in Section~\ref{sect:global} below.

\subsection{Asymptotics away from merging}\label{sect:no merging} 

The core of the statistics transmutation phenomenon is given by the following

\begin{proposition}[\textbf{Vector and scalar potentials away from merging}]\label{prop:nomerg}\mbox{}\\
Recall the definition ~\eqref{eq:nomerg} of the no-merging set $\DROPn^{\varnothing}$ and assume $\kappa \geq 2$. Let $\cA_j,\cV_j$ be the emerging vector and scalar potentials obtained in Proposition \ref{prop:exp}. It holds for every  $j=1,\dots,n$, uniformly for all $\by\in \DROPn^{\varnothing}$ 
\begin{equation}\label{eq:Anomerg}
\mathcal{A}_j (\by) = N w_j^{\perp} - \sum_{\ell=2}^n\frac{(y_1-y_\ell)^\perp}{|y_1-y_\ell |^2} +  \O\big( N^{1-2\kappa^2} \big)
\end{equation}
and 
\begin{equation}\label{eq:Vnomerg}
\mathcal{V}_j (\by)   = 2N +  \O\big( N^{2-2\kappa^2}\big)  .
\end{equation}
\end{proposition}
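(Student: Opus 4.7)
The plan is to feed the exact second-line identities of Proposition~\ref{prop:exp},
\[
\mathcal{A}_j(\w) = \tfrac{1}{2}\nabla^{\perp}_{w_j}\log\cqh^{-2}(\w), \qquad \mathcal{V}_j(\w) = \tfrac{1}{2}\Delta_{w_j}\log\cqh^{-2}(\w),
\]
through the factorization \eqref{cqh} and then invoke the derivative bounds of Proposition~\ref{prop:main}. Taking the logarithm of \eqref{cqh} yields
\[
\log\cqh^{-2}(\w) = \log\Gamma_N^n + N\sum_{k=1}^n|w_k|^2 \;-\; 2\log|\triangle(\w)| \;+\; \log\Upsilon_N(\w),
\]
and I would estimate each of the four pieces on $\DROPn^{\varnothing}$ separately. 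The $\log\Gamma_N^n$ piece is $\w$-independent and drops out. The quadratic piece is elementary: $\tfrac12\nabla^{\perp}_{w_j}(N|w_j|^2) = Ny_j^{\perp}$ and $\tfrac12\Delta_{w_j}(N|w_j|^2) = 2N$, which produces the announced leading terms in \eqref{eq:Anomerg} and \eqref{eq:Vnomerg}.

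For the Vandermonde piece, isolate the $w_j$-dependence via $-2\log|\triangle(\w)| = -\sum_{k\neq j}\log|w_j-w_k| + R_j(\w_{(j)})$ where $R_j$ does not depend on $w_j$. Then $\tfrac12\nabla^{\perp}_{w_j}(-2\log|\triangle(\w)|) = -\sum_{k\neq j}\frac{(y_j-y_k)^{\perp}}{|y_j-y_k|^2}$, which supplies the Coulombic sum in \eqref{eq:Anomerg}. Its Laplacian contribution vanishes identically on $\DROPn^{\varnothing}$ because $\Delta\log|w-w'|$ is supported at $w=w'$ while the no-merging condition $|w_j-w_k|\geq 2\delta_N(\kappa)$ keeps us strictly away from that locus.

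The technical step is controlling $\log\Upsilon_N$ and its derivatives. Proposition~\ref{prop:main} at order $|\alpha|+|\beta|=0$ gives $\Upsilon_N(\w) = 1 + \O(N^{-2\kappa^2})$; once $\kappa\geq 2$ and $N$ is large, $\Upsilon_N\geq \tfrac12$ uniformly on $\DROPn^{\varnothing}$, so the logarithm and its ratio representations are legitimate. Writing
\[
\nabla^{\perp}_{w_j}\log\Upsilon_N = \frac{\nabla^{\perp}_{w_j}\Upsilon_N}{\Upsilon_N}, \qquad \Delta_{w_j}\log\Upsilon_N = \frac{\Delta_{w_j}\Upsilon_N}{\Upsilon_N} - \frac{|\nabla_{w_j}\Upsilon_N|^2}{\Upsilon_N^2},
\]
and inserting Proposition~\ref{prop:main} at orders $|\alpha|+|\beta|=1$ and $2$ respectively yields $\O(N^{1-2\kappa^2})$ for the gradient and $\O(N^{2-2\kappa^2}) + \O(N^{2-4\kappa^2}) = \O(N^{2-2\kappa^2})$ for the Laplacian, uniformly in $\w\in\DROPn^{\varnothing}$.

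Summing the four contributions gives \eqref{eq:Anomerg} and \eqref{eq:Vnomerg}. The only genuine obstacle is the uniform control of $\Upsilon_N$ and its derivatives, which is precisely the content of Proposition~\ref{prop:main}; everything else is bookkeeping with the explicit factorization of $\cqh^{-2}(\w)$. The assumption $\kappa\geq 2$ is used in two places: to ensure $\Upsilon_N$ stays bounded away from $0$ so that $\log\Upsilon_N$ is well-defined, and so that the quadratic term $|\nabla\Upsilon_N|^2/\Upsilon_N^2$ in the Laplacian expansion is not worse than the linear term.
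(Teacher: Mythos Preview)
Your proposal is correct and follows essentially the same route as the paper: differentiate the factorization \eqref{cqh} of $\cqh^{-2}(\w)$, read off the explicit contributions from the Gaussian and Vandermonde factors, and control the $\log\Upsilon_N$ remainder via Proposition~\ref{prop:main}. The only slip is a missing factor of $2$ in the intermediate line isolating the $w_j$-dependence of $-2\log|\triangle(\w)|$ (it should read $-2\sum_{k\neq j}\log|w_j-w_k| + R_j$), but your subsequent computation of $\tfrac12\nabla^{\perp}_{w_j}$ is consistent with the correct factor, so this is just a typo.
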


\begin{proof}
This is a direct consequence of Theorem~\ref{thm:exp} and the estimates from Proposition~\ref{prop:main}.
For simplicity, we use complex coordinates. 
Combining the second expressions in ~\eqref{eq:vector pot bis}-\eqref{eq:scalar pot bis}  with formula~\eqref{cqh}, we find by  direct computations that, when $w_i\neq w_j$ for all $i\neq j$, 
\begin{equation} \label{eq:pot remainders}
\begin{aligned}
\mathcal{A}_j (\w) &= N w_j^{\perp} - \sum_{\ell=2}^n\frac{(w_1-w_\ell)^\perp}{|w_1-w_\ell |^2} +  \Im \Big(  {\footnotesize	 \begin{pmatrix} 1 \\ \i \end{pmatrix}} \partial_{w_j}  \log \Upsilon_N(\w) \Big) \\
\mathcal{V}_j (\w) &= 2N  + 2 \partial_{\overline{w_j}}\partial_{w_j} \log \Upsilon_N(\w).
\end{aligned}
\end{equation}
where we used~\eqref{eq:Lap}-\eqref{eq:grad 2} ($\Upsilon_N(\w)$ is real-valued). 
We also used that the Vandermonde determinant $\triangle(\w)$ is a poly-analytic function
which does not vanish on $\{\w\in \C^n : w_1\neq \cdots \neq w_j \}$, so that  $\Delta_{w_j}\log \left| \triangle(\w)\right| =0$ on this set. 
 
 Using Proposition~\ref{prop:main}, we conclude that for any $j\in\{1,\dots, n\}$,
 \[
\partial_{w_j}  \log \Upsilon_N(\w)=  \Upsilon_N^{-1}(\w)  \partial_{w_j}   \Upsilon_N(\w)  =  \O\big( N^{1-2\kappa^2} \big)
\]
and 
\[
\partial_{\overline{w_j}}\partial_{w_j} \log \Upsilon_N(\w) =
\Upsilon_N^{-1}(\w)  \Delta_{w_j}   \Upsilon_N(\w) - |\partial_{w_j}  \log \Upsilon_N(\w)|^2  =  \O\big( N^{2-2\kappa^2} \big)
\]
as claimed. 
\end{proof}

\subsection{Simple merging}

We now consider the asymptotics of the emerging potentials in the (simple) merging regions \eqref{def:merg set}. Correction terms can be identified using the functions $\mathrm{a} : \R^2 \to \R^2$ and $\mathrm{v} : \R^2 \to \R$,  
\begin{equation} \label{corrpot}
 \begin{aligned}
\mathrm{a}(y)  & = \frac{y^\perp}{e^{|y|^2}-1} = \nabla^\perp\log (1-e^{-|y|^2}) \\
\mathrm{v}(y)& =  2\frac{1-(1-|y|^2) e^{|y|^2}}{(e^{|y|^2}-1)^2} =   - \frac12 \Delta  \log (1-e^{-|y|^2}) . 
\end{aligned}
\end{equation}
Note that $\mathrm{v}(y) \frac{\d y}{\pi}$ is a probability measure on $\R^2$ and that $\mathrm{a}\in L^p(\R^2)$ for $1\leq p <2$. These functions have the following asymptotics as $y\to0$, 
\[
\mathrm{a}(y) \simeq \frac{y^\perp}{|y|^2}
\qquad\text{and}\qquad 
\mathrm{v}(y) \simeq 1. 
\]
It will be convenient to assume that the colliding particles do not get absurdly close. To this end we introduce a parameter $\gamma >0$,  and mainly work on   
\begin{equation} \label{def:merg set}
\DROPn^{ij,\gamma} : = \left\{ \by \in \DROPn^{ij},  
 |y_i - y_j|^2  \ge N^{-1-\gamma}\right\}.  
\end{equation} 
We ultimately choose large fixed parameters $\gamma, \kappa$ under some appropriate condition. The remainder of the simple merging set $\DROPn^{ij} \setminus \DROPn^{ij,\gamma}$ will be dealt with using the estimates from Section~\ref{sect:global} below.

\begin{proposition}[\textbf{Vector and scalar potentials with simple merging}]\label{prop:merg}\mbox{}\\
For any $\kappa, \gamma \ge 1$ satisfying $\gamma < 2 \kappa^2$, uniformly for all $\w \in \DROPn^{12,\gamma}$, we have
\begin{equation} \label{corrasymp}
\begin{aligned}
\mathcal{A}_j (\by)  & = N y_j^{\perp}  - \sum_{\ell=2}^n\frac{(y_j-y_\ell)^\perp}{|y_j-y_\ell |^2}  +  
\1_{j\in\{1,2\}} \sqrt{N} \mathrm{a}\big(\sqrt{N}(y_1-y_2) \big) + \O(N^{1+2\gamma-2\kappa^2})  , \\
\mathcal{V}_1(\by)  & = N \big(2 - \1_{j\in\{1,2\}}  \mathrm{v}\big(\sqrt{N}(y_1-y_2)\big) \big) + \O(N^{1+3\gamma -2\kappa^2}) .
\end{aligned}
\end{equation}
\end{proposition}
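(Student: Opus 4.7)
The starting point is the same as in Proposition~\ref{prop:nomerg}: combining the second lines of \eqref{eq:vector pot bis}--\eqref{eq:scalar pot bis} with the factorization \eqref{cqh} yields the exact formulas \eqref{eq:pot remainders}, namely
\[
\mathcal{A}_j(\w) = N w_j^{\perp} - \sum_{\ell\neq j}\frac{(w_j-w_\ell)^\perp}{|w_j-w_\ell|^2} + \tfrac{1}{2}\nabla^\perp_{w_j}\log\Upsilon_N(\w), \qquad \mathcal{V}_j(\w) = 2N + 2\partial_{\overline{w_j}}\partial_{w_j}\log\Upsilon_N(\w).
\]
In contrast with the no-merging case, where Proposition~\ref{prop:main} made $\log\Upsilon_N$ negligible, here we must keep the nontrivial correction from Proposition~\ref{prop:upgrade}. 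The plan is to extract this correction explicitly as the functions $\mathrm{a}$ and $\mathrm{v}$ of \eqref{corrpot}, and to control what is discarded using the two parameters $\gamma,\kappa$.

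The first step is a quantitative log expansion. By Proposition~\ref{prop:upgrade} we have $\Upsilon_N(\w) = 1 - e^{-N|w_1-w_2|^2} + \O(N^{-2\kappa^2})$ on $\DROPn^{12}$. Writing $\varepsilon(\w):=e^{-N|w_1-w_2|^2}$, the constraint $|w_1-w_2|^2\ge N^{-1-\gamma}$ on $\DROPn^{12,\gamma}$ gives
\[
1-\varepsilon(\w) \;\geq\; 1-e^{-N^{-\gamma}} \;\gtrsim\; N^{-\gamma},
\]
so the assumption $\gamma<2\kappa^2$ exactly ensures that the $\O(N^{-2\kappa^2})$ remainder from Proposition~\ref{prop:upgrade} is small compared to the leading term $1-\varepsilon$. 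Thus
\[
\log\Upsilon_N(\w) = \log\bigl(1-\varepsilon(\w)\bigr) + \O\!\left(\frac{N^{-2\kappa^2}}{1-\varepsilon(\w)}\right) = \log\bigl(1-\varepsilon(\w)\bigr)+\O(N^{\gamma-2\kappa^2}).
\]
An analogous expansion holds for the first and second $\w$-derivatives. Here one must track how many powers of $(1-\varepsilon)^{-1}\lesssim N^\gamma$ appear when differentiating a quotient: one power each time for $\partial_{w_j}\log\Upsilon_N$, and up to two powers for $\partial_{\overline{w_j}}\partial_{w_j}\log\Upsilon_N$. Taking into account the $N$ extracted from each derivative of $\varepsilon$ and the $N^{|\alpha|+|\beta|-2\kappa^2}$ remainder of Proposition~\ref{prop:upgrade}, this bookkeeping yields the stated error sizes $\O(N^{1+2\gamma-2\kappa^2})$ and $\O(N^{1+3\gamma-2\kappa^2})$ respectively.

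The second step is to identify the main term with the announced correction. Set $y:=\sqrt{N}(y_1-y_2)$, so $N|w_1-w_2|^2=|y|^2$ and $\log(1-\varepsilon(\w))=\log(1-e^{-|y|^2})$. Writing $\tfrac12\nabla^\perp_{y_j}\phi=\Im\bigl((1,\i)^{\top}\partial_{w_j}\phi\bigr)$ for real $\phi$ (a direct consequence of \eqref{eq:grad 2}), a direct computation gives, for $j=1$,
\[
\tfrac12\nabla^\perp_{y_1}\log\bigl(1-e^{-N|w_1-w_2|^2}\bigr)= \frac{N(y_1-y_2)^\perp}{e^{N|y_1-y_2|^2}-1} = \sqrt{N}\,\mathrm{a}\bigl(\sqrt{N}(y_1-y_2)\bigr),
\]
by definition \eqref{corrpot} of $\mathrm{a}$; and for $j=2$ the chain rule produces the symmetric contribution. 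For the Laplacian, using \eqref{eq:Lap}, $\Delta_{y_1}|w_1-w_2|^2=4$ and $|\nabla_{y_1}|w_1-w_2|^2|^2=4|y_1-y_2|^2$, one computes
\[
\Delta_{y_1}\log\bigl(1-e^{-N|w_1-w_2|^2}\bigr)= \frac{4N}{e^{|y|^2}-1}-\frac{4|y|^2 e^{|y|^2}}{(e^{|y|^2}-1)^2}\cdot N = -2N\,\mathrm{v}(y),
\]
again matching \eqref{corrpot}; the case $j=2$ is identical. For $j\notin\{1,2\}$, derivatives of $\varepsilon$ in $w_j$ vanish and only the no-merging estimate \eqref{eq:Anomerg}--\eqref{eq:Vnomerg} remains (with error $\O(N^{1-2\kappa^2})$, absorbed into the stated remainder).

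The main obstacle is purely bookkeeping: to make sure that the interplay between the two small parameters $N^{-2\kappa^2}$ (coming from Proposition~\ref{prop:upgrade}) and $N^{-\gamma}$ (the lower bound on $1-\varepsilon$) produces remainders that remain $o(1)$. This forces the restriction $\gamma<2\kappa^2$; on the complementary set $\DROPn^{12}\setminus\DROPn^{12,\gamma}$ one cannot hope for such precise asymptotics and must fall back on the cruder uniform bounds of Section~\ref{sect:global}. No new analytic input is required beyond Propositions~\ref{prop:exp} and~\ref{prop:upgrade}, together with the elementary identities of Section~\ref{sec:not}.
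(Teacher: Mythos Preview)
Your proof plan is correct and follows essentially the same route as the paper: start from the exact remainder formula~\eqref{eq:pot remainders}, feed in the single-merging expansion of Proposition~\ref{prop:upgrade}, use the lower bound $1-\varepsilon(\w)\gtrsim N^{-\gamma}$ on $\DROPn^{12,\gamma}$ to control $\Upsilon_N^{-1}$, and identify the surviving derivatives of $\log(1-e^{-N|w_1-w_2|^2})$ with $\mathrm{a}$ and $\mathrm{v}$. The only cosmetic difference is that the paper manipulates $\Upsilon_N^{-1}\partial_{w_j}\Upsilon_N$ directly (via the identities \eqref{LB1}--\eqref{LB2}) rather than expanding $\log\Upsilon_N$ first; your informal count of ``one power/two powers'' of $(1-\varepsilon)^{-1}$ is slightly loose, but the stated error exponents $1+2\gamma$ and $1+3\gamma$ are indeed what the careful bookkeeping produces.
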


Both correction terms  are functions of the microscopic variable $y=\sqrt{N}(y_1-y_2)$ for the merging pair $\{y_1,y_2\}$ and decay like $e^{-N|y_1-y_2|^2}$ for large $|y_1-y_2|$.

\begin{proof}
We proceed as in the proof of Proposition~\ref{prop:nomerg}, starting from~\eqref{eq:pot remainders}. By Proposition~\ref{prop:upgrade}, we verify that  for $\w \in \DROPn^{12,\gamma}$ and any $j\in\{1,\dots,n\}$
\[\begin{aligned}
\partial_{w_j}  \log \Upsilon_N(\w)  & = \big( - \partial_{w_j}  \big(e^{-N|w_1-w_2|^2} \big) +\O(N^{1-2\kappa^2}) \big) \Upsilon_N^{-1}(\w) \\
& = \partial_{w_j}\big(  \log (1-e^{-N |w_1-w_2|^2})  \big)  \frac{1-  e^{-N|w_i-w_j|^2}}{\Upsilon_N(\w)} +\O\big(\Upsilon_N^{-1}(\w) N^{1-2\kappa^2} \big). 
\end{aligned}\]
But, since $\gamma<2\kappa^2$ in~\eqref{def:merg set}, we deduce from Proposition~\ref{prop:upgrade} again that 
\begin{equation} \label{LB1}
\Upsilon_N(\w)  = 1-   e^{-N|w_i-w_j|^2} +\O(N^{-2\kappa^2}) \ge  c N^{-\gamma}
\end{equation}
on $\DROPn^{ij,\gamma}$, for a constant $c$ depending only on $(\gamma,\kappa,n)$. This implies that 
\begin{equation} \label{LB2}
\frac{1-  e^{-N|w_i-w_j|^2} }{\Upsilon_N(\w)}
= 1 +\O(N^{\gamma-2\kappa^2}) .  
\end{equation}
Inserting in the above, using that $|\mathrm{a}(y)| \le C/|y|$ for $y\in\R^2$ and that $\gamma \geq 1$, we obtain the first line in~\eqref{corrasymp}. We also used the expression \eqref{corrpot} of $\mathrm{a}$ to identify the correction term.

Next, we turn to the scalar potential. Recall that for  $j\in\{1,\dots,n\}$, 
\[
 \partial_{\overline{w_j}}\partial_{w_j} \log \Upsilon_N(\w)
 = \Upsilon_N^{-1}(\w)  \partial_{\overline{w_j}}\partial_{w_j} \Upsilon_N(\w)
- |  \Upsilon_N^{-1}(\w) \partial_{w_j}  \Upsilon_N(\w)|^2 .
\]
The second term is the square of that we dealt with above. As for the second, using Proposition~\ref{prop:upgrade} and~\eqref{LB1}-\eqref{LB2} again, an analogous computation shows that for $\w \in \DROPn^{12,\gamma}$, 
\[\begin{aligned}
 \partial_{\overline{w_j}}\partial_{w_j} \log \Upsilon_N(\w)
=  \partial_{\overline{w_j}}\partial_{w_j} \big(  \log (1-e^{-N |w_1-w_2|^2})  \big) +  \O(N^{1+3\gamma-2\kappa^2})
%=  N  \sum_{j=2}^n \frac{(1- N|w_1-w_j|^2)e^{N|w_1-w_j|^2}-1}{(e^{N|w_1-w_j|^2}-1)^2} +  \O(N^{1+\gamma-2(\kappa^2-\gamma)})  .  
 \end{aligned}\]
This completes the proof of~\eqref{A1}. 
\end{proof}

\subsection{Global bounds}\label{sect:global}

In order to discard the contribution of the small part of configuration space where more than two quasi-holes are close, it is sufficient to have at our disposal the following uniform bounds: 

\begin{proposition}[\textbf{Global bounds on emerging potentials}] \label{prop:global}\mbox{}\\
Let $\cA_j,\cV_j$ be the emerging vector and scalar potentials obtained in Proposition \ref{prop:exp}.
There exists a constant $C$ depending only on $n$ such that for any  $j=1,\dots,n$
\begin{equation}\label{eq:Aglob}
\sup_{\by \in \R^{2n}} | \mathcal{A}_j (\by)| \leq C N ,
\end{equation}
and 
\begin{equation}\label{eq:Vglob}
\sup_{\by\in \C^n} | \mathcal{V}_j (\by) | \leq C N^{3/2} .
\end{equation}
\end{proposition}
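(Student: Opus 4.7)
The plan is to read off both bounds from the integral formulae~\eqref{eq:vect int}--\eqref{eq:scal int} of Proposition~\ref{prop:int}, after recognising a ``Palm kernel'' structure. A Schur complement computation on the matrices defining $\Upsilon$ in~\eqref{def:Upsilon} shows that, for all $z,\zeta \in \C$ and $\w\in\C^n$,
\begin{equation*}
\frac{\Upsilon(\w,z)}{\Upsilon(\w)} = \frac{\pi}{N}\,\K_\w(z,z), \qquad \frac{\Upsilon(\w,z,\zeta)}{\Upsilon(\w)} = \left(\frac{\pi}{N}\right)^{\!2}\bigl(\K_\w(z,z)\K_\w(\zeta,\zeta) - |\K_\w(z,\zeta)|^2\bigr),
\end{equation*}
where the Palm kernel is
\begin{equation*}
\K_\w(z,\zeta) := \K_{N+n}(z,\zeta) - \sum_{i,j=1}^{n}\K_{N+n}(z,w_i)\,(G^{-1})_{ij}\,\K_{N+n}(w_j,\zeta), \qquad G_{ij}:=\K_{N+n}(w_i,w_j).
\end{equation*}
It is the reproducing kernel of the orthogonal complement of $\mathrm{span}\{\K_{N+n}(\cdot,w_i)\}$ inside $\mathrm{Ran}\,\K_{N+n}$, hence the kernel of a rank-$N$ orthogonal projection. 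This yields $0\le \K_\w(z,z)\le \K_{N+n}(z,z)\le N/\pi$, $\int \K_\w(z,z)\,d^2z = N$, $\int|\K_\w(z,\zeta)|^2\,d^2\zeta=\K_\w(z,z)$, and crucially $\K_\w(\cdot,w_j)\equiv 0$ for each $j$.

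Plugging the first identity into~\eqref{eq:vect int} and taking absolute values gives
\begin{equation*}
|\mathcal{A}_j(\w)| \le \int_\C \frac{\K_\w(z,z)}{|w_j-z|}\,d^2z \le \int_{|w_j-z|\le 1}\!\!\frac{\K_{N+n}(z,z)}{|w_j-z|}\,d^2z + \int_{|w_j-z|>1}\!\!\K_{N+n}(z,z)\,d^2z.
\end{equation*}
Using $\sup\K_{N+n}\le N/\pi$ the first piece is bounded by $(N/\pi)\cdot 2\pi = 2N$, and the second by $\int\K_{N+n}(z,z)\,d^2z = N+n$, so $|\mathcal{A}_j|\le CN$.

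For $\mathcal{V}_j$, inserting both Schur identities into~\eqref{eq:scal int} and using the projection identity $\int|\K_\w(z,\zeta)|^2\,d^2\zeta = \K_\w(z,z)$ to recombine the two terms yields the variance-type formula
\begin{equation*}
\mathcal{V}_j(\w) = \iint_{\C\times\C} \frac{|z-\zeta|^2}{|w_j-z|^2|w_j-\zeta|^2}\,|\K_\w(z,\zeta)|^2\,d^2z\,d^2\zeta.
\end{equation*}
Applying $|z-\zeta|^2\le 2|z-w_j|^2+2|\zeta-w_j|^2$ and the $z\leftrightarrow\zeta$ symmetry bounds this by $4\int_\C \K_\w(z,z)/|w_j-z|^2\,d^2z$. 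The remaining integral I would split at $|z-w_j|=1/N$. On the inner piece I use the Taylor estimate
\begin{equation*}
\K_\w(z,z) = \|\K_\w(\cdot,z)-\K_\w(\cdot,w_j)\|^2 \le \|\K_{N+n}(\cdot,z)-\K_{N+n}(\cdot,w_j)\|^2 \lesssim N^3|z-w_j|^2,
\end{equation*}
valid for $w_j$ bounded and obtained by expanding $\K_{N+n}(z,z)+\K_{N+n}(w_j,w_j)-2\Re\K_{N+n}(z,w_j)$ to second order via Lemma~\ref{lem:ker} and the explicit form of $\K_\infty$; this contributes $O(N)$. On the annulus $1/N<|z-w_j|\le 1$ the crude bound $\K_\w\le N/\pi$ gives $O(N\log N)$, while on $|z-w_j|>1$ the total mass bound $\int\K_\w\le N$ gives $O(N)$. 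Summing, $\mathcal{V}_j\lesssim N\log N \le CN^{3/2}$; the case $|w_j|$ outside the droplet is easier thanks to the Gaussian decay of $\K_{N+n}$ from Lemma~\ref{lem:ker}. The main obstacle is establishing the $N^3$ Taylor bound uniformly in $\w$, whose scaling comes from the second-order phase contribution $N^2\operatorname{Im}((z-w_j)\overline{w_j})^2$ in the expansion of $\K_\infty(z,w_j) = (N/\pi)\,e^{-N|z-w_j|^2/2}\,e^{\i N\operatorname{Im}(z\overline{w_j})}$.
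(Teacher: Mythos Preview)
Your argument is correct and follows a genuinely different route from the paper for the scalar potential. Both proofs begin with the same Schur/Palm identities and handle~\eqref{eq:Aglob} identically. For $\mathcal{V}_j$ the paper keeps the two pieces of~\eqref{eq:scal int} separate: writing $I_1 = 2\int \K_\w(z,z)/|w_j-z|^2\,dz$ and $I_2$ for the double integral, it bounds $|I_2|\le CN^{3/2}$ by dominating the Palm density differences, and controls $I_1$ via a polarization identity (using the reproducing property~\eqref{reproducing4} to replace $1/|w_j-z|^2$ by $1/((\zeta-w_j)(\overline{z-w_j}))$) followed by a split at scale $N^{-1/3}$, ending with $I_1\le CN^{4/3}$. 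You instead recombine $I_1+I_2$ into the nonnegative variance expression
\[
\mathcal{V}_j(\w)=\iint \frac{|z-\zeta|^2}{|w_j-z|^2|w_j-\zeta|^2}\,|\K_\w(z,\zeta)|^2\,d^2z\,d^2\zeta,
\]
bound it by $2I_1$, and then exploit the vanishing $\K_\w(w_j,w_j)=0$ together with a Lipschitz estimate to obtain $\mathcal{V}_j\le CN\log N$. This is both cleaner (positivity of $\mathcal{V}_j$ is manifest) and quantitatively sharper than the paper's $N^{3/2}$.

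On the obstacle you flag: the bound $\K_\w(z,z)\le CN^3|z-w_j|^2$ does hold uniformly in $\w\in\C^n$, but your proposed justification via Lemma~\ref{lem:ker} and the explicit $\K_\infty$ only covers $|w_j|\le 1-\delta_N$, leaving a gap near the droplet edge that ``Gaussian decay'' alone does not close. A direct argument avoids any case split: since $\K_\w(\cdot,z)=(I-P)\K_{N+n}(\cdot,z)$ for an orthogonal projection $P$, your inequality $\K_\w(z,z)\le\|\K_{N+n}(\cdot,z)-\K_{N+n}(\cdot,w_j)\|^2$ is global, and the mean value theorem gives
\[
\|\K_{N+n}(\cdot,z)-\K_{N+n}(\cdot,w_j)\|^2 \le |z-w_j|^2\,\sup_{\xi\in\C}\sum_{k<N+n}|\nabla\varphi_k(\xi)|^2.
\]
A short Poisson-moment computation (using $|\varphi_k(\xi)|^2=(N/\pi)\,e^{-\lambda}\lambda^k/k!$ with $\lambda=N|\xi|^2$) shows the sum equals $2\sum_k(k^2/|\xi|^2-kN+N^2|\xi|^2/2)|\varphi_k(\xi)|^2$; each piece is $\le CN^3$ uniformly in $\xi$, the potentially dangerous $N^2|\xi|^2\K_{N+n}(\xi,\xi)$ term being tamed for $|\xi|>1$ by the exponential decay of $\K_{N+n}(\xi,\xi)$. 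This supplies the uniform $N^3$ constant and completes your proof without recourse to Lemma~\ref{lem:ker}.
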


\begin{proof}
We use the integral expressions given in Proposition \ref{prop:int}. Let us begin by some apriori estimates for $\Upsilon$. We denote
\[
\M(\w) = \big[\K_{N+n}(w_j,w_i) \big]_{n \times n} , \qquad \w\in \C^n .
\]
According to \eqref{def:Upsilon},
$$
 \Upsilon (z,\w) = \left(\frac{\pi}{N}\right)^{n+1} \det \M (z,\w) = \left(\frac{\pi}{N}\right)^{n+1} \det \begin{pmatrix}
\K_{N+n} (z,z)  & \nu^* (z|\w) \\
\nu (z|\w) & \M (\w)
\end{pmatrix} 
$$
where the vector $\nu (z|\w)$ is 
\begin{equation}\label{eq:SchurVec}
\nu(z|\w) : =    \begin{pmatrix} \K_{N+n} (z,w_1) \\ \vdots \\ \K_{N+n}(z,w_n) \end{pmatrix}.
\end{equation}
The Schur complement formula then gives, 
\begin{equation} \label{def:theta}
\Upsilon(z,\w) = \frac{\pi}{N}  \Upsilon(\w)\big( \K_{N+n} (z,z)- \Theta(z|\w) \big)
\quad\text{where}\quad 
 \Theta(z|\w) := \nu^*(z|\w) \M(\w)^{-1} \nu(z|\w)  . 
\end{equation}
Since $\M(\w)$ is a positive definite matrix ($\K$ is a correlation kernel)  for $\w \in \{\C^n :w_1\neq\cdots \neq w_n\}$, we have the bound 
\[
0 \le  \Theta(z|\w)  \le  \K_{N+n} (z,z) \le N/\pi . 
\]
Moreover $z\in\C \mapsto \Theta(z|\w) $ is a density function with (fixed) mass $n$ since
\[
\int_\C  \Theta(z|\w)  \d z  =  \tr\bigg[ \M(\w)^{-1} \int_\C  \nu(z|\w)\nu^*(z|\w) \d z  \bigg] 
=  \tr\big[ \M(\w)^{-1} \M(\w) \big] = n  
\]
where we used the reproducing property (Property~\ref{pro:reprod}). 
In particular, these two estimates for the density $\Theta(z|\w)$ imply that for any $\w\in\C^n$
\begin{equation} \label{bdTheta}
0 \le \int_\C \frac{\Theta(z|\w)}{|z-w_j|} \d z \le n \sqrt{N} . 
\end{equation}
We obtain this bound by splitting the integral over the regions $\big\{|z-w_j|\le 1/\sqrt{N}\big\}$ and 
 $\big\{|z-w_j|\ge 1/\sqrt{N}\big\}$. 

\medskip 
We are now ready to turn to the proofs of the claimed uniform bounds for the emerging potentials.
\medskip

\noindent \textbf{Proof of \eqref{eq:Aglob}.} 
It follows by combining \eqref{eq:vect int} and \eqref{def:theta}  that for any $\w\in\C^n$. 
\begin{equation} \label{bdA}
 | \mathcal{A}_j (\w) | \leq \sqrt{2} \int_\C \frac{ \K_{N+n}(z,z) }{\vert z-w_j \vert} \d z 
\le C N . 
\end{equation}
This bound easily follows by separating between regions with $\vert  z-w_j \vert \leq 1$ and $\vert  z-w_j \vert \geq 1$, using \eqref{Kasymp2} in the former and \eqref{reproducing2} in the latter.

\medskip 

\noindent \textbf{Proof of \eqref{eq:Vglob}.} We estimate separately the two terms in \eqref{eq:scal int}, namely 
\begin{equation} \label{I0bd} 
\begin{aligned}
I_1&= \frac{2N}{\Upsilon(\w)} \int_\C  \frac{\Upsilon(\w,z)}{|w_j-z|^2} \frac{\d z}{\pi} ,  \\
I_2 &=  \frac{2N^2}{\Upsilon(\w)^2} \iint_{\C^2} \frac{\Upsilon(\zeta,z,\w)\Upsilon(\w)-\Upsilon(z,\w) \Upsilon(\zeta,\w) }{ (w_j-z)(\overline{w_j-\zeta})} \frac{\d z}{\pi} \frac{\d \zeta}{\pi} .
\end{aligned}
\end{equation}
Let us begin with $I_2$. Using~\eqref{def:theta} again, one can rewrite 
\[
\Upsilon(\zeta,z,\w)\Upsilon(\w)-\Upsilon(z,\w) \Upsilon(\zeta,\w) 
= \frac{\pi}{N}   \Upsilon(\w) \Upsilon(z,\w) \big( \Theta(\zeta|\w) - \Theta(\zeta| z,\w) \big) .
\]
and then bound as $\Theta\ge 0$
\[
\big| \Upsilon(\zeta,z,\w)\Upsilon(\w)-\Upsilon(z,\w) \Upsilon(\zeta,\w)  \big|
\le  \left(\frac{\pi \Upsilon(\w)}{N} \right)^2 \big( \K_{N+n}(z,z)\Theta(\zeta|\w) + \K_{N+n}(z,z)\Theta(\zeta|z,\w) \big) , 
\]
This implies that 
\[ \begin{aligned}
I_2  
& \le 2 \int_{\C}  \frac{\K_{N+n}(z,z)}{ |w_j-z|}  \bigg( \int_\C \frac{\Theta(\zeta|\w)}{ |w_j-\zeta|} \d \zeta + \int_\C  \frac{\Theta(\zeta|z,\w)}{ |w_j-\zeta|} \d\zeta \bigg)  \d z
\end{aligned}\]
Using the uniform estimate \eqref{bdTheta}, we conclude that 
\begin{equation} \label{I2bd} 
I_2  \le C \sqrt{N} \int_{\C}  \frac{\K_{N+n}(z,z)}{ |w_j-z|} \d z  \le C N^{3/2}
\end{equation}
by \eqref{bdA}. 

\medskip

We now turn to estimate the term $I_1$ from~\eqref{I0bd}. We generalize the definition \eqref{def:theta} to let  
\begin{equation} \label{theta_polarization}
 \Theta(\zeta,z|\w) : =  \nu^*(z|\w) \M(\w)^{-1} \nu(\zeta|\w) 
\end{equation}
where $\nu$ is as in \eqref{eq:SchurVec}. We claim that 
\begin{equation} \label{polarization}
I_1 = 2 \int_\C \frac{ \K_{N+n} (z,z)-\Theta(z|\w)}{|z-w_j|^2} d z=  2  \iint_{\C^2}  \frac{ \K_{N+n}(z,\zeta)  \big( \K_{N+n} (\zeta,z)- \Theta(\zeta,z|\w)\big)}{(\zeta-w_j)(\overline{z-w_j)}} d\zeta dz
\end{equation}
The first equality is by definition of $\Theta(z|\w)$. For the second, since $ \M(\w)^{-1} \nu(w_j|\w) = \mathrm{e}_j $ where $(\mathrm{e}_1, \dots , \mathrm{e}_n)$ denotes the canonical basis,  we verify that 
$ \Theta(w_j,z|\w) = \K_{N+n}(w_j,z)$ for all $z\in\C$. Moreover, we can write 
$$ \Theta(\zeta,z|\w) = \mathrm{f}(\zeta)  e^{-\frac{\B}{2}|\zeta|^2}$$
where $\mathrm{f}(\zeta)$ is an (analytic) polynomial of degree $<N+n$. Hence, using the reproducing property~\eqref{reproducing4},  this implies that for all $z\in\C$
\[
\int_\C  \frac{ \K_{N+n}(z,\zeta)  \big( \K_{N+n} (\zeta,z)- \Theta(\zeta,z|\w)\big)}{\zeta-w_j} d\zeta =  \frac{\K_{N+n} (z,z)-\Theta(z|\w)}{z-w_j} 
\]
where we used that by definition, $\Theta(z,z|\w) = \Theta(z|w)$. 
This proves  \eqref{polarization}. 

We can now proceed to estimate $I_1$ as above. We first split 
\begin{equation} \label{I1bd} 
\begin{aligned}
I_1 & \le 2  \iint_{\C^2}  \frac{|\K_{N+n}(z,\zeta)|^2}{\vert z-w_j\vert \vert\zeta-w_j \vert} d\zeta d z 
+ 2  \iint_{\C^2}  \frac{|\K_{N+n}(z,\zeta)| | \Theta(\zeta,z|\w)|}{\vert z-w_j\vert \vert\zeta-w_j \vert} d\zeta d z  \\
&\quad = 2 I_3 + 2 I_4 . 
\end{aligned}
\end{equation}
Since $\M^{-1}(\w)$ is a positive definite matrix, according to \eqref{theta_polarization}, 
\[
| \Theta(\zeta,z|\w)|^2 \le \Theta(z|\w)\Theta(\zeta|\w) . 
\]
Then, by the Cauchy--Schwarz inequality, the second term on the right-hand side of \eqref{I1bd} satisfies
\begin{equation} \label{I4bd} 
I_4
\le  \sqrt{I_3  \iint_{\C^2}  \frac{\Theta(z|\w)\Theta(\zeta|\w)}{\vert z-w_j\vert \vert\zeta-w_j \vert} d\zeta d z }
\le  n \sqrt{N I_3} , 
\end{equation}
using the uniform bound \eqref{bdTheta}. 

Hence, it just remains to bound $I_3$ in \eqref{I1bd}. 
For a small $\epsilon>0$, we define the set 
$$
S_\epsilon = \big\{(z,\zeta) \in \C^2 : |z-w_j| , |\zeta-w_j| \le \epsilon \big\}.
$$
On the complement of $S_\epsilon$, either $ |z-w_j| \ge \epsilon$ or $|\zeta-w_j| \ge \epsilon$, which by symmetry allows to bound
\[
I_3 \le \iint_{S_\epsilon} \frac{|\K_{N+n}(z,\zeta)|^2}{\vert z-w_j\vert \vert\zeta-w_j \vert} d\zeta d z 
+  2 \epsilon^{-1} \iint_{\C^2}  \frac{|\K_{N+n}(z,\zeta)|^2}{\vert z-w_j \vert}  d z d\zeta
\]
Using the bound \eqref{reproducing3}  and the reproducing property \eqref{reproducing1}, this implies that 
\[ \begin{aligned}
I_3 & \le \iint_{S_\epsilon} \frac{N^2}{\vert z-w_j\vert \vert\zeta-w_j \vert} d\zeta d z 
+  2 \epsilon^{-1} \int_{\C}  \frac{\K_{N+n}(z,z) }{\vert z-w_j \vert}  d z  \\
& \le N^2 \epsilon^2+ C \epsilon^{-1} N
\end{aligned}\]
where we used again the estimate \eqref{bdA} to control the second term.

The previous bound holds for any $\epsilon>0$, we can therefore optimize and choose $\epsilon = N^{-1/3}$ which yields
\begin{equation*}
I_3 \le C N^{4/3} . 
\end{equation*}
Hence, by combining this estimate with \eqref{I4bd} and  \eqref{I1bd}, we have shown that 
\begin{equation} \label{I3bd}
I_1 \le C N^{4/3} . 
\end{equation}
According to \eqref{I0bd} and the previous estimates \eqref{I2bd} and \eqref{I3bd}, we conclude that for any $\w\in \C^n$, 
 \[
0\le  \mathcal{V}_j(\w) \le I_2+ I_1 \le  C N^{3/2} . 
\]
This completes the proof. 
\end{proof}

The following bounds improve on Proposition~\ref{prop:global} and they are sharp. 

\begin{proposition}[\textbf{Bounds on the emerging vector potential inside the droplet}] \label{prop:droplet}\mbox{}\\
Let $\cA_j$ be the emerging vector potentials obtained in Proposition \ref{prop:exp}.
For any small $\epsilon>0$, there exists a constant $C_\epsilon$ such that for any  $j=1,\dots,n$
\begin{equation}\label{eq:Aglob2}
\sup_{\by \in \R^{2n} : |y_j| \le 1-\epsilon} |\mathcal{A}_j (\w) - N y_j^\perp | \leq C_\epsilon \sqrt{N} .
\end{equation}
\end{proposition}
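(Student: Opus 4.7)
The strategy will be to work with the integral representation \eqref{eq:vect int}, combined with the Schur complement identity \eqref{def:theta}, to rewrite
\begin{equation*}
\mathcal{A}_j(\w) = \Im\left({\footnotesize\begin{pmatrix}1\\\i\end{pmatrix}}\int_\C \frac{\K_{N+n}(z,z) - \Theta(z|\w)}{w_j - z}\,\d^2 z\right),
\end{equation*}
which avoids the singularities that plague the formula \eqref{eq:pot remainders} on the diagonal and keeps the dependence on $\w$ only through the explicit density $\Theta(\cdot|\w)$. The leading order $N y_j^\perp$ will emerge by replacing $\K_{N+n}(z,z)$ by its bulk value $(N/\pi)\1_{|z|<1}$: a short Cauchy--Pompeiu computation with $f(z)=\overline{z}$ gives $\int_{|z|<1}(w_j - z)^{-1}\d^2z = \pi\overline{w_j}$ whenever $|w_j|<1$, and then $\Im\bigl({\footnotesize\begin{pmatrix}1\\\i\end{pmatrix}}N\overline{w_j}\bigr) = N y_j^\perp$, matching the dominant term in \eqref{eq:pot remainders}.

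It will then remain to bound the two error integrals
\begin{equation*}
I_\Theta := \int_\C \frac{\Theta(z|\w)}{w_j-z}\,\d^2 z, \qquad I_\K := \int_\C \frac{\K_{N+n}(z,z)-(N/\pi)\1_{|z|<1}}{w_j - z}\,\d^2 z
\end{equation*}
by $C_\epsilon\sqrt{N}$. The estimate on $I_\Theta$ is essentially free: the uniform bound \eqref{bdTheta} gives $|I_\Theta|\le n\sqrt{N}$ directly. For $I_\K$, the plan is to split the $z$-integration according to the fixed parameter $\epsilon$. In the deep bulk $|z|\le 1-\epsilon/2$, the Poisson-tail control underlying Lemma~\ref{lem:ker} (or, equivalently, Remark~\ref{rk:exp_asymp}) yields $|\K_{N+n}(z,z) - N/\pi| \le C_\epsilon N e^{-c_\epsilon N}$, so this portion is $O(N^{-\infty})$. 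On the complementary annulus $|z|\ge 1-\epsilon/2$, the assumption $|y_j|\le 1-\epsilon$ forces $|w_j-z|\ge \epsilon/2$, bounding that part of $I_\K$ by $(2/\epsilon)$ times the $L^1$ mass of $\K_{N+n}(z,z)-(N/\pi)\1_{|z|<1}$ on $\{|z|\ge 1-\epsilon/2\}$.

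The main technical step will be showing that this edge $L^1$ mass is $O(\sqrt{N})$. The global identities $\int_\C \K_{N+n}(z,z)\d^2 z = N+n$ and $\int_{|z|<1}(N/\pi)\d^2 z = N$, combined with the pointwise inequality $\K_{N+n}(z,z)\le N/\pi$ inside the droplet (so that $N/\pi - \K_{N+n}\ge 0$ there), reduce the estimate to the exterior bound $\int_{|z|\ge 1}\K_{N+n}(z,z)\d^2 z = O(\sqrt{N})$. This in turn will follow from a Cram\'er-type tail bound $\K_{N+n}(z,z)\le (N/\pi)\exp(-c N(|z|-1)^2)$ valid for $|z|\ge 1$, obtained by applying the large-deviation estimate of Remark~\ref{rk:exp_asymp} to the Poisson partial-sum representation $\K_{N+n}(z,z) = (N/\pi)e^{-N|z|^2}\sum_{j<N+n}(N|z|^2)^j/j!$ in the regime $N|z|^2\ge N+n$; integrating this Gaussian profile against the area element over the edge layer of width $1/\sqrt{N}$ produces $O(\sqrt{N})$. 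This $\sqrt{N}$ bound is sharp, and accounts for the size of the remainder in \eqref{eq:Aglob2}; the role of $\epsilon$ is precisely to keep the Cauchy kernel $1/|w_j-z|$ bounded on the annulus where this edge mass concentrates.
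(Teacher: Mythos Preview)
Your argument is correct and follows essentially the same route as the paper: both start from the integral representation~\eqref{eq:vect int} together with the Schur identity~\eqref{def:theta}, extract the leading term $Ny_j^\perp$ from the Cauchy formula $\int_\D (w-z)^{-1}\,\d^2z=\pi\overline{w}$, control the $\Theta$ contribution by~\eqref{bdTheta}, and then reduce to showing that the edge $L^1$ mass of $\K_{N+n}(z,z)-(N/\pi)\1_\D$ is $O(\sqrt N)$. The only cosmetic differences are that the paper splits at the shrinking radius $1-1/\sqrt N$ rather than your fixed $1-\epsilon/2$, and that it deduces the exterior mass bound $\int_{|z|\ge 1-\epsilon_N}\K_{N+n}(z,z)\,\d z\le C\sqrt N$ from the bulk estimate~\eqref{A3bd} combined with the total trace~\eqref{reproducing2}, whereas you obtain it from a direct Chernoff--Cram\'er bound on the Poisson lower tail and then feed it back through the trace identity; both give the same $O(\sqrt N)$. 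One small caveat: Remark~\ref{rk:exp_asymp} as written treats the tail $\sum_{j\ge N+n}$ for $x<1$, while for $|z|>1$ you need the dual bound on the partial sum $\sum_{j<N+n}$; this follows from the same large-deviation principle (Chernoff for the Poisson lower tail yields $\K_{N+n}(z,z)\le (N/\pi)e^{-N\varphi(|z|^2)}$ for $|z|^2>1+n/N$) but is not literally the statement of the remark, so you should say so explicitly.
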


\begin{proof}
We start by observing that for any  $w \in \D$, 
\begin{equation} \label{Cauchy}
\overline{w} =  \int_\D \frac{\d z}{\pi(w-z)} 
\end{equation}
Formula \eqref{Cauchy} follows e.g.~from the equilibrium condition for the Ginibre ensemble. By Newton's theorem, on the support of the equilibrium measure (that is, the circular law), there exists a constant $c$ so that
\[
|w|^2 + \int_\D \log|w-z|^{-1} \frac{\d z}{\pi} = c ,
\]
Upon applying $\partial_w$ to this equation, we obtain \eqref{Cauchy}.

Using the integral formula \eqref{eq:vect int} and \eqref{Cauchy}, it follows that for $w_j \in \D$, 
\[ \begin{aligned}
\mathcal{A}_j (\w) - N y_j^\perp & =  N  \Im\bigg(  {\footnotesize	 \begin{pmatrix} 1 \\ \i \end{pmatrix}}  \bigg(\int_\C  \frac{\Upsilon(\w,z)}{\Upsilon(\w)} \frac{\d^2z}{\pi(w_j-z)} - \overline{w_j}  \bigg) \bigg) \\
& = N \Im\bigg(  {\footnotesize	 \begin{pmatrix} 1 \\ \i \end{pmatrix}} \int_\C   \bigg(  \frac{\Upsilon(\w,z)}{\Upsilon(\w)} -\1_\D \bigg) \frac{\d^2z}{\pi(w_j-z)} \bigg) .
\end{aligned}\]
Using again formula \eqref{def:theta} and the bound \eqref{bdTheta}, there is a (numerical) constant $C$ so that
\begin{equation} \label{A2bd}
|\mathcal{A}_j (\w) - N y_j^\perp |  \le 
\sqrt{2}  \int_\C \Big| \K_{N+n}(z,z) - N \frac{\1_\D(z)}{\pi} \Big| \frac{\d^2z}{|w_j-z|}   + C \sqrt{N}
\end{equation}
Using the precise estimate \eqref{exp_asymp_2} and the fact that $\K_\infty(z,z) = N/\pi$, it holds for any $z\in \D$
\begin{equation} \label{A3bd}
\Big| \K_{N+n}(z,z) - N \frac{\1_\D(z)}{\pi} \Big|  \le C \sqrt{N}\frac{e^{-N(1-|z|^2)^2}}{1-|z|^2} 
\end{equation}
where we used that by convexity,
\[
\varphi(x) \ge (1-x)^2 \qquad \text{for all } x\in [0,1] .
\]
In particular, by integrating this estimate, we see that with $\epsilon_N =1/\sqrt{N}$, there is another numerical constant $C$ such that 
\[
\Big| \int_{\{ |z| \le 1- \epsilon_N\}} \hspace{-.3cm}  \K_{N+n}(z,z) \d z  - N(1-\epsilon_N)^2 \Big|  \le C \sqrt{N} 
\]
By \eqref{reproducing2}, this estimate also implies the tail bound
\begin{equation} \label{integralbound}
0 \le \int_{\{ |z| \ge 1- \epsilon_N\}} \hspace{-.3cm}  \K_{N+n}(z,z) \d z \le C \sqrt{N} . 
\end{equation}
For a fixed $\epsilon>0$, combining the two bounds \eqref{A2bd} and \eqref{A3bd}, we obtain for $|w_j| \le 1- \epsilon$, 
\begin{equation} \label{A4bd}
|\mathcal{A}_j (\w) - N y_j^\perp |  \le 
\sqrt{2}  \int_{\{ |z| \ge 1- \epsilon_N\}} \Big| \K_{N+n}(z,z) - N \frac{\1_\D(z)}{\pi} \Big| \frac{\d^2z}{|w_j-z|}   + C_\epsilon \sqrt{N} .
\end{equation}
We also verify that the estimate \eqref{A4bd} still holds if $w_j$ approaches the boundary of the droplet. Namely, for $|w_j| \le 1- \epsilon_N$, the second term is controlled by $C \sqrt{N} \log \epsilon_N^{-1} =\O\big( \sqrt{N} \log N \big)$. 
  
Using the integral bound \eqref{integralbound}, we have a trivial bound for the integral on the right-hand side of 
\eqref{A4bd}, which concludes the proof. 
\end{proof}

\subsection{Proof of Theorem~\ref{thm:transmutation}}\label{sec:final proof}

Theorem~\ref{thm:transmutation} follows from a more refined result taking into account corrections in the single merging set. We state this first for completeness and then deduce Theorem~\ref{thm:transmutation} in a second step.

We assume that $\Phi \in L^\infty(\R^{2n})$ in~\eqref{def:psi} has a (fixed) compact support in $\D^n$ and consider the following H\"older norms, for any $\beta>0$
\[
 \|\Phi\|_{C^\beta} =  \inf\big\{ \eta>0  : |\Phi(\by)-\Phi(\bx)| \le \eta |\by-\bx|^\beta \text{ for all } \bx,\by\in\R^{2n} \big\} . 
\]
By extension, we also denote the uniform norm, 
\[
 \|\Phi\|_{C^0} = \inf\big\{ \eta>0  : |\Phi(\by)| \le \eta \text{ for a.e.}~\by\in\R^{2n}  \big\} .
\]
We use the notation \eqref{corrpot} to define refinements of the emerging scalar and vector potentials. For $j\in\{1,\dots, n\}$ and $\by \in \R^{2n}$, let
\begin{equation} \label{newV}
\mathbf{V}_j(\by) = 2N- \sum_{\ell\neq j}  N \mathrm{v}\left(\sqrt{N}(y_\ell-y_j)\right)
\end{equation}
and
\begin{equation} \label{newA}
\mathbf{A}_j(\by) 
= N y_j^\perp - \sum_{\ell \neq j} \left( \frac{(y_j-y_\ell) ^{\perp}}{|y_j-y_\ell|^2}  - \sqrt{N} \mathrm{a}\left(\sqrt{N}(y_1-y_\ell) \right) \right) . 
\end{equation}
Since $\mathrm{v}$ is a probability density function on $\R^2$, the second term in~\eqref{newV} is an approximation of the distributional potential $\sum_{\ell=1}^n  \delta_{(y_j-y_\ell)} $ for large $N$. On the other hand, recall that 
\begin{equation}\label{eq:corr vec estim}
\bigg|  \frac{y^\perp}{|y|^2} - \mathrm{a}(y)  \bigg| =  \bigg| \frac{y^\perp}{|y|^2} \frac{e^{|y|^2}-1 -|y|^2}{e^{|y|^2}-1} \bigg| \le \frac12,
\end{equation}
which is an efficient bound for small $y$, while for large $y$, $a(y)$ decays super-exponentially.

We denote
\[
\mathcal{L}_j=  -\i \nabla_{y_j}   - q N y_j^\perp +   \mathbf{A}_j(\by) 
\]
to state 

\begin{theorem}[\textbf{Statistics transmutation with corrections}]\label{thm:with corr}\mbox{}\\
Fix $n \ge 2$ and
let $\Psi_{\Phi}$ be as in~\eqref{def:psi}-\eqref{def:Psiqh}, \eqref{eq:choice Phi} with $\B=N \in \N$ and $\Phi \in H^1 \cap L^\infty(\R^{2n})$ with a fixed compact support in~$\D_{1-\epsilon}^{n}$. Assume moreover that 
\begin{equation}\label{eq:stupid}
\sum_{j=1}^n \int_{\R^{2n}}   \left| \mathcal{L}_j(\by)\Phi(\by) \right|^2 \leq C N^{s} 
\end{equation}
for some (possibly very large) power $s>0$.

Then, choosing $\kappa$ and $\gamma$ large enough, for any $\beta\ge 0$, we have for all $j\in\{1,\dots, n\}$,
\begin{multline}\label{eq:with corr 1}
\int_{\R^{2(N+n)}} \left| \left(-\i \nabla_{y_j} - q N y_j^\perp \right) \Psi_\Phi(\by;\bx)\right|^2 d\by d\bx  \\= \int_{\R^{2n}}   \left| \mathcal{L}_j(\by)\Phi(\by) \right|^2 \d\by +  \int_{\R^{2n}} |\Phi(\by)|^2 \mathbf{V}_j(\by) d\by  + \mathrm{Err} (\Phit) 
\end{multline}
with an error satisfying 
\begin{multline}\label{eq:with corr 2}
 \left|  \mathrm{Err} (\Phit)\right| \leq C_{n,\epsilon} \|\Phi\|_{C^\beta}^2 N^{-1/2 - \beta} (\log N)^{2 + \beta} \\ + C_{n,\epsilon}\|\Phi\|_{C^\beta} N^{-1/2 - \beta /2} (\log N)^{1+\beta/2} \sqrt{\int_{\DROPn^R}   \left| \mathcal{L}_j(\by)\Phi(\by) \right|^2}
\end{multline}
where we wrote 
\begin{equation}\label{eq:split config}
\DROPn = \DROPn^{\varnothing} \bigcup \left( \bigcup_{1\leq i\neq j \leq n } \DROPn^{ij,\gamma} \right) \bigcup \DROPn^R    
\end{equation}
as a disjoint union.
\end{theorem}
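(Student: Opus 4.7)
The plan is to start from the exact decomposition of Proposition~\ref{prop:exp} and, after expanding the square, rewrite the kinetic energy as $\int|\mathcal{L}_j\Phi|^2 + \int|\Phi|^2\mathbf{V}_j$ plus three ``error'' integrals: the cross term $2\Re\int (\mathcal{A}_j-\mathbf{A}_j)\cdot\overline{\mathcal{L}_j\Phi}\,\Phi$, the pure square $\int |\mathcal{A}_j-\mathbf{A}_j|^2|\Phi|^2$, and the scalar discrepancy $\int|\Phi|^2(\mathcal{V}_j-\mathbf{V}_j)$. Each of these is controlled by decomposing $\DROPn$ as in~\eqref{eq:split config} and applying the appropriate estimate from Section~\ref{sec:estimates} on each piece. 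On $\DROPn^\varnothing$, Proposition~\ref{prop:nomerg} yields $\mathcal{A}_j-\mathbf{A}_j=O(N^{1-2\kappa^2})$ and $\mathcal{V}_j-\mathbf{V}_j=O(N^{2-2\kappa^2})$, after observing that $\sqrt{N}\mathrm{a}$ and $N\mathrm{v}$ decay super-exponentially at separation $\ge 2\delta_N(\kappa)$. On each $\DROPn^{ij,\gamma}$, the corrections in $\mathbf{A}_j,\mathbf{V}_j$ for the merging pair $(i,j)$ are precisely those produced by Proposition~\ref{prop:merg}, with remainder $O(N^{1+C\gamma-2\kappa^2})$. Choosing $\kappa$ large relative to $\gamma$ and combining with the polynomial bound~\eqref{eq:stupid} renders both contributions $O(N^{-\infty})$.

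The heart of the proof lies in the residual region $\DROPn^R$, where the asymptotic expansions of Propositions~\ref{prop:nomerg}-\ref{prop:merg} no longer apply. Here I rely on four ingredients: (a)~from Proposition~\ref{prop:droplet}, $|\mathcal{A}_j-Ny_j^\perp|\le C_\epsilon\sqrt{N}$ throughout the droplet; (b)~a direct check using~\eqref{eq:corr vec estim} and the small-$y$ cancellation $\mathrm{a}(y)\sim y^\perp/|y|^2$ shows $|\mathbf{A}_j-Ny_j^\perp|\le C\sqrt{N}$ pointwise, hence $|\mathcal{A}_j-\mathbf{A}_j|\lesssim\sqrt{N}$ on $\DROPn^R$; (c)~the global bounds $|\mathcal{V}_j|, |\mathbf{V}_j|\lesssim N^{3/2}$ from Proposition~\ref{prop:global}; (d)~the Lebesgue measure $|\DROPn^R|\lesssim(\log N)^2/N^2$ for $\gamma$ large, since configurations in $\DROPn^R$ involve either two disjoint $\delta_N(\kappa)$-close pairs or a single pair with $|y_i-y_j|^2<N^{-1-\gamma}$. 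Crucially, continuity of $\Phi$ together with the singular phase factors in~\eqref{eq:choice Phi} forces $\Phi$ to vanish on every merging diagonal $\{y_i=y_j\}$; combined with the Hölder assumption this yields $|\Phi(\by)|\le C\|\Phi\|_{C^\beta}\delta_N(\kappa)^\beta$ uniformly on~$\DROPn^R$.

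Assembling these ingredients,
\begin{equation*}
\int_{\DROPn^R}|\Phi|^2|\mathcal{V}_j-\mathbf{V}_j|\lesssim \|\Phi\|_{C^\beta}^2\delta_N(\kappa)^{2\beta}\cdot N^{3/2}\cdot|\DROPn^R|\lesssim \|\Phi\|_{C^\beta}^2 (\log N)^{2+\beta}/N^{1/2+\beta},
\end{equation*}
which produces the quadratic error term of~\eqref{eq:with corr 2}; the pure-square $\int_{\DROPn^R}|\mathcal{A}_j-\mathbf{A}_j|^2|\Phi|^2$ is bounded analogously (with $N$ replacing $N^{3/2}$) and absorbed. The cross term is handled by Cauchy--Schwarz:
\begin{equation*}
\Bigl|\int_{\DROPn^R}(\mathcal{A}_j-\mathbf{A}_j)\cdot\overline{\mathcal{L}_j\Phi}\,\Phi\Bigr|\le \Bigl(\int_{\DROPn^R}|\mathcal{L}_j\Phi|^2\Bigr)^{1/2}\Bigl(N\|\Phi\|_{C^\beta}^2\delta_N(\kappa)^{2\beta}|\DROPn^R|\Bigr)^{1/2},
\end{equation*}
giving precisely the linear-in-$\|\Phi\|_{C^\beta}$ term in~\eqref{eq:with corr 2}. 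The principal obstacle is the sharp $\sqrt{N}$ bound on $|\mathcal{A}_j-\mathbf{A}_j|$ from steps~(a)--(b): the cruder global bound $|\mathcal{A}_j|\lesssim N$ from Proposition~\ref{prop:global} would lose a factor $\sqrt{N}$ in the cross-term estimate and miss the claimed $N^{-1/2-\beta/2}$ scaling. A secondary subtlety is the vanishing of $\Phi$ on the merging diagonals, which is not an explicit hypothesis but follows from the interplay between continuity of $\Phi$ and the gauge phase in~\eqref{eq:choice Phi}.
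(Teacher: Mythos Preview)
Your proposal is correct and follows essentially the same approach as the paper: start from Proposition~\ref{prop:exp}, split according to~\eqref{eq:split config}, apply Propositions~\ref{prop:nomerg} and~\ref{prop:merg} on $\DROPn^{\varnothing}$ and $\DROPn^{ij,\gamma}$, and control $\DROPn^R$ via the global bounds of Propositions~\ref{prop:global}--\ref{prop:droplet} together with the H\"older smallness of $\Phi$. Your direct square-expansion into a cross term and a pure square is equivalent to the paper's Schwarz-with-parameter-$\varepsilon_{\cD}$ device (optimizing over $\varepsilon_{\cD}$ reproduces Cauchy--Schwarz).

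One point deserves correction. Your claim that ``continuity of $\Phi$ together with the singular phase factors in~\eqref{eq:choice Phi} forces $\Phi$ to vanish on every merging diagonal'' is false as stated: a smooth \emph{symmetric} $\Phi$ nonzero on the diagonal (e.g.\ a tensor power $f^{\otimes n}$) satisfies all hypotheses of the theorem, and the gauge phase then merely renders $\Phit$ discontinuous, which nothing forbids. The bound $|\Phi(\by)|\le \|\Phi\|_{C^\beta}\delta_N(\kappa)^\beta$ on $\DROPn^R$ that you need for $\beta>0$ genuinely requires $\Phi$ to vanish on the diagonals. This does follow automatically when $\Phi$ is \emph{antisymmetric} and H\"older (swap $y_i\leftrightarrow y_j$ to get $2|\Phi(\by)|\le \|\Phi\|_{C^\beta}(\sqrt2\,|y_i-y_j|)^\beta$); in the bosonic case one should instead invoke that the intended $\Phit$ is fermionic and regular, so $|\Phi|=|\Phit|$ inherits the vanishing. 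The paper's proof simply asserts the bound~\eqref{Rbound} without comment, so you were right to flag this as a subtlety --- only the mechanism you cite is not the correct one.
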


\begin{proof}
 Without loss of generality we fix $j=1$. We start from the exact calculation from Proposition~\ref{prop:exp}, which we recall:
\begin{align} \nonumber
\int_{\R^{2(N+n)}} \left| \left(-\i \nabla_{y_1} - q N y_1^\perp \right) \Psi_\Phi(\by; \bx)\right|^2 d\bx d\by 
 & =  \int_{\R^{2n}}   \left| \left(-\i \nabla_{y_1} + \mathcal{A}_1 - q N y_1^\perp \right) \Phi(\by) \right|^2   d\by \\
&\quad \label{eq:exact}
 + \int_{\R^{2n}} |\Phi(\by)|^2 \mathcal{V}_1(\by) d\by. 
 \end{align}
In the right-hand side the integration is by assumption restricted to $\DROPn$ as defined in~\eqref{eq:config}. 
The set $\DROPn^R$ with multiple mergings or extremely close single mergings (the exponent $R$ stands for ``remainder of the configuration space'') clearly satisfies
\begin{equation} \label{Rdomain}
\DROPn^R  \subset  \big\{ \exists \{i,j\} : |y_i - y_j|^2  \le N^{-1-\gamma} \big\}
\cup
 \big\{ \exists \{i,j\} \neq \{k,\ell\},  |y_i-y_j| ,|y_k-y_\ell|  \le  2 \delta_N\big\} . 
\end{equation}
where $\{i,j\}$ are (non-ordered) pairs of indices $i,j\in\{1,\dots, n\}$ with $i\neq j$. 
Hence, if $\gamma\ge 1$, there exists a constant $C_n$ so that  the volume of $\DROPn^R$,
\begin{equation} \label{eq:Rvol}
|\DROPn^R| \le C_n \delta_N^4 . 
\end{equation}
In the following we are at liberty to choose two fixed constants $\kappa >0$ and $\gamma >0$. For the contributions from the scalar potential, second term in~\eqref{eq:exact}, we bound 
\begin{equation}\label{eq:split V}
 \left|\int_{\R^{2n}} |\Phi|^2 \left(\mathcal{V}_1 - \mathbf{V}_1 \right) \right| \leq  \sum_{\cD} \int_{\cD} \left|\mathcal{V}_1 - \mathbf{V}_1 \right| |\Phi|^2 
\end{equation}
where the sum runs over all subsets $\cD$ from the decomposition~\eqref{eq:split config}. To deal with the vector potential we have, expanding squares and using Schwarz's inequality
\begin{align*}
 \Big|\int_{\R^{2n}}   \left| \left(-\i \nabla_{y_1} + \mathcal{A}_1 - q \B y_1^\perp \right) \Phi \right|^2 &- \int_{\R^{2n}}   \left| \mathcal{L}_1  \Phi \right|^2 \Big| \leq C \sum_\cD \varepsilon_\cD \int_{\cD} \left| \mathcal{L}_1 \Phi \right|^2 \nonumber\\
 &+ C \sum_\cD \left( 1 + \varepsilon_\cD^{-1}\right) \int_{\cD} \left|\mathcal{A}_1 - \bA_1 \right|^2 |\Phi|^2. 
\end{align*}
The sums are again over subsets $\cD$ from the decomposition~\eqref{eq:split config}. Parameters $\varepsilon_\cD$ have been inserted that we optimize over to obtain 
\begin{align}\label{eq:split A}
\Big|\int_{\R^{2n}}   \left| \left(-\i \nabla_{y_1} + \mathcal{A}_1 - q \B y_1^\perp \right) \Phi \right|^2 &
 \leq  C \sum_\cD E(\cD)^{1/2} \left(E(\cD)^{1/2} + \left(\int_{\cD} \left| \cL_1 \Phi \right|^2\right)^{1/2} \right) 
\end{align}
where we denote 
$$ E(\cD) := \int_{\cD} \left|\mathcal{A}_1 - \bA_1\right|^2 |\Phi|^2.$$
We now discuss separately the contributions to the sums over $\cD$ above. 

\medskip

\noindent\textbf{Contribution from the no-merging set.} It follows from Proposition~\ref{prop:nomerg}, elementary estimates on the correction functions~\eqref{corrpot} and the $L^2$-normalization of $\Phi$ that for any $\kappa>0$,
\begin{align*}
 \int_{\DROPn^\varnothing} \left|\mathcal{V}_1 - \mathbf{V}_1 \right| |\Phi|^2 &\leq C N^{2-2\kappa^2}\\
 E(\DROPn^\varnothing) \leq C N^{2-4\kappa^2}.
\end{align*}

\medskip 

\noindent\textbf{Contribution from single mergings.} We deal with the contribution from a set $\DROPn^{ij,\gamma}$ using Proposition~\ref{prop:merg}. This gives, for any $i\neq j$
$$
\int_{\DROPn^{ij,\gamma}} \left|\mathcal{V}_1 - \mathbf{V}_1 \right| |\Phi|^2 \leq C N^{1+3\gamma - 2 \kappa^2}\\
$$
and
$$
 E \left(\DROPn^{ij,\gamma}\right) \leq C N^{2+4\gamma - 4 \kappa^2} 
$$
using again the $L^2$-normalization of $\Phi$.

Moreover, by~\eqref{eq:stupid},  $\int_{\cD} \left| \cL_1 \Phi \right|^2$ is bounded by $O(N^s)$ in~\eqref{eq:split A}.  Hence the error coming from the sets $\DROPn^\varnothing$ and $\DROPn^{ij,\gamma}$ are all made negligible by choosing $\kappa$ sufficiently large (depending on $\gamma,s$).

\medskip 

\noindent\textbf{Contributions from the remainder.} It follows from~\eqref{Rdomain} and~\eqref{eq:Rvol} that, since $\Phi$ is assumed to be $\beta$-H\"older continuous,  
\begin{equation} \label{Rbound}
\int_{\DROPn^R} |\Phi(\by)|^2  \d\by \le C_n \|\Phi\|_{C^\beta}^2 \delta_N^{4+2\beta} \leq C_n \|\Phi\|_{C^\beta}^2 \left(\frac{\kappa ^2 \log N}{N}\right)^{2+\beta}. 
\end{equation}
On the other hand, using that $v$ is uniformly bounded and~\eqref{eq:corr vec estim} it follows from the definitions~\eqref{corrpot} that
\begin{equation} \label{VAcontrol}
0\le \mathbf{V}_j(\by) \le 2N 
\qquad\text{and}\qquad
|\mathbf{A}_j(\by)  - N y_j^\perp| \le n \sqrt{N}/2.
\end{equation}
Combining the above with~\eqref{eq:Vglob} and the triangle inequality we find 
\[
\int_{\DROPn^R} \left|\mathcal{V}_1 - \mathbf{V}_1 \right| |\Phi|^2 \leq C \|\Phi\|_{C^\beta}^2 N^{-1/2 - \beta}  (\log N) ^{2+\beta}
\]
where the constant depends on $\kappa$ and $n$. Similarly, using~\eqref{eq:Aglob2} and~\eqref{VAcontrol},
\[
E(\DROPn^R) = \int_{\DROPn^R} \left|\mathcal{A}_1 - \mathbf{A}_1 \right|^2 |\Phi|^2 \leq C \|\Phi\|_{C^\beta}^2 N^{-1 - \beta}  (\log N) ^{2+\beta}
\]

\medskip 

\noindent\textbf{Conclusion.} Putting everything together and inserting in~\eqref{eq:split V}-\eqref{eq:split A} yields our final bound. For simplicity we use~\eqref{eq:stupid} and take $\kappa$ so large that all the remainders coming from the no-merging and single-merging regions can be included in the main error. The latter only comes from our treatment of the remaining domain $\DROPn^R$. 
\end{proof}

We now conclude the

\begin{proof}[Proof of Theorem~\ref{thm:transmutation}]
In all this proof we take $\kappa$ large enough but fixed, and we stick for simplicity to the case $\beta = 1$ from Theorem~\ref{thm:with corr}, our starting point. There remains to extract and estimate the contributions of the correction fields~\eqref{corrpot} to the right-hand side of~\eqref{eq:with corr 1}. We have that 
\begin{multline*}
\int_{\R^{2n}}   \left| \mathcal{L}_j(\by)\Phi(\by) \right|^2 \d\by +  \int_{\R^{2n}} |\Phi(\by)|^2 \mathbf{V}_j(\by) d\by =  2N + \int_{\R^{2n}}   \left| \left(-\i \nabla_{y_j}+ \bAt (y_j) \right) \Phi \right|^2 \\ 
+  \int_{\R^{2n}} \left( \sum_{k\neq j}  N \mathrm{v} \left(\sqrt{N} (y_j-y_k)\right) \right) |\Phi(\by)|^2 d\by + \int_{\R^{2n}} \left| \sum_{k\neq j} \sqrt{N} \mathrm{a} \left(\sqrt{N} (y_j-y_k)\right) \right| ^2 |\Phi|^2  \\
+ 2 \Re \int_{\R^{2n}} \overline{\Phi}(\by) \sum_{k\neq j} \sqrt{N} \mathrm{a} \left(\sqrt{N} (y_j-y_k)\right) \cdot \mathcal{L}_j(\by) \Phi(\by)  d\by 
\end{multline*}
By Schwarz's inequality and using the condition~\eqref{eq:stupid pre}, the last term is estimated by 
\[
N^{-\kappa^2}  \sqrt{\int_{\DROPn^\varnothing}   \left| \mathcal{L}_j\Phi \right|^2 }
+  \sqrt{\int_{\DROPn\setminus\DROPn^\varnothing} \left| \sum_{k\neq j} \sqrt{N} \mathrm{a} \left(\sqrt{N} (y_j-y_k)\right) \right| ^2 |\Phi|^2  \int_{\DROPn\setminus\DROPn^\varnothing}   \left| \mathcal{L}_j\Phi \right|^2 } . 
\]
The first term coming from the no-merging set is made negligible by choosing $\kappa$ sufficiently large. Thus we have to bound 
$$ \int_{\R^{2n}} N \mathrm{v} \left(\sqrt{N} (y_j-y_k)\right) |\Phi(\by)|^2 d\by $$
and 
$$ \int_{\R^{2n}} \left| \sum_{k\neq j} \sqrt{N} \mathrm{a} \left(\sqrt{N} (y_j-y_k)\right) \right| ^2  |\Phi(\by)|^2 d\by .$$ 
The main contributions come from the single merging set, we leave it to the reader to bound other contributions using the explicit formulae~\eqref{corrpot}. 

Using that $\mathrm{v}$ is a probability density function and~\eqref{eq:Phi on merging}, we find 
\[\begin{aligned} \int_{|y_j-y_k| \leq \delta_N(\kappa)} N \mathrm{v} \left(\sqrt{N} (y_j-y_k)\right) |\Phi(\by)|^2 d\by  & \leq C_{\Phit}^2  \delta_N (\kappa)^2 \int_{\R^2} N \mathrm{v}(\sqrt{N}y) \frac{\d y}{\pi} \\
& = C_{\Phit}^2 \kappa^2 (\log N) N^{-1}
\end{aligned}\]
and using~ the bound \eqref{eq:corr vec estim}
 \begin{multline*}
 \int_{|y_j-y_k| \leq \delta_N(\kappa)} \left| \sqrt{N} \mathrm{a} \left(\sqrt{N} (y_j-y_k)\right) \right| ^2 |\Phi|^2 \\ \leq C \int_{|y_j-y_k| \leq \delta_N(\kappa)} \frac{1}{|y_j-y_k|^2} |\Phi|^2 \\
 \leq C_{\Phit} ^2 \delta_N (\kappa) ^2 = C_{\Phit} ^2 \kappa^2 (\log N) N^{-1}.
\end{multline*}
These estimates and~\eqref{eq:with corr 2} lead to~\eqref{eq:main estimate}. Note that the main errors, into which we absorbed all the others, are the contributions of the correcting vector field $\mathrm{a}$ in the single-merging set. 
Then~\eqref{eq:main estimate gauge} follows by changing gauge as discussed in the remarks following Theorem~\ref{thm:transmutation}. 
\end{proof}

\section{Applications}\label{sec:applications}

We now give more concrete applications of Theorem~\ref{thm:transmutation}, i.e. we evaluate the error terms in~\eqref{eq:main estimate gauge} for natural choices of $\Phit$ in situations of physical relevance. Before doing that, we precisely define in Section~\ref{sec:delta int} what is meant by the delta/contact interaction in our general Hamiltonian~\eqref{eq:intro hamil}.

As an illustration of applications of Theorem~\ref{thm:transmutation} we consider the case where the tracer particles are bosons, and are hence turned into fermions by the strong bath-tracer coupling. In this case, our main assumption~\eqref{eq:Phi on merging} is naturally imposed by the Pauli principle acquired after statistics transmutation. It is convenient to distinguish two cases. We first discuss the case of tracer particles having the same charge as bath particles ($q=1$). In this case the effective problem for tracers no longer depends on $\B$. We next discuss the case of unequal charges ($q\neq 1$), which requires a bit more care, for the effective problem does depend on $\B$, and it is in fact natural to restrict one-particle states available to the tracers to those of an effective lowest Landau level.     

\subsection{Delta interactions}\label{sec:delta int}
 
Here we define precisely the delta interaction we use in~\eqref{eq:intro hamil}, which is a slight generalization of operators discussed at length e.g. in~\cite{RouSerYng-13a,RouSerYng-13b,LewSei-09,SeiYng-20,RouYng-19}. 
 
The inter-species interaction in~\eqref{eq:intro hamil} acts on pairs of bath/tracer particles, it is thus sufficient to define $\delta(y-x)$ as a self-adjoint operator on 
$$ \gH^{1 \oplus 1} = L^2 (\R^2) \otimes \LLL$$
and check that the associated quadratic form is indeed as announced in~\eqref{eq:intro delta}.  
 
\begin{lemma}[\textbf{Bath-tracer delta interaction}]\label{lem:delta}\mbox{}\\
We identify $\R^{4} \ni (y;x) \leftrightarrow (w;z) \in \C^2$. For a function $\psi\in\LLL$ as in~\eqref{eq:intro LLL} we write 
$$\psi (x) = f(z) e^{-\frac{\B}{2} |z|^2}$$ 
with $f$ analytic .

We define the linear operator $\delta(y-x)$ by its action on tensor products $u\otimes \psi\in\gH^{1 \oplus 1}$
\begin{equation}\label{eq:def delta}
\left(\delta(y-x) u\otimes \psi \right) (y;x) = \frac{\B}{\pi} e^{-\B |w| ^2} e ^{\B z\overline{w}} e ^{-\frac{\B}{2}|z|^2} f(w) u (w).
\end{equation}
The action is then linearly extended to the full vector space $\gH^{1 \oplus 1}$, and defines a bounded self-adjoint operator with associated quadratic form
\begin{equation}\label{eq:delta quad}
\left\langle \Psi_{1 \oplus 1} | \delta (x - y) | \Psi_{1 \oplus 1} \right\rangle_{L^2} = \int_{\R^2} |\Psi_{1 \oplus 1} (x;x)| ^2 dx.   
\end{equation}
\end{lemma}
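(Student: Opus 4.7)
The plan is to recognise \eqref{eq:def delta} as the kernel representation of the LLL-projected delta interaction, and to read off both boundedness/self-adjointness and the quadratic-form identity \eqref{eq:delta quad} from the reproducing-kernel structure of $\LLL$. The starting point is the identity
\begin{equation*}
\tfrac{\B}{\pi} e^{-\B|w|^2} e^{\B z\overline{w}} e^{-\frac{\B}{2}|z|^2} f(w) = \K_\infty(x,y)\,\psi(y),
\end{equation*}
which follows directly from the definition of $\K_\infty$ in \eqref{def:K} together with $\psi(y) = f(w) e^{-\frac{\B}{2}|w|^2}$. Under this identification the prescribed action reads $(\delta(y-x)\,u\otimes\psi)(y;x) = u(y)\,\K_\infty(x,y)\,\psi(y)$, and one is led to the natural extension $(\delta\Psi)(y;x) = \K_\infty(x,y)\,\Psi(y;y)$ to arbitrary $\Psi\in\gH^{1\oplus 1}$. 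The diagonal evaluation $\Psi(y;y)$ makes sense for a.e.\ $y$ because the reproducing inequality $|\phi(w)|^2 \leq \K_\infty(w,w)\,\|\phi\|_{\LLL}^2 = \tfrac{\B}{\pi}\|\phi\|^2$, valid for any $\phi\in\LLL$, embeds $\LLL$ continuously into $L^\infty$.

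Next I would establish boundedness by a Gaussian integration in the bath variable. A direct computation on pure tensors, performing the $z$-integral first, yields
\begin{equation*}
\|\delta(y-x)\,u\otimes\psi\|_{\gH^{1\oplus 1}}^2 = \tfrac{\B}{\pi}\int_{\R^2}|u(y)|^2 |\psi(y)|^2 \,dy \leq \bigl(\tfrac{\B}{\pi}\bigr)^2 \|u\|^2\|\psi\|^2,
\end{equation*}
where the last step uses the reproducing $L^\infty$ bound above. Extending by linearity to finite sums of tensors and by density to all of $\gH^{1\oplus 1}$, we obtain a bounded linear operator of norm at most $\B/\pi$.

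The quadratic-form identity \eqref{eq:delta quad} then follows from the Fock-space reproducing property
\begin{equation*}
\tfrac{\B}{\pi}\int_\C \overline{f(z)}\,e^{-\B|z|^2}\,e^{\B z\overline{w}}\,dz = \overline{f(w)},
\end{equation*}
which is the complex conjugate of the standard Bergman reproducing identity for analytic $f$. Applying it to the matrix element $\langle u\otimes\psi|\delta(y-x)|u\otimes\psi\rangle$ collapses the $z$-integral and leaves $\int_{\R^2}|u(y)|^2|\psi(y)|^2\,dy$, which matches the right-hand side of \eqref{eq:delta quad} for $\Psi=u\otimes\psi$. Polarisation and density propagate this identity to all of $\gH^{1\oplus 1}$. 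Self-adjointness and non-negativity are then immediate: the quadratic form is real and non-negative (the $L^2$-integral of a non-negative function), and the operator is bounded.

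The main obstacle is purely technical: one has to justify the diagonal evaluation $\Psi(y;y)$ and the relevant Fubini-type manipulations for a generic $\Psi\in\gH^{1\oplus 1}$ rather than a tensor product. This is not deep but requires combining Fubini's theorem with the reproducing bound $\LLL\hookrightarrow L^\infty$ and checking measurability of the map $y\mapsto\Psi(y;y)$. Once this is in place every formula above extends from pure tensors to the full space by the bounded-operator and sesquilinear-form extension principles used above.
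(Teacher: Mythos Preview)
Your proof is correct and follows essentially the same route as the paper: both rewrite the defining formula as $(\delta(y-x)\,u\otimes\psi)(y;x)=u(w)\psi(w)\K_\infty(z,w)$, invoke the reproducing $L^\infty$ bound $|\psi(w)|\le \tfrac{\B}{\pi}\|\psi\|$ to control the output, and use the Bergman reproducing identity to collapse the $z$-integral in the sesquilinear form to the diagonal $\int \overline{u\psi}(y)\,v\varphi(y)\,dy$. The only cosmetic difference is that the paper computes the off-diagonal matrix element $\langle u\otimes\psi|\delta|v\otimes\varphi\rangle$ directly and reads off symmetry from $\K_\infty(z,w)=\overline{\K_\infty(w,z)}$, whereas you do the diagonal case and appeal to polarization; both are equally valid.
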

 
\begin{proof}
Using  the reproducing kernel~\eqref{def:K} of the lowest Landau level, we can rewrite \eqref{eq:def delta} as 
\[
\left(\delta(y-x) u\otimes \psi \right) (y;x) =  u(w) \psi(w)  \K_\infty(z,w) . 
\]
It immediately follows from the symmetry of the kernel  $\K_\infty$ that the operator $\delta(y-x)$ acting on $\gH^{1 \oplus 1}$ is symmetric. 
Note that the right-hand side also lies in $\gH^{1 \oplus 1}$  because for $\psi \in \LLL$, 
$$ \sup_{w\in\C} |\psi (w)|  \leq  \frac{\B}{\pi} \norm{\psi}_{L^2} ,$$
as follows by using the reproducing Property~\ref{pro:reprod}, the Cauchy-Schwarz inequality and the uniform bound \eqref{reproducing3} (see~\cite{Carlen-91} for generalizations). 

Then, with $u,v\in L^2$ and $\psi,\varphi \in \LLL$, it holds
\[\begin{aligned}
\left\langle u\otimes \psi | \delta (x - y) | v \otimes \varphi \right\rangle_{L^2} = \frac{\B}{\pi} 
\int_{\C} \overline{u(w)}  v(w) \varphi(w) \bigg( \int_{\C}  \overline{\psi(z)}  \K_\infty(z,w) dz \bigg) d w 
\end{aligned}\]
Using again the reproducing Property~\ref{pro:reprod} to perform the inner integral, we obtain 
$$
\left\langle u\otimes \psi | \delta (x - y) | v \otimes \varphi \right\rangle_{L^2} = \int_{\C} \overline{u (w) \psi(w)}  v(w) \varphi(w) d w = \int_{\R^2} u\otimes \psi (y;y) v\otimes \varphi (y;y) d y.   
$$
Thus~\eqref{eq:delta quad} is proved and our operator is indeed bounded. 
Let us also observe that
\begin{equation*}\label{eq:projector}
 \delta(y-x) ^2 = \frac{\B}{\pi} \delta(y-x).  \qedhere
\end{equation*}
\end{proof}
 
From the above definition it is clear that the interaction term in~\eqref{eq:intro hamil}
\begin{equation}\label{eq:interaction}
g \sum_{k=1} ^N \sum_{j=1} ^n \delta (x_k - y_j)
\end{equation}
is a positive operator. Its' lowest eigenvalue is $0$ (achieved in particular by states of the form~\eqref{def:psi}-\eqref{def:Psiqh}), and is separated by a gap from the rest of the spectrum (the dependence of this gap on $n$ and $N$ is an open problem~\cite{NacWarYou-20a,NacWarYou-20b,Rougerie-xedp19,RouSerYng-13b,WarYou-21}). For large $g>0$ it is relevant to restrict available states to the ground eigenspace of~\eqref{eq:interaction}. While~\eqref{def:psi}-\eqref{def:Psiqh} does not exhaust this space it gives a simple tractable form to work with. 
Specifically, we have assumed the quasi-holes to be simple, corresponding to $p=1$ (and $\mu=1$) in~\eqref{def:Psiqh frac}, which is motivated by an otherwise cost (due to radial confinement) of increased angular momentum of bath and size of droplet.
It also minimizes the constant term in~\eqref{eq:main estimate frac}.
We refer to~\cite{LieRouYng-17,RouYng-19,OlgRou-19} for results about the emergence of quasi-holes as in the ansatz~\eqref{def:psi}, and to~\cite{Rougerie-Elliott,Rougerie-Ency} for reviews thereof.

\subsection{Case of equal charges}

In this subsection we set $q=1$. In view of Theorem~\ref{thm:transmutation}, the natural effective problem for the impurities is governed by the Hamiltonian 
\begin{equation}\label{eq:eff q=1}
\Heff_n = \sum_{j=1}^n - \frac{1}{2m} \Delta_{y_j} + W(y_1,\ldots,y_n) 
\end{equation}
where $W$ is the potential appearing in~\eqref{eq:intro hamil}. For simplicity of exposition we assume that $W$ is a smooth bounded function, but there is room in our method to accomodate some reasonable singularities. 

To avoid boundary effects we confine tracer particles strictly within the droplet of bath particles. Let hence $\alpha < 1$ and 
\begin{equation}\label{eq:disk alpha}
\D_\alpha := \left\{ x \in \R^2, |x| \leq \alpha \right\}. 
\end{equation}
We impose Dirichlet boundary conditions on $\partial \D_\alpha$ to define
\begin{equation}\label{eq:Eeff}
\Eeff (n) := \inf\left\{ \langle U_n | \Heff_n \, U_n \rangle : U_n \in H^1_0 (\D_\alpha ^n), \int_{\D_\alpha ^n} |U_n| ^2 = 1, U_n \ \mbox{antisymmetric}   \right\}
\end{equation}
the appropriate \emph{fermionic} ground-state energy of $\Heff$. We have 

\begin{corollary}[\textbf{Statistics transmutation, equal charges}]\label{cor:equal}\mbox{}\\
Set $q=1$ and let $E(n \oplus N)$ be the lowest eigenvalue of $H_{n \oplus N}^W$ (cf.~\eqref{eq:intro hamil}) acting on $\gH^{n \oplus N}_{\rm sym}$ (defined in~\eqref{eq:intro full space}). Fix $\alpha <1$ and set $\B=N$. We have 
\begin{equation}\label{eq:result q=1}
E(n \oplus N) \leq \frac{n\B}{m} + \Eeff (n) + C_n \frac{\log N}{N} .
\end{equation}
\end{corollary}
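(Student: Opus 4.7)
The plan is to exhibit an explicit trial state of the form $\Psi_\Phi$ from~\eqref{def:psi} and evaluate its energy via Theorem~\ref{thm:transmutation}, using the antisymmetric ground state of the effective problem~\eqref{eq:Eeff} as the building block. Let $U_n \in H_0^1(\D_\alpha^n)$ be an $L^2$-normalized antisymmetric minimizer of $\Heff_n$ realizing $\Eeff(n)$. Since $W$ is smooth bounded and $\D_\alpha^n$ has (piecewise) smooth boundary, classical elliptic regularity for bounded-potential Schr\"odinger operators yields a uniform Lipschitz bound on $U_n$ with constant depending only on $n,W,\alpha$. Set $\Phit := U_n$, extended by zero outside $\D_\alpha^n$. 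Antisymmetry combined with Lipschitz regularity produces the diagonal-vanishing bound~\eqref{eq:Phi on merging} with a constant $C_{\Phit}$ independent of $N$, and~\eqref{eq:stupid pre} is immediate. Since $\alpha<1$ is fixed while $\delta_N \to 0$, for $N$ large we have $\D_\alpha^n \subset \DROPn$, so all hypotheses of Theorem~\ref{thm:transmutation} are met. Define $\Phi$ from $\Phit$ via~\eqref{eq:choice Phi}: because $\Phit$ is antisymmetric, $\Phi$ is symmetric, and hence $\Psi_\Phi \in \gH^{n\oplus N}_{\rm sym}$, the correct bosonic joint space.

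We now evaluate $\cE[\Psi_\Phi]$. The structure of~\eqref{def:Psiqh} ensures, by Lemma~\ref{lem:delta}, that $\Psi_\Phi$ lies in the kernel of every pairwise delta interaction, so the $g$-term in~\eqref{eq:intro hamil} contributes zero. The $W$-term reduces to $\langle \Phit \mid W \mid \Phit\rangle$ by~\eqref{eq:pot ener} together with $|\Phi|=|\Phit|$. For the magnetic kinetic energy, apply~\eqref{eq:main estimate gauge} with $q=1$ (so the effective vector potential on $\Phit$ vanishes identically) for each $j=1,\dots,n$, and combine with the prefactor $1/(2m)$:
\[
\frac{1}{2m}\sum_{j=1}^n \int \left| (-\i \nabla_{y_j} - \B y_j^\perp) \Psi_\Phi\right|^2
= \frac{n\B}{m} + \frac{1}{2m}\int_{\R^{2n}} |\nabla \Phit|^2 + \frac{1}{2m}\sum_j \mathrm{Err}_j(\Phit).
\]
Adding $\langle \Phit \mid W \mid \Phit\rangle$ collapses the middle term plus the potential into $\langle \Phit \mid \Heff_n \mid \Phit\rangle = \Eeff(n)$, so $\cE[\Psi_\Phi] = \tfrac{n\B}{m} + \Eeff(n) + \tfrac{1}{2m}\sum_j \mathrm{Err}_j(\Phit)$.

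It remains to control $\sum_j \mathrm{Err}_j(\Phit)$ via~\eqref{eq:main estimate gauge 2}. The first summand is already of order $C_{\Phit}^2 \log N/N$. For the second, the set $\DROPn \setminus \DROPn^{\varnothing}$ is covered by the near-merger subsets $\{|y_i - y_j|\le 2\delta_N\}\cap \DROPn$ over pairs $(i,j)$, whose total Lebesgue measure is $\lesssim \delta_N^2 \lesssim \log N / N$. Combined with the uniform bound $\|\nabla \Phit\|_\infty \le L$ this gives $\int_{\DROPn \setminus \DROPn^{\varnothing}} |\nabla \Phit|^2 \lesssim \log N / N$, so the second summand is also $O(\log N/N)$. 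Summing over $j$ yields $|\mathrm{Err}(\Phit)| \leq C_n \log N / N$ and hence $E(n\oplus N) \le \cE[\Psi_\Phi] \leq \tfrac{n\B}{m} + \Eeff(n) + C_n \log N/N$, as claimed.

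The main obstacle is the Lipschitz regularity of the effective minimizer, needed both for~\eqref{eq:Phi on merging} and for the uniform $\|\nabla \Phit\|_\infty$ bound which powers the near-merger estimate. For smooth bounded $W$ this is standard, but some care is required at the corners of $\partial \D_\alpha^n$; if needed, one replaces $U_n$ by a smooth antisymmetric almost-minimizer $\Phit_\epsilon$ with $\langle \Heff_n \Phit_\epsilon,\Phit_\epsilon\rangle \leq \Eeff(n) + \epsilon_N$ where $\epsilon_N$ is tuned so that $\epsilon_N + C_{n,\epsilon_N}\log N/N = O(\log N/N)$.
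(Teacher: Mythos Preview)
Your proof is correct and follows essentially the same approach as the paper's: choose $\Phit$ to be the antisymmetric ground state of $\Heff_n$ on $\D_\alpha^n$, invoke elliptic regularity to verify the hypotheses of Theorem~\ref{thm:transmutation}, and bound the error term using the volume estimate $|\DROPn\setminus\DROPn^{\varnothing}|\lesssim \delta_N^2$ together with a uniform bound on $|\nabla\Phit|$. You are in fact more explicit than the paper on several points (the vanishing of the $g$-term via Lemma~\ref{lem:delta}, the derivation of~\eqref{eq:Phi on merging} from antisymmetry plus Lipschitz, and the potential boundary-regularity issue at the corners of $\D_\alpha^n$, with a sensible almost-minimizer workaround).
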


\begin{proof}
We apply Theorem~\ref{thm:transmutation} with $\Phit$ a minimizer for~\eqref{eq:Eeff}. 
Then,  $\Phit$ solves 
\begin{equation}\label{eq:vareq}
\Heff_n \Phit = \Eeff (n) \Phit
\end{equation}
and it follows from standard elliptic regularity theory ~\cite{Evans-98,GilTru-01} that $\Phit$ is a smooth function, with bounds independent on $\B$. In particular, both conditions \eqref{eq:Phi on merging} and  \eqref{eq:stupid pre} hold with a fixed constant $C_{\Phit}$ and $s=0$ since $q=1$.
Hence, the estimate \eqref{eq:result q=1} follows from Theorem~\ref{thm:transmutation} if we show that the last error term in 
\eqref{eq:main estimate gauge 2}, 
\[
\int_{ \DROPn\setminus \DROPn^{\varnothing}} \left| \nabla_{y_j}\Phit(\by) \right|^2 \d\by \le C_n  \frac{\log N}{N} . 
\]
This bound follows directly from the smoothness of  $\Phit$ and the fact that the measure 
$$|\DROPn\setminus \DROPn^{\varnothing}| \le C_n \delta_n^2 \leq C_n \kappa^2 \frac{\log N}{N} $$
for some fixed constant $C_n$; cf.~\eqref{eq:delta}--\eqref{eq:nomerg}. 
\end{proof}

\subsection{Case of unequal charges} 

We outline the analysis of a case where the charge of tracer particles differs from that of the bath particles. The interested reader should be able to turn the following considerations into a complete proof.

In view of Theorem~\ref{thm:transmutation}, the effective Hamiltonian for the impurities is given by
\[
\Heff_n = \sum_{j=1}^n \frac{1}{2m} \left( -\i \nabla_{y_j} - (q-1)\B y_j ^\perp )\right)^2 + W(y_1,\ldots,y_n). 
\]
If $q\neq 1$ is fixed, the strong constant magnetic field $\B \to\infty$  forces all tracer particles into the lowest Landau level associated to the first term, namely 
\begin{equation*}
\LLL_{|q-1|}:= \left\{ \psi \in L^2 (\R^2) : \psi(x) = f(z) e ^{-\frac{|q-1|\B}{2}|z|^2}, f \mbox{ analytic} \right\}. 
\end{equation*}
This is the space of functions such that 
\begin{equation}\label{eq:LLLq}
\left( -\i \nabla_{y} - (q-1)\B y ^\perp )\right)^2 \psi = 2 |q-1| \B \psi,
\end{equation}
separated from the rest of the spectrum by a spectral gap of order $\B$. It is natural to choose $\Phit$ in our trial state to be made entirely of orbitals from this space. Then 
\begin{equation}\label{eq:eff q} 
\langle \Phit | \Heff_n \Phit \rangle = \frac{|q-1| \B n }{m} + \langle \Phit | W(y_1,\ldots,y_n) \Phit \rangle
\end{equation}
and we should seek to minimize the second term, under the constraint that 
$$ \Phit \in   \LLL_{|q-1|}^{\otimes_{\mathrm{asym}} n }.  $$
In the limit $\B \to \infty$ with fixed $n$ (or, in fact, essentially as long as $n\ll \B$), this reduces to a classical mechanics problem~\cite{LieSolYng-94,LieSolYng-94b,LieSolYng-95} where particles are forced into describing small cyclotron orbits around guiding centers whose locations are determined by the external potentials (see e.g.~\cite{ChaFlo-07,ChaFlo-09,ChaFloCan-08,Goerbig-09,GoeLed-06,Jain-07} for general discussions and references in the physics literature). 
For simplicity we could consider a bona-fide two-body problem with regular potentials $V$ and $w$, that is,
\begin{equation}\label{eq:2body}
W(y_1,\dots,y_n)= \sum_{j=1} ^n V (y_j) + \sum_{1\leq j < k \leq n} w(y_j - y_k) .
\end{equation}
We will sketch the proof of the following 

\begin{corollary}[\textbf{Case of unequal charges}]\label{cor:unequal}\mbox{}\\
Set $q\neq 1$ and let $E(n \oplus N)$ be the lowest eigenvalue of $H_{n \oplus N}^W$ (defined in~\eqref{eq:intro hamil}) acting on $\LLL_{|q-1|}^{\otimes_{\mathrm{asym}} n } \otimes \LLL ^{\otimes_{\mathrm{asym}} N }$ with $W\in L^1(\R^{2n})$, symmetric and continuous.

Let $\bR_n = (R_1,\ldots,R_n) \in \D^{n}$ be fixed distinct points. We have 
\begin{equation}\label{eq:result q neq 1}
E(n \oplus N) \leq \frac{\left( |q-1| + 1 \right) \B n }{m} + W \left(R_1,\ldots,R_n\right) + o(1)
\end{equation}
in the limit $\B = N \to \infty$. 
\end{corollary}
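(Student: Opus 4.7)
The plan is to apply Theorem~\ref{thm:transmutation} to a trial state $\Psi_\Phi$ of the form~\eqref{def:psi} whose tracer factor $\Phi$ is a Slater determinant of magnetic coherent states in the shifted lowest Landau level $\LLL_{|q-1|}$ centered at the prescribed points. Setting $\alpha=|q-1|\B$ and letting $\xi_R\in\LLL_{|q-1|}$ denote the normalized magnetic coherent state at $R\in\D$, with $|\xi_R(y)|^2=(\alpha/\pi)e^{-\alpha|y-R|^2}$, I would take
$$\Phi(\w):=\mathcal{N}^{-1}\det\bigl[\xi_{R_i}(w_j)\bigr]_{i,j=1}^n,$$
with $\mathcal{N}=\sqrt{n!}\,(1+O(e^{-c\alpha}))$, the overlaps between distinct $\xi_{R_i}$ being exponentially small because the $R_i$ are bounded apart. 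This $\Phi$ is antisymmetric and belongs to $\LLL_{|q-1|}^{\otimes_{\mathrm{asym}} n}$, so $\Psi_\Phi$ is admissible in the variational problem defining $E(n\oplus N)$. Factoring the Vandermonde out of the antisymmetric analytic part of $\Phi$ (writing $f=\prod_{i<j}(w_i-w_j)\,g$ with $g$ symmetric entire), the gauge change~\eqref{eq:choice Phi} produces
$$\Phit(\w)=\prod_{i<j}|w_i-w_j|\cdot g(\w)\cdot e^{-\alpha\sum_j|w_j|^2/2},$$
which automatically fulfils hypotheses~(i)--(ii) of Theorem~\ref{thm:transmutation} with $C_{\Phit}=O(\B^{n/2})$ and $s$ a fixed integer.

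With this trial state the contact interaction vanishes identically on $\Psi_\Phi$ by the Laughlin-type cancellation in $\Psi_{\mathrm{qh}}$, and the bath has no individual kinetic term, so $\mathcal{E}[\Psi_\Phi]$ reduces to a tracer kinetic piece plus $\int W|\Phit|^2$. For the potential, $|\Phit|^2=|\Phi|^2$ is, up to exponentially small cross terms, a symmetric sum of Gaussian peaks of width $\alpha^{-1/2}$ at the $n!$ permutations of $(R_1,\ldots,R_n)$; continuity and symmetry of $W$ yield $\int W|\Phit|^2\to W(R_1,\ldots,R_n)$. For the kinetic, Theorem~\ref{thm:transmutation} produces the constant $2\B$ per particle (giving $\B n/m$ after dividing by $2m$) plus $\sum_j\int|(-\i\nabla_{y_j}-(q-1)\B y_j^\perp)\Phit|^2$, which by the gauge identity~\eqref{eq:gauge} equals $\sum_j\int|(-\i\nabla_{y_j}-(q-1)\B y_j^\perp-\mathbf{A}^{\mathrm{AB}}_j)\Phi|^2$ with $\mathbf{A}^{\mathrm{AB}}_j=\sum_{\ell\neq j}(y_j-y_\ell)^\perp/|y_j-y_\ell|^2$. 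Because $\Phi\in\LLL_{|q-1|}^{\otimes n}$, the $\mathbf{A}^{\mathrm{AB}}$-free part of the squared covariant derivative acts as $2\alpha$ on $\Phi$, contributing $|q-1|\B n/m$; the remaining Aharonov--Bohm cross and quadratic terms must be shown to be $o(1)$, using holomorphy of $\Phi$ in each $w_j$ (i.e. that the covariant annihilation operator kills $\Phi$) together with the concentration of $|\Phi|^2$ away from the diagonals.

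The main obstacle is quantitative control of the error $\mathrm{Err}$ in~\eqref{eq:main estimate gauge 2}, together with the Aharonov--Bohm cross terms above. Plugging $C_{\Phit}\sim\B^{n/2}$ directly into Theorem~\ref{thm:transmutation} yields a useless bound of order $N^{n-1}\log N$, so one must return to the sharper Theorem~\ref{thm:with corr} and re-derive its remainders for this specific $\Phit$. Concretely, the coarse H\"older-type estimate~\eqref{Rbound} driving the error in the proof of Theorem~\ref{thm:with corr} should be replaced by the exponential bound $\int_{\DROPn^R}(|\Phi|^2+|\mathcal{L}_j\Phi|^2)=O(e^{-c\B})$, which is available here because coherent states at well-separated centers decay as Gaussians on the merging set and so that region carries only exponentially small mass. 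With this substitution every remainder from Theorem~\ref{thm:with corr} becomes $O(e^{-c\B})$, and combined with the cancellation of the Aharonov--Bohm corrections on the peaks of $|\Phi|^2$ this produces the claimed $o(1)$ bound.
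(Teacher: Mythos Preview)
Your overall strategy --- coherent-state Slater determinant, Theorem~\ref{thm:transmutation}, and exploiting exponential smallness on the merging set to tame the error --- matches the paper's. The handling of $\mathrm{Err}(\Phit)$ by going back to Theorem~\ref{thm:with corr} and replacing the coarse H\"older bound by the Gaussian decay of coherent states on $\DROPn\setminus\DROPn^\varnothing$ is exactly what the paper does (phrased there as ``$C_{\Phit}=O(N^{-\infty})$ when restricted to the merging set'').

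There is, however, a genuine gap in the kinetic-energy computation. You put the Slater determinant on $\Phi$, whereas the paper puts it on $\Phit$. This is not cosmetic: with your choice, after the gauge identity you must show
\[
-2\Re\langle D_j\Phi,\mathbf{A}^{\mathrm{AB}}_j\Phi\rangle+\int|\mathbf{A}^{\mathrm{AB}}_j|^2|\Phi|^2=o(1),
\qquad D_j=-i\nabla_{y_j}-(q-1)\B y_j^\perp.
\]
The holomorphy argument you invoke does kill the cross term (all positive complex moments of the Gaussian $|\psi_R|^2$ vanish, so $\int(w_j-R)\,g(w_j)\,|\psi_R(w_j)|^2=0$ for $g$ holomorphic near $R$), but it does nothing for the quadratic term. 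Since $|\mathbf{A}^{\mathrm{AB}}_j|^2=\big|\sum_{\ell\neq j}(y_j-y_\ell)^\perp/|y_j-y_\ell|^2\big|^2$ is bounded below on the support of $|\Phi|^2$ (which concentrates at permutations of $(R_1,\dots,R_n)$), one has
\[
\int|\mathbf{A}^{\mathrm{AB}}_j|^2|\Phi|^2\;\xrightarrow[\B\to\infty]{}\;\Big|\sum_{\ell\neq j}\frac{(R_j-R_\ell)^\perp}{|R_j-R_\ell|^2}\Big|^2,
\]
a strictly positive constant for generic $\bR_n$. Your trial state therefore yields
\(
E(n\oplus N)\le \frac{(|q-1|+1)\B n}{m}+W(\bR_n)+O(1),
\)
not the claimed $o(1)$.

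The fix is the paper's: take $\Phit$ itself to be the Slater determinant of coherent states, i.e.\ set $\Phit(\by)=c(\bR_n)(n!)^{-1/2}\det\bigl[\psi_{R_i}(w_j)\bigr]$. Then $\Phit\in\LLL_{|q-1|}^{\otimes_{\mathrm{asym}}n}$ and the effective kinetic term in~\eqref{eq:main estimate gauge} equals $2|q-1|\B$ \emph{exactly} for each $j$, with no Aharonov--Bohm remnants to control. The rest of your argument (exponential suppression on $\DROPn\setminus\DROPn^\varnothing$, convergence of the potential term by continuity of $W$) then goes through as written.
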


Note that 
\begin{multline}\label{eq:intro hamil bis}
H_{n \oplus N}^W = g \sum_{k=1} ^N \sum_{j=1} ^n \delta (x_k - y_j) + \sum_{j=1}^n \left\{ \frac{1}{2m}\left( -\i \nabla_{y_j} - q \B y_j ^{\perp} \right)^2 \right\} + W(y_1,\ldots,y_n) \\ \geq \frac{n|q| \B}{m} + \inf_{\R^{2n}} W
\end{multline}
as an operator, the lower bound being simply the lowest eigenvalue of the second term plus the infimum of the potential. Choosing the points $R_1,\ldots,R_n$ appropriately in Corollary~\ref{cor:unequal} we have
$$ \left\langle \Psi_{\Phi}, \, H_{n \oplus N}^W \Psi_{\Phi} \right\rangle \leq \frac{\left( |q-1| + 1 \right) \B n }{m} + \inf_{\R^{2n}} W + o(1).$$
It is noteworthy that this upper bound matches the trivial lower bound~\eqref{eq:intro hamil bis} up to $o(1)$ in the special case $q\geq 1$. In this case one can, \emph{independently of $g>0$} and with a very good precision, jointly minimize the three (non-commuting) terms of the Hamiltonian. The expectation of the first term in~\eqref{eq:intro hamil bis} is as small as possible, namely $0$. We recover, modulo small errors, the lower bound $\frac{n|q| \B}{m}$ coming from the second term. The next term in the expansion is the classical energy associated to $W$.

\begin{proof}[Sketch of proof for Corollary~\ref{cor:unequal}]
We recall~\cite{ChaFlo-07,RouYng-19} the definition of ``vortex coherent states'', suitable to describe the quantum motion of a particle on a cyclotron orbit compatible with~\eqref{eq:LLLq} centered at $R\in \R^2$: in complex notation $y,R \leftrightarrow w,Z \in \C$
\begin{equation}\label{eq:vortex state}
\psi_R (w) := \sqrt \frac{|q-1|\B}{\pi} e^{- \frac{|q-1|\B}{2}\left( |w| ^2 + |Z|^2 - 2 \overline{Z} w \right)}.
\end{equation}
Note that $\psi_R (w)  =  \K_\infty(w,Z)  $ according to \eqref{def:K} after suitably adjusting $\B$ and 
\begin{equation}\label{eq:vortex state mod}
\left|\psi_R (w)\right|^2 =   \frac{ |q-1|\B}{\pi} e^{- |q-1| \B |w-Z| ^2}
\end{equation}
so that, as $\B\to\infty$, the above is very much concentrated around $Z$. 

The function entering our trial state's definition is
\begin{equation}\label{eq:Slater q}
\Phit(\by) := \frac{c (\bR_n)}{\sqrt{n!}} \det \left( \psi_{R_j} (w_k) \right)_{1\leq j, k \leq n} .
\end{equation}
 Using the reproducing property \eqref{reproducing1} of $\K_\infty$,  the appropriate  normalization constant is
\[\begin{aligned}
c(\bR_n)^{-2}  & = \frac{1}{n!} \int_{\C^n} \big| \det \big( \psi_{R_j} (w_k) \big)_{1\leq j, k \leq n} \big|^2 d\bw  \\
& =  \det \big( \psi_{R_j} (R_k) \big)_{1\leq j, k \leq n} . 
\end{aligned}\]
Hence, for a fixed $\bR_n \in \R^{2n}$, 
$c(\bR_n)\to1$  exponentially fast as $\B\to\infty$.
Strictly speaking the state $\Phit$ is not supported in $\DROPn$, but for a fixed $\bR_n = (R_1,\ldots,R_n) \in \D^{n}$ all distinct, up to a  $\O(N^{-\infty})$ error (in $L^2\cap L^\infty$) as $N\to\infty$, $\Phit$ is supported in the small set
\begin{equation} \label{qset}
\bigcup_{\sigma\in \S_n} \big\{ \by\in \R^{2n} :  |y_{\sigma(j)}-R_j| \le \delta_N ;\, j=1,\dots, n \big\} 
\end{equation}
where
$\delta_N(\kappa)= \kappa \sqrt{\frac{\log N}{N}}$ as in \eqref{eq:delta} and $\S_n$ denotes the permutation group of $\{1,\dots, n\}$.
We also emphasize that since the points $R_j$ are distinct, 
$\eqref{qset} \subset \DROPn^{\varnothing}$ (the no-merging set).

Hence, up to a truncation, we can apply Theorem~\ref{thm:transmutation}. 
Condition \eqref{eq:stupid pre} follows from the fact that applying \eqref{eq:LLLq} to the states \eqref{eq:vortex state}, we have
\[\begin{aligned}
\int_{\R^{2n}}  \left| \left(-\i \nabla_{y_j}   - (q-1)\B y_j^\perp \right) \Phit \right|^2 
 & =  \int_{\R^{2n}} \overline{ \Phit } \left( -\i \nabla_{y} - (q-1)\B y ^\perp )\right)^2   \Phit  \\
& = 2 |q-1| \B  \int_{\R^{2n}}  |\Phit|^2  = 2|q-1| N ,
\end{aligned}\]
where we used multi-linearity of the Slater determinant \eqref{eq:Slater q}.

Using $\Phit \in \LLL_{|q-1|}^{\otimes_{\mathrm{asym}} n }$ as our trial state in Theorem~\ref{thm:transmutation}, by \eqref{eq:eff q} we obtain 
\begin{equation}\label{eq:result q neq 1 pre}
E(n \oplus N) \leq \frac{\left( |q-1| + 1 \right) \B n }{m} + \int_{\R^{2n}} W(\by) |\Phit (\by)|^2 d\by + \mathrm{Err} (\Phit)
\end{equation} 
where the error is controlled by \eqref{eq:main estimate gauge 2}. 
We will now argue that this error is negligible as $N\to\infty$.

The condition \eqref{eq:Phi on merging} plainly holds, but the trial state $\Phit$ depends on the magnetic field $\B=N$, so it is not a priori clear how the constant $C_{\Phit}$ behaves as $N\to\infty$ with its present definition. 
Note that in the proof of Theorem~\ref{thm:transmutation}, we only rely on the condition \eqref{eq:Phi on merging} to control the approximation errors when at least two particles are merging (i.e.~on the complement of  $\DROPn^{\varnothing}$). 
Since $\Phit$ (and its derivatives by analyticity) is essentially supported on \eqref{qset}, we have $C_{\Phit} = \O(N^{-\infty})$  as $N\to\infty$ when we restrict to the merging set (since we can make the parameter $\kappa$ arbitrary large). 
Similarly, the second term in  \eqref{eq:main estimate gauge 2}, 
\[
\int_{ \DROPn\setminus \DROPn^{\varnothing}} \left|\left(-\i \nabla_{y_j}   - (q-1)\B y_j^\perp \right)  \Phit \right|^2 = \O(N^{-\infty}) . 
\]
This argument shows that for this particular trial state
$ \mathrm{Err} (\Phit) = \O(N^{-\infty})$ as $N\to\infty$. 

To conclude the proof, it remains to compute the limit of the second term on the right-hand side of \eqref{eq:result q neq 1 pre}. 
By \eqref{eq:Slater q}, we can write
\begin{equation} \label{Rkernelq}
|\Phit(\by)|^2 = \frac{c (\bR_n)^2}{n!} \det \left( \sum_{k=1}^n  \psi_{R_i} (y_k) \overline{\psi_{R_j} (y_k)} \right)_{1\leq i,j \leq n}
\end{equation}
Using again that the orbital $\psi_R$ is exponentially concentrated on a neighborhood of $R$ and that the points $R_j$ are distinct, the matrix on the right-hand side of ~\eqref{Rkernelq} is diagonal up to an exponentially small correction (in $L^2\cap L^\infty$) as $N\to\infty$. Thus
\[ \begin{aligned}
|\Phit(\by)|^2 & = \frac{c (\bR_n)^2}{n!}  \prod_{j=1}^n \left( \sum_{k=1}^n  |\psi_{R_j} (y_k) |^2 \right) + \O(N^{-\infty}) \\
& = \frac{c (\bR_n)^2}{n!}   \sum_{k_1, \cdots ,k_n = 1}^n \prod_{j=1}^n |\psi_{R_j} (y_{k_j}) |^2 + \O(N^{-\infty})
\end{aligned}\]
and if the potential $W\in L^1(\R^{2n})$,  then  as $N\to\infty$
\[
\int_{\R^{2n}} W(\by) |\Phit (\by)|^2 d\by 
=  \frac{c (\bR_n)^2}{n!}   \sum_{k_1, \cdots ,k_n = 1}^n \int_{\R^{2n}} W(\by)  \prod_{j=1}^n |\psi_{R_j} (y_{k_j}) |^2
 +  \O(N^{-\infty})  . 
\]

By  \eqref{eq:vortex state mod}, $|\psi_R (w)|^2 \to \delta(w-R)$ in $C^*$ as $\B= N \to\infty$, so that if $W \in C(\R^{2n})$ and $R_j$ are distinct points, we conclude that
\[
\int_{\R^{2n}} W(\by) |\Phit (\by)|^2 d\by \to \frac{1}{n!}   \sum_{\sigma\in\S_n} W(R_{\sigma(1)}, \dots, R_{\sigma(n)}) . 
\]
This completes the proof. 
\end{proof}

 \newpage
 
\appendix

\section{Proof of Theorem~\ref{thm:exp}} \label{Appendix}

We give a slight variant of the proof from~\cite{Lambert-20}. Introduce a quantum wave-function for $N$ fermions
$$ \Psi_N (\z) = c_N \prod_{1\leq k<j \leq N} (z_k-z_j) e ^{-\frac{\B}{2}\sum_{j=1}^N |z_j|^2}$$
with $c_N= \cqh(\emptyset)$ a $L^2$ normalization constant (indeed $\Psi_N(\z) = \Psi_{\rm qh}(\emptyset, \z)$ in the absence of quasi-holes). With the one-body orbitals~\eqref{eq:orbitals} we have 
$$ \Psi_N (\z) = \frac{1}{\sqrt{N!}} \underset{N\times N}{\det} \left( \varphi_k (z_j) \right)$$
the Slater determinant made of the $N$ first orbitals ($k\in\{0,\dots,N-1\}$). It follows that 
$$ c_N^2 = \frac{1}{\pi^N \prod_{k=1}^{N} k! }\B^{N(N+1)/2}.$$
For a general fermionic (antisymetric in its arguments) $N$-body wave-function $\Phi_N$ on $L^2(\R^{dN})$ we define (the integral kernel of) its $n$-body density matrix, for $n\in\N$,
\begin{multline}\label{eq:red mat}
\gamma_{\Phi_N} ^{(n)} (x_1,\ldots,x_n;y_1,\ldots,y_n) \\ 
= { N \choose n} \int_{X_{N-n}\in \R^{d(N-n)}} \overline{\Phi_N (y_1,\ldots,y_n,X_{N-n})} \Phi_N (x_1,\ldots,x_n,X_{N-n}) dX_{N-n} 
\end{multline}
and its $n$-body density 
$$
\rho_{\Phi_N} ^{(n)} (x_1,\ldots,x_n) = \gamma_{\Phi_N} ^{(n)} (x_1,\ldots,x_n;x_1,\ldots,x_n).
$$
In probabilistic terms, these are the correlations of the (fermionic) point process. 
With the above conventions, rewriting  \eqref{def:Psiqh} in terms of $\Psi_{N+n}(\w, \z)$, it holds 
$$
\cqh(\w)^{-2} =  \int_{\C^N} \left| \Psi_{\rm qh}(\w, \z)\right|^2\d\z =  {N+n \choose n}^{-1} c_{N+n}^{-2} \,\rho_{\Psi_{N+n}} ^{(n)} (w_1,\ldots,w_n)  \left| \triangle(\w)\right|^{-2} \prod_{j=1}^n e^{\B|w_j|^2} .
$$
The claimed result is then the identity 
$$ \rho_{\Psi_{N+n}} ^{(n)} (w_1,\ldots,w_n) = \frac{1}{n!}\det_{n\times n} \left[\K_{N+n}(w_i,w_j) \right],$$
which follows from Wick's theorem for fermionic quasi-free states. 
Note that 
$$
\frac{1}{n!}   {N+n \choose n}^{-1} c_{N+n}^{-2} =  N! \pi^{N+n} {\textstyle  \prod_{k=1}^{N+n-1} k! } \, \B^{-(N+n)(N+n+1)/2} . 
$$
For the convenience of the reader we recall and prove it in the case of Slater determinants, relevant for the above. 

\begin{lemma}[\textbf{Density matrices of a Slater determinant}]\mbox{}\\
Let $u_1,\ldots,u_N$ be orthonormal functions in $L^2 (\R^{d})$. Let 
$$ \Phi_N (x_1,\ldots,x_N) = \frac{1}{\sqrt{N!}} \underset{N\times N}{\det} \left( u_k (x_j) \right)$$
be the associated Slater determinant. Its reduced density matrices~\eqref{eq:red mat} satisfy
\begin{equation}\label{eq:red mat Sla}
\gamma_{\Phi_N} ^{(n)} (x_1,\ldots,x_n;y_1,\ldots,y_n) = \frac{1}{n!} \det_{n\times n} \left[\gamma^{(1)}_{\Phi_N} (x_i,y_j) \right]
\end{equation}
and 
\begin{equation}\label{eq:red mat Sla 1} 
\gamma^{(1)}_{\Phi_N} (x,y) = \sum_{j=1} ^N \overline{u_j (y)} u_j (x).
\end{equation}
\end{lemma}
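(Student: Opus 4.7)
The plan is to convert the pointwise product $\Phi_N(\mathbf{x})\overline{\Phi_N(\mathbf{y})}$ into a single $N\times N$ determinant of the projection kernel $K(x,y):=\sum_{k=1}^N u_k(x)\overline{u_k(y)}$, and then integrate out the last $N-n$ coordinates using the reproducing identities satisfied by $K$. This reduces to a standard reduction formula for projection-kernel determinantal processes; the $n=1$ statement \eqref{eq:red mat Sla 1} is then the specialization $n=1$ of \eqref{eq:red mat Sla}, so I would prove the general $n$ formula directly.

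The first step is the product-of-determinants identity. Writing $\Phi_N(\mathbf{x})=\frac{1}{\sqrt{N!}}\det[u_j(x_i)]_{1\le i,j\le N}$ and similarly for $\overline{\Phi_N(\mathbf{y})}$, one immediately obtains
\[
\Phi_N(x_1,\ldots,x_N)\,\overline{\Phi_N(y_1,\ldots,y_N)} \;=\; \frac{1}{N!}\,\det\bigl[K(x_i,y_j)\bigr]_{1\le i,j\le N}.
\]
This identity is the off-diagonal analogue of the usual determinantal formula for the full joint density of a Slater state, and it is the structural input that makes the whole calculation tractable.

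The second step is to specialize $y_i=x_i=:X_{i-n}$ for $i>n$ and integrate over $X\in\R^{d(N-n)}$. The orthonormality of the $u_k$'s is equivalent to $K$ being the kernel of the $L^2$-orthogonal projector onto $\mathrm{span}\{u_1,\ldots,u_N\}$, so that
\[
\int_{\R^d} K(x,z)K(z,y)\,dz = K(x,y), \qquad \int_{\R^d} K(z,z)\,dz = N.
\]
From these one derives the reduction formula
\[
\int_{\R^{d(N-n)}}\det\bigl[K(\xi_i,\eta_j)\bigr]_{1\le i,j\le N}\Bigl|_{\xi_i=x_i,\eta_i=y_i\,(i\le n);\,\xi_i=\eta_i=X_{i-n}\,(i>n)}\, dX \;=\; (N-n)!\,\det\bigl[K(x_i,y_j)\bigr]_{1\le i,j\le n}.
\]
Multiplying by $\binom{N}{n}/N!=\frac{1}{n!(N-n)!}\cdot\frac{1}{1}$ inherited from \eqref{eq:red mat} and the $1/N!$ prefactor, the $(N-n)!$ cancels and one obtains exactly $\frac{1}{n!}\det[K(x_i,y_j)]_{n\times n}$.

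The main obstacle is the reduction formula itself. The cleanest route is induction on $N-n$: setting $F_m:=\int\det[K(\xi_i,\eta_j)]_{(m+1)\times(m+1)}|_{\xi_{m+1}=\eta_{m+1}=z}\,dz$, one expands the determinant by cofactors along the last row and column, then applies the trace identity to the $(m+1,m+1)$-cofactor (producing a factor $N$) and the reproducing identity to the cross terms (each contributing a copy of the $m\times m$ minor with a sign). The $m$ such cancellations combine with $N$ to yield $F_m=(N-m)\det[K(\xi_i,\eta_j)]_{m\times m}$, and iterating gives the claimed $(N-n)!$. This is where the projection (as opposed to merely positive) character of $K$ is decisive; everything else is bookkeeping.
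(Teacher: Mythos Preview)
Your argument is correct, but it follows a different path from the paper. You use the product identity $\Phi_N(\mathbf{x})\overline{\Phi_N(\mathbf{y})}=\frac{1}{N!}\det[K(x_i,y_j)]_{N\times N}$ (via $\det A\cdot\det B^*=\det(AB^*)$) and then integrate variables one at a time by the standard determinantal-process reduction $\int\det_{(m+1)\times(m+1)}[K]\,dz=(N-m)\det_{m\times m}[K]$, relying only on $\int K(x,z)K(z,y)\,dz=K(x,y)$ and $\int K(z,z)\,dz=N$. The paper instead works on the operator side: it shows by a one-step integration that $\gamma^{(N-1)}$ is a sum of rank-one projections onto $(N-1)$-particle Slater determinants, and by induction obtains $\gamma^{(n)}=\frac{1}{n!}\sum_{i_1\neq\cdots\neq i_n}|u_{i_1}\wedge\cdots\wedge u_{i_n}\rangle\langle u_{i_1}\wedge\cdots\wedge u_{i_n}|$; the determinantal kernel formula then emerges from the Cauchy--Binet identity applied to $\det_{n\times n}[\gamma^{(1)}(x_i,y_j)]=\det(AB)$ with $A=(u_k(x_j))_{n\times N}$, $B=(\overline{u_k(y_j)})_{N\times n}$. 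Your route is the usual one in the determinantal point-process literature and is arguably more direct here, since it avoids the combinatorial bookkeeping of sub-Slater determinants and the appeal to Cauchy--Binet. The paper's approach, on the other hand, makes the quasi-free structure of the state (reduced density matrices as mixtures of lower Slater projections) transparent before passing to kernels.
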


\begin{proof}
We denote, with $\S_n$ the permutation group of $\{1,\dots, n\}$, 
$$ u_1 \wedge \ldots \wedge u_n := \frac{1}{\sqrt{n!}} \underset{n\times n}{\det} \left( u_k (x_j) \right) = \frac{1}{\sqrt{n!}} \sum_{\sigma \in \S_n} \mathrm{sgn} (\sigma) \prod_{j=1}^n u_{j,\sigma(j)}$$
the Slater determinant constructed out of $n$ orthonormal functions. For a $L^2$ function $\Phi_n$ of $L^2_{\rm asym}(\R^{dn})$ we identify the orthogonal projector $|\Phi_n\rangle \langle \Phi_n |$ with its integral kernel $\overline{\Phi_n (Y_n)} \Phi_n (X_n)$. Likewise, we identify reduced density matrices~\eqref{eq:red mat} with the corresponding trace-class operators.

Then, a straightforward integration gives 
$$ \gamma_{\Phi_N} ^{(N-1)} = \sum_{\ell = 1} ^N \left| u_1 \wedge\ldots \wedge u_{\ell-1} \wedge u_{\ell+1} \wedge \ldots \wedge u_N \right\rangle \left\langle u_1 \wedge \ldots \wedge u_{\ell-1} \wedge u_{\ell+1} \wedge \ldots \wedge u_N \right|$$
using the orthogonality of $u_1,\ldots,u_N$. By induction we infer that 
$$ \gamma_{\Phi_N} ^{(n)} = \frac{1}{n!} \sum_{1\leq i_1 \neq \ldots \neq i_n \leq N} \left| u_{i_1} \wedge \ldots \wedge u_{i_n} \right\rangle \left\langle u_{i_1} \wedge \ldots \wedge u_{i_n}  \right|$$
and in particular this proves~\eqref{eq:red mat Sla 1}.

On the other hand, we can write the right-hand side of~\eqref{eq:red mat Sla} as 
$$ \frac{1}{n!} \det_{n\times n} \left[\sum_{j=1} ^N \overline{u_j (y)} u_j (x) \right] = \frac{1}{n!} \det_{n\times n} (A B)$$
with the matrices (respectively $n\times N$ and $N\times n$)
$$ A = \left( u_k (x_j)\right)_{1\leq j \leq n, 1\leq k \leq N}, \quad B = \left( \overline{u_k (y_j)}\right)_{1\leq j \leq N, 1\leq k \leq n}.$$
Comparing with the above, we see that~\eqref{eq:red mat Sla} is a consequence of the Cauchy-Binet formula. 
\end{proof}

\bibliographystyle{acm}
% \bibliography{/home/rougerie/Travail/Documentation/Bibtex/biblio-NR_Aug22.bib}

\end{document}